\tikzstyle{vertex}=[circle,draw=black,fill=black,minimum size=3.8pt,inner sep=0pt]
\tikzstyle{dot}=[circle,fill=black,minimum size=0.5pt,inner sep=0pt]
\definecolor{greenedgecolor}{rgb}{0.0,0.45,0.0}
\tikzstyle{greenedge} = [draw,thick,greenedgecolor]
\definecolor{blueedgecolor}{rgb}{0,0.15,0.85}
\tikzstyle{blueedge} = [draw,thick,blueedgecolor] 
\definecolor{graybluecolor}{rgb}{0.545, 0.6353, 0.7255}
\definecolor{yellowishcolor}{rgb}{1, 0.9, 0.3}
\definecolor{deepbluecolor}{rgb}{0, 0.2, 0.4}
\definecolor{darkgraybluecolor}{rgb}{0, 0.2, 0.4}
\newtheorem{theorem}{Theorem}[section]
\newtheorem{lemma}[theorem]{Lemma}
\newtheorem{claim}[theorem]{Claim}
\newtheorem{hypothesis}[theorem]{Hypothesis}
\newtheorem{definition}[theorem]{Definition}
\renewcommand{\@cite}[1]{[#1]}
\newtheorem*{rep@theorem}{\rep@title}
\newcommand{\newreptheorem}[2]{%
	\newenvironment{rep#1}[1]{%
		\def\rep@title{#2 \ref{##1}}%
		\begin{rep@theorem}}%
		{\end{rep@theorem}}}
\newcommand{\problem}[3]{
\begin{framed} 
  \noindent
  #1 \\
  {\bf Input:}#2\\
  {\bf Question:}#3
\end{framed} 
}
\newcommand{\gapproblem}[4]{
\begin{framed} 
  \noindent
  #1 \\
  {\bf Input:}#2\\
  {\bf Case 1:} #3\\
  {\bf Case 2:} #4
\end{framed} 
}
\tikzstyle{vertG}=[circle,draw=black,fill=white,minimum size=15pt,inner sep=0pt]
\tikzstyle{vertH}=[circle,draw=black,fill=black,minimum size=11pt,inner sep=0pt]
\tikzstyle{dot}=[circle,draw=black,fill=black,minimum size=2pt,inner sep=0pt]
\tikzstyle{oedge} = [draw,thick,->]
\tikzstyle{edge} = [draw,thick]
\tikzstyle{selected edge} = [draw,line width=5pt,-,red!50]
\tikzstyle{weight} = [font=\small]	
\newcommand\todo[1]{(\textit{\underline{TODO:} #1})}
\newcommand\expl[1]{(\textit{\underline{EXPL:} #1})}
\newcommand\bigtodo[1]{
\begin{framed}
(\textit{\underline{TODO:} #1})
\end{framed}
}
\newcommand\todo[1]{}
\newcommand\expl[1]{}
\newcommand\bigtodo[1]{}
\def\N{\mathbb{N}}
\def\to{\rightarrow}
\def\ss{\subseteq}
\def\vertices[#1]{V({#1})}
\def\edges[#1]{E({#1})}
\def\minb{{\sc Min Bisection}\xspace} 
\def\ola{{\sc Optimum Linear Arrangement}\xspace} 
\def\olad[#1]{\text{\sc Optimum Linear Arrangement$_{\leq}$($#1$)}\xspace} 
\def\fas{\text{{\sc Feedback} {\sc Arc} {\sc Set}}\xspace}
\def\fvs{\text{\sc Feedback Vertex Set}\xspace}
\def\fast{\text{\sc Feedback} {\sc Arc} {\sc Set} {\sc in} {\sc Tournaments}\xspace}
\def\olashort{\text{\sc OLA}\xspace} 
\def\oladshort[#1]{\text{\sc OLA$_{\leq}$($#1$)}\xspace} 
\def\fasshort{\text{\sc FAS}\xspace}
\def\fvsshort{\text{\sc FVS}\xspace}
\def\fvss[#1]{\text{\sc FVS-BAL($#1$)}\xspace}
\def\fastshort{\text{\sc FAST}\xspace}
\def\justsat{\text{\sc Sat}\xspace}
\def\xsat{\text{\sc X-Sat}\xspace}
\def\sat[#1]{\text{\sc #1-Sat}\xspace}
\def\ssat[#1]#2{\text{\sc (E$3_{#1}$,E$2_{#2}$)-SAT}\xspace}
\def\esat[#1]{\text{\sc E#1-Sat}\xspace}
\def\enaesat[#1]{\text{\sc E#1-NAE-Sat}\xspace}
\def\andsat[#1]{\text{\sc #1-And-Sat}\xspace}
\def\maxcut{\text{\sc Max Cut}\xspace}
\def\chaincompletion{\text{\sc Chain} {\sc Completion}\xspace}
\def\minimumfillin{\text{\sc Minimum} {\sc Fill-In}\xspace}
\def\intervalcompletion{\text{\sc Interval} {\sc Completion}\xspace}
\def\properintervalcompletion{\text{\sc Proper} {\sc Interval} {\sc Completion}\xspace}
\def\triviallyperfectcompletion{\text{\sc Trivially} {\sc Perfect} {\sc Completion}\xspace}
\def\thresholdcompletion{\text{\sc Threshold} {\sc Completion}\xspace}
\def\gapminb{{\sc Gap Min Bisection}} 
\def\gapminbd[#1]{{\sc Gap Min Bisection($#1$)}} 
\def\gapmaxcut{\text{\sc Gap Max Cut}} 
\def\gapmmaxcut{\text{\sc Gap Multigraph Max Cut}} 
\def\gapesat[#1]{\text{\sc Gap E#1-SAT}} 
\def\justgapfas{\text{\sc Gap FAS}} 
\def\gapfas[#1]{\text{\sc Gap FAS($#1$)}} 
\def\gapfvs[#1]{\text{\sc Gap FVS-BAL($#1$)}} 
\def\gapssat[#1]#2{\text{\sc Gap (E$3_{#1}$,E$2_{#2}$)-SAT}}
\def\gapenaesat[#1]{\text{\sc Gap E#1-NAE-SAT}}
\def\gapand[#1]{\text{\sc Gap #1-AND-SAT}}
\newcommand\half[1]{\frac{#1}{2}}
\def\bounda{\alpha}%\overline{\alpha}} 
\def\boundb{\beta}%\overline{\beta}} 
\def\dH{d_H}
\def\dG{d_G}
\def\dHG{\Delta_{H, G}}
\def\dHi{d_{H_i}}
\def\dHip{d_{H_{i+1}}}
\def\phin{\lceil \varphi n \rceil}
\def\order{\sigma}
\def\aorder{\pi}
\newcommand\term[1]{{\em #1}}
\newcommand{\Oh}{{\mathcal{O}}}
\newcommand{\linleq}{\leq_{\textrm{P}}^{\textrm{lin}}}
\title{Lower bounds for the parameterized complexity of Minimum Fill-In and other completion problems}
\author{%
Ivan Bliznets\thanks{St.~Petersburg Department of Steklov Institute of Mathematics. E-mail: \texttt{iabliznets@gmail.com}. Partially supported by the Government of the Russian Federation (grant 14.Z50.31.0030), by the Grant of the President of the Russian Federation (MK-6550.2015.1) as well as by Warsaw Center of Mathematics and Computer Science.} 
\and
Marek Cygan\thanks{Institute of Informatics, University of Warsaw, Poland. E-mail: \texttt{cygan@mimuw.edu.pl}. Partially supported by the Polish National Science Centre grant DEC-2012/05/D/ST6/03214.} 
\and
Pawe\l{} Komosa\thanks{Institute of Informatics, University of Warsaw, Poland. E-mail: \texttt{kompaw01@gmail.com}.}
\and
Luk\'a\v s Mach\thanks{DIMAP and Department of Computer Science, University of Warwick, United Kingdom. E-mail: \texttt{lukas.mach@gmail.com}. Received funding by the European Research Council under the European Union's Seventh Framework Programme (FP7/2007-2013)/ERC grant agreement no.~259385.}
\and
Micha\l{} Pilipczuk\thanks{Institute of Informatics, University of Warsaw, Poland. E-mail: \texttt{mp248287@mimuw.edu.pl}. Supported by the Polish National Science Centre grant DEC-2013/11/D/ST6/03073 and Foundation for Polish Science via the START stipend programme. During the work on these results, Micha\l{} Pilipczuk has been holding a post-doc position of Warsaw Center of Mathematics and Computer Science.}}
\date{}
\begin{document}

\pagenumbering{gobble}
\thispagestyle{empty}

\date{}

\maketitle
\begin{abstract}
In this work, we focus on several completion problems for subclasses of chordal graphs: \minimumfillin, \intervalcompletion, \properintervalcompletion, \thresholdcompletion,
and \triviallyperfectcompletion. In these problems, the task is to add at most $k$ edges to a given graph in order to obtain a chordal, interval, proper interval, threshold, or trivially perfect graph, respectively. We prove the following lower bounds for all these problems, as well as for the related \chaincompletion problem:
\begin{itemize}
\item Assuming the Exponential Time Hypothesis, none of these problems can be solved in time $2^{\Oh(n^{1/2}/\log^c n)}$ or $2^{\Oh(k^{1/4}/\log^c k)}\cdot n^{\Oh(1)}$, for some integer $c$.
\item Assuming the non-existence of a subexponential-time approximation scheme for \minb on $d$-regular graphs, for some constant $d$, none of these problems can be solved in time $2^{o(n)}$ or $2^{o(\sqrt{k})}\cdot n^{\Oh(1)}$.
\end{itemize}
For all the aforementioned completion problems, apart from \properintervalcompletion, FPT algorithms with running time of the form $2^{\Oh(\sqrt{k}\log k)}\cdot n^{\Oh(1)}$ are known. Thus, the second result proves that a significant improvement of any of these algorithms would lead to a surprising breakthrough in the design of approximation algorithms for \minb.

To prove our results, we use a reduction methodology based on combining the classic approach of starting with a sparse instance of \sat[3], prepared using the Sparsification Lemma, with the existence of almost linear-size Probabilistically Checkable Proofs (PCPs). Apart from our main results, we also obtain lower bounds excluding the existence of subexponential algorithms for the \ola problem, as well as improved, yet still not tight, lower bounds for \fast.

\end{abstract}
\clearpage

\pagenumbering{arabic}

\section{Introduction}

In the \minimumfillin problem, also known as {\sc{Chordal Completion}}, the input is an undirected graph $G$ and an integer $k$, and the question is whether at most $k$ edges can be added to $G$ in order to turn it into a chordal graph, i.e., a graph without induced cycles of length at least $4$ (also known as {\em{holes}}). The interest in this problem originates in the study of strategies for Gaussian elimination on sparse matrices, because the optimum number of additional entries of a matrix $A$ that become non-zero during the elimination is tightly connected to the minimum fill-in of the graph $G_A$ obtained by taking $A$ to be its adjacency matrix. See~\cite{davis-book,Rose} for more information on applications of \minimumfillin in the theory of sparse matrices. However, the problems of adding as few edges as possible to obtain a chordal graph, or a graph belonging to some natural subclass of chordal graphs, like interval, proper interval, trivially perfect, or threshold graphs, have numerous other applications ranging from database management, bioinformatics, artificial intelligence, to social networks. We refer to the introductory sections of~\cite{BliznetsFPP14,BliznetsFPP14a,DrangeDLS15,DrangeFPV13,DrangeP14,FominV13,KaplanST99a,VillangerHPT09} for a broader discussion and pointers to relevant literature.

\minimumfillin is NP-hard, as shown by Yannakakis~\cite{yannakakis}, however the reduction showing this is quite non-obvious; the complexity status of the problem was among the 12 open problems at the end of the first edition of the Garey and Johnson's book~\cite{garey1979computers}. The study of \minimumfillin from the point of view of the parameterized complexity, with $k$ being the obvious parameter of interest, started with the pioneering work of Kaplan et al.~\cite{KaplanST99a}. They proposed a fixed-parameter (FPT) algorithm with running time $\Oh(16^kk^6+k^2mn)$ that is based on locating holes in the graph and branching on possible ways of adding edges to get rid of them. A similar strategy worked also for \properintervalcompletion~\cite{KaplanST99a}, the problem of adding as few edges as possible to obtain a proper interval graph, but not for \intervalcompletion. The fixed-parameter tractability of the latter has been resolved by Villanger et al.~\cite{VillangerHPT09} only several years thereafter.

A complete turning point came four years ago, when Fomin and Villanger~\cite{FominV13} presented an algorithm for \minimumfillin with {\em{subexponential}} parameterized complexity, more precisely with running time $\Oh(2^{\Oh(\sqrt{k}\log k)}+k^2nm)$. This was an immense surprise to the parameterized complexity community, since subexponential parameterized algorithms, i.e., with running time $2^{o(k)}\cdot n^{\Oh(1)}$, were known essentially only in two restricted settings: in topologically constrained graphs via the technique of bidimensionality (see e.g.~\cite{DemaineH08}), and in tournaments, with an important example of \fast~\cite{AlonLS09,faster-feige,faster-karpinski}. For most natural parameterized problems, the existence of such algorithms can be excluded under the Exponential Time Hypothesis, which (when combined with the {\em{Sparsification Lemma}}~\cite{sparsification}) essentially states that there is no algorithm for \sat[3] with running time $2^{o(n+m)}$; cf.~\cite{our-book,grohe:book}.

The work of Fomin and Villanger~\cite{FominV13} presented a conceptual breakthrough in the approach to completion problems for subclasses of chordal graphs. The main idea is not to focus on breaking individual obstacles such as holes by single edge additions, as was proposed in the previous works, but to build a structural decomposition of the completed graph by means of a dynamic programming algorithm that minimizes the number of edges missing in the decomposition. The crux is to show that this dynamic programming can be restricted to a space of states that has size $2^{\Oh(\sqrt{k}\log k)}\cdot n^{\Oh(1)}$, and can be enumerated efficiently. In case of \minimumfillin, the considered decomposition is the clique tree, but this generic approach can be in principle applied to every subclass of chordal graphs whose graphs have a well-defined global structure. And so, following~\cite{FominV13}, subexponential parameterized algorithms have been designed for \thresholdcompletion~\cite{DrangeFPV14}, \triviallyperfectcompletion~\cite{DrangeFPV14}, \properintervalcompletion~\cite{BliznetsFPP14}, and even 
\intervalcompletion~\cite{BliznetsFPP14a}. Apart from \properintervalcompletion, for all these problems the algorithms have running time $2^{\Oh(\sqrt{k}\log k)}\cdot n^{\Oh(1)}$; for the former we are currently stuck at exponent $\Oh(k^{2/3}\log k)$, but this is conjectured to be an artifact of the technique~\cite{BliznetsFPP14}. Let us remark that in each of these cases the idea of Fomin and Villanger only provides the basic outline of the strategy, while the actual implementation is always class-specific and requires involved technical insight.

Drange et al.~\cite{DrangeFPV13,DrangeFPV14} complemented their results with a number of lower bounds suggesting that even a slight deviation from the setting of adding edges to a subclass of chordal graphs leads to the non-existence of a subexponential parameterized algorithm under the assumption of ETH. Thus, the aforementioned problems for which such algorithms exist are in fact intriguing ``singularity points'' on the complexity landscape. It is interesting that these singularity points actually correspond to problems that have important practical applications. Hence, the study of this phenomenon is an important direction that naturally belongs to the {\em{optimality programme}} (cf.~\cite{eth-survey,Marx12}): a trend in parameterized complexity that focuses on systematic study of parameterized problems by providing possibly tight upper and lower bounds on their complexity.

From this point of view, the first natural question is exactly how ``deep'' are the aforementioned singularity points. More precisely, is the running time of the form $2^{\Oh(\sqrt{k}\log k)}\cdot n^{\Oh(1)}$ optimal, say under ETH? For subexponential algorithms derived using bidimensionality, the non-existence of $2^{o(\sqrt{k})}\cdot n^{\Oh(1)}$-time algorithms under ETH usually follows already from the known NP-hardness reductions~\cite{our-book,grohe:book}. However, for the considered completion problems this is not the case: as Fomin and Villanger already observed in~\cite{FominV13}, the classic NP-hardness reduction for \minimumfillin gives only a $2^{o(k^{1/6})}\cdot n^{\Oh(1)}$ lower bound. A similar situation holds for \fast, for which the fastest known algorithms work in time $2^{\Oh(\sqrt{k})}\cdot n^{\Oh(1)}$~\cite{faster-feige,faster-karpinski}. 

For this reason, the question of establishing tight upper and lower bounds for \minimumfillin was already asked explicitly by Fomin and Villanger~\cite{FominV13}, repeated by Marx in his survey on the optimality programme~\cite{Marx12}, and then reiterated for respective subclasses in all the works~\cite{BliznetsFPP14,BliznetsFPP14a,DrangeFPV13}. The goal of this paper is to remedy this situation by providing complexity foundations for proving that the square root in the exponent of the running time is hard to improve.

\paragraph*{Our results.} First, we investigate how strong lower bounds for completion problems can be obtained when we assume only ETH.

\begin{theorem}\label{thm:only-eth}
Unless ETH fails, there is an integer $c\geq 1$ such that there are no $2^{\Oh(\sqrt{n}/\log^c n)}$, and consequently no $2^{\Oh(k^{1/4}/\log^c k)}\cdot n^{\Oh(1)}$ algorithms for the following problems: \minimumfillin, \intervalcompletion, \properintervalcompletion, \triviallyperfectcompletion, \thresholdcompletion, {\sc{Chain Completion}}.
\end{theorem}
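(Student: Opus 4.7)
The plan is to build a chain of reductions from (sparse) \sat[3] to the target completion problems, injecting a constant-factor gap via a nearly linear-size PCP and transporting this gap first through an ordering problem (\ola) and then through standard reductions to each completion problem. The polylogarithmic overhead of the PCP construction is exactly what forces the $\log^c n$ factor appearing in the denominator of the lower bound: the point of the theorem is not that a particular $\log^c$ slack is optimal, but that for every sufficiently large $c$ no algorithm of running time $2^{\Oh(\sqrt{n}/\log^c n)}$ can exist under ETH.

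Concretely, I would first invoke the Sparsification Lemma so that, under ETH, one may assume that \sat[3] on $n$ variables and $\Oh(n)$ clauses admits no $2^{o(n)}$ algorithm. Next, I would apply a nearly linear-size PCP (in the style of Dinur, or a quasilinear Ben-Sasson--Sudan-type construction) to turn such a sparse instance into a Gap-\sat[3] (or gap constraint satisfaction) instance of size $N_1 = n \cdot \mathrm{polylog}(n)$ with a constant completeness--soundness gap. Under ETH, this gap instance still admits no $2^{o(n)}$ algorithm, i.e.\ no $2^{o(N_1/\mathrm{polylog}(N_1))}$ algorithm.

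In the third and technically central step, I would reduce Gap-\sat[3] to \ola via an expander-based gadget construction: each variable is realized by a small block of vertices joined by an expander, and each clause is encoded by edges bridging the blocks of its literals, arranged so that satisfying assignments correspond to linear arrangements of cost at most $L$, while assignments violating a constant fraction of clauses force cost at least $(1+\varepsilon)L$. A natural implementation has roughly quadratic blowup in size, yielding an \ola instance with $N_2 = N_1^2 \cdot \mathrm{polylog}(N_1) = n^2 \cdot \mathrm{polylog}(n)$ vertices. In the fourth step I would feed the resulting \ola-hardness through (variants of) the known reductions from \ola to each of \minimumfillin, \intervalcompletion, \properintervalcompletion, \triviallyperfectcompletion, \thresholdcompletion, and \chaincompletion; these reductions transport the hardness polynomially and produce completion instances on $N = \Theta(N_2)$ vertices with parameter $k = \Theta(N^2)$.

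To close the argument, a hypothetical $2^{\Oh(\sqrt{N}/\log^c N)}$ algorithm on the completion instance, chased backward along the chain, would yield a $2^{\Oh(n \cdot \mathrm{polylog}(n) / \log^c n)} = 2^{o(n)}$ algorithm for sparse \sat[3] once $c$ is chosen large enough to absorb the polylogarithmic PCP overhead, contradicting ETH; the parameter-version bound $2^{\Oh(k^{1/4}/\log^c k)} \cdot n^{\Oh(1)}$ is then immediate because $k^{1/4} = \Theta(\sqrt{N})$ on the produced instances. The step I expect to be the real obstacle is the third one: the classical NP-hardness reductions for \ola are not gap-preserving, so turning them into gap-preserving reductions while keeping the blowup polynomial will likely require carefully tuned expander gadgets together with a gap-amplification argument that cleanly converts the PCP completeness--soundness gap into a multiplicative gap in the \ola objective.
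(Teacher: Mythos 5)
Your outline tracks the paper's skeleton for the first two steps (Sparsification Lemma plus Dinur's nearly linear PCP to manufacture a hard gap version of \sat[3]), but your third step is not carried out and is unlikely to be carriable out in the form you sketch. You propose a direct expander-gadget reduction from the gap CSP to \ola, which you yourself flag as ``the real obstacle,'' and leave as an aspiration. There is a concrete reason to be skeptical of that route under ETH alone: if one could build a bounded-degree, gap-preserving, polylog-overhead reduction from a hard gap \sat[3] instance to \ola, then the remaining steps in the paper would already yield the $2^{o(n)}$ lower bound of Theorem~\ref{thm:main} for all the completion problems under ETH, precisely the strengthening that the paper explicitly states it cannot achieve without introducing Hypothesis~\ref{h.bisection}. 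In the paper, expander constructions appear only in the hypothesis-based sparse reduction of Section~\ref{sec:sparse}; they play no role in the ETH-based Theorem~\ref{thm:only-eth}.

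The paper's ETH-based route is considerably more elementary than what you envision, and the blowup accounting is also different from yours. The gap is pushed from \sat[3] through NAE-SAT variants to \gapmaxcut (all linear-size), and then the classic Garey--Johnson--Stockmeyer reduction to \olashort is reused: take the complement of the \maxcut\ instance and attach a clique to separate the two sides. The one genuinely new observation is that the $\Theta(n^4)$ clique of the classical reduction was only needed to dwarf the noise from internal orderings; once you start from a gap instance, a \emph{linear}-size clique already absorbs that noise, so the resulting \ola\ instance has only $\Oh(n)$ vertices. It is dense and of unbounded degree, which is why the quadratic blowup you describe actually occurs one step later, in the Yannakakis-style reduction from \olashort\ to \chaincompletion, which introduces one new vertex per edge and therefore produces $\Oh(\Delta_G \cdot n) = \Oh(n^2)$ vertices. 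Your expectation that the \olashort-to-completion step needs to be gap-preserving is also misplaced: the gap is fully spent at the \olashort\ stage (the reduction from \gapmaxcut\ produces an ordinary yes/no instance of \olashort), and from there on the reductions are exact. You should drop the expander gadget, route explicitly through \gapmaxcut, and isolate the linear-clique trick as the technically decisive modification.
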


Here, a graph $G$ is a {\em{chain graph}} if it is a bipartite graph with a fixed bipartition $A\uplus B$, where the vertices of $A$ can be ordered by a linear ordering $\preceq$ such that $N(u)\subseteq N(v)$ whenever $u\preceq v$, for all $u,v\in A$. In the {\sc{Chain Completion}} problem we are given a bipartite graph with a fixed bipartition $A\uplus B$, and we want to add at most $k$ edges between $A$ and $B$ to obtain a chain graph. {\sc{Chain Completion}} also admits an algorithm with running time $\Oh(2^{\Oh(\sqrt{k}\log k)}+k^2nm)$~\cite{FominV13}.

Unfortunately, Theorem~\ref{thm:only-eth} does not provide a tight result for completion problems we are interested in: we still have a gap between $k^{1/4}$ and $k^{1/2}$. We in fact prove a higher lower bound, but we need to ground it on a stronger complexity assumption. More precisely, we consider the approximability of the \minb problem on $d$-regular graphs: given a $d$-regular graph $G$ with an even number of vertices, find a partition of $V(G)$ into {\em{equal-sized}} parts that minimizes the number of edges between the parts. For $0 \leq \bounda < \boundb \leq 1$, problem \gapminbd[d]$_{[\alpha, \beta]}$ is defined as the gap problem of deciding whether the optimum solution is at most $\bounda m$, or at least $\boundb m$.

\begin{hypothesis} 
\label{h.bisection} 
There exist $0 \leq \bounda < \boundb \leq 1$, and an integer $d > \frac{4}{\boundb-\bounda}$, such that there is no $2^{o(n)}$-time algorithm for \gapminbd[d]$_{[\alpha, \beta]}$. 
\end{hypothesis} 

\todo{If you think this is needed, combine above definition with \gapminbd[d]$_{[\bounda, \boundb]}$}

\begin{theorem}\label{thm:main}
Unless Hypothesis~\ref{h.bisection} fails, there is no $2^{o(n+m)}$-time algorithm for {\sc{Chain Completion}}, and no $2^{o(n)}$-time algorithms for \minimumfillin, \intervalcompletion, \properintervalcompletion, \triviallyperfectcompletion, and \thresholdcompletion. Consequently, none of these problems can be solved in time $2^{o(\sqrt{k})}\cdot n^{\Oh(1)}$.
\end{theorem}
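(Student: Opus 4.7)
The plan is to establish a linear-size reduction from \gapminbd[d]$_{[\alpha,\beta]}$ to \chaincompletion, and then chain through the other completion problems in the list. Since $d$ is a fixed constant by Hypothesis~\ref{h.bisection}, an input graph $G$ has $|V(G)|+|E(G)| = \Theta(n)$. A reduction producing a \chaincompletion instance of size $\Oh(n)$ suffices to show that a $2^{o(n+m)}$-time algorithm for \chaincompletion would refute the hypothesis, and analogous linear-size reductions then propagate the lower bound to the other completion problems. Since in every resulting completion instance $k \leq \binom{|V|}{2} = \Oh(n^2)$, we have $\sqrt{k} = \Oh(n)$, so each $2^{o(n)}$ lower bound immediately yields the announced $2^{o(\sqrt{k})}\cdot n^{\Oh(1)}$ lower bound.

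First, I would construct the \chaincompletion instance from a $d$-regular graph $G$ on $n$ vertices (with $n$ even). The bipartite graph $H = (A \uplus B, E_H)$ would have $A$ containing the vertices of $G$ padded with gadget vertices that force any near-optimal completion to split $A$ into two equally sized halves at a prescribed position of the chain order, while $B$ contains representatives of $E(G)$ together with auxiliary gadgets coupling the position of each edge-vertex to the positions of its two endpoints. The intended behavior is that an optimal completion corresponds to a linear ordering of $A$ in which edges crossing between the two halves of $V(G)$ incur a fixed extra cost relative to edges contained in one half. The total number of added edges should take the form $f(n) + \lambda\cdot\mathrm{cut}(V_1,V_2)$ for some explicit function $f$ and positive integer~$\lambda$, so that the gap $[\alpha m, \beta m]$ in bisection cost translates to a gap in the \chaincompletion parameter. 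Because $H$ has $\Oh(n+m)=\Oh(n)$ vertices and edges, a $2^{o(n+m)}$-time algorithm for \chaincompletion would contradict Hypothesis~\ref{h.bisection}.

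Next, to handle the remaining problems, I would invoke Yannakakis-style reductions that extend a \chaincompletion instance into a \minimumfillin, \intervalcompletion, \properintervalcompletion, \triviallyperfectcompletion, or \thresholdcompletion instance. In each case one attaches problem-specific skeleton gadgets (complete bipartite graphs, paths, nested cliques, universal vertices, etc.) that force every optimal completion of the extended instance to preserve the chain structure on the bipartite backbone, shifting the cost by a fixed additive term and keeping the total size linear. This gives $2^{o(n)}$ lower bounds for each of the five completion problems, and hence the $2^{o(\sqrt k)}\cdot n^{\Oh(1)}$ bounds follow by the calculation above.

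The hardest part will be the first reduction: designing gadgets that simultaneously (i) force near-optimal chain completions to correspond to \emph{balanced} partitions of $V(G)$, (ii) produce an exact affine relationship between the bisection cost and the completion cost with a sufficiently small additive term $f(n)$ so that the bisection gap survives, and (iii) keep the total size linear in $n$. The balance requirement is especially delicate, since small misalignments around the midpoint can be absorbed by few edge additions and would ruin the gap unless the gadgets penalize such misalignments heavily; repeating each gadget with multiplicity proportional to $m$ should provide the necessary amplification while still preserving linear size, thanks to the constant degree bound on $G$.
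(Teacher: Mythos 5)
Your high-level plan has the right shape: a linear-size (Turing) reduction from \gapminbd[d]$_{[\alpha,\beta]}$ to \chaincompletion, then simple transfers to the other completion problems, with $\sqrt{k}=\Oh(n)$ concluding. Your observation that the paper's Lemma~\ref{lem:ola-to-chain}-style identity makes \chaincompletion cost an affine function of a linear-arrangement cost is also correct, and the idea of going ``directly'' from \minb to \chaincompletion is not really a different route --- it is the paper's route with the intermediate \oladshort[d] instance left implicit, since the chain-completion cost \emph{is} the \olashort cost up to an affine shift.

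The genuine gap is that you have not constructed the separator gadget, and everything hinges on it. You write ``gadget vertices that force any near-optimal completion to split $A$ into two equally sized halves,'' call this ``the hardest part,'' and then offer only ``repeating each gadget with multiplicity proportional to $m$.'' This does not address the actual obstacles. First, the separator must have \emph{constant degree} (otherwise the resulting \chaincompletion instance, via the $\Oh(\Delta\cdot|V|)$ blow-up of the edge-vertex encoding, is no longer linear), which rules out cliques and forces an expander-type object; the paper builds an outer $d_H$-regular expander $H$ to enforce consecutiveness, and then a second layer of inner expanders $H_1,\dots,H_Z$ so that each $H_i$ is itself a consecutive block, which is what lets one track the cost of the edges from $V(G)$ into $H$ precisely. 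Second, balance is not ``heavily penalized'' by the separator alone: the paper's Lemma~\ref{l.imbal} needs the $Z$ bipartite matchings from $V(G)$ to the blocks $H_i$ precisely so that moving the extreme vertex to the other side yields a $\Theta(\gamma n Z)$ gain that dominates the $\Oh(\dG Z\varphi n)$ noise, and the constants $Z,\varphi,\gamma$ must be calibrated against $\alpha,\beta,d$. Your proposal does not identify any mechanism achieving this, nor does it note that the cost of a crossing edge is \emph{not} a fixed amount but scales with the (unknown) distance across the separator, which is why the paper must compare against $|E_{G'}(A,B)|\cdot Z\phin$ rather than a per-edge constant. Third, and easy to miss: the cost threshold $k$ depends on $\OLA(H)$, the optimum arrangement cost of the expander itself, which is not known in closed form; the paper's reduction is therefore a \emph{Turing} reduction that binary-searches for $\OLA(H)$ using the very oracle it is reducing to. Your proposal implicitly presents a Karp reduction with ``an explicit function $f(n)$,'' which would require a separator with a priori computable arrangement cost --- an additional construction you would need to supply. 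Without a concrete gadget, a Swapping-Lemma-style argument establishing consecutiveness, a balance lemma, and the binary-search step, the proof does not go through.
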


Thus, Theorem~\ref{thm:main} asserts that the existence of a substantially faster algorithm for any of the considered completion problems would lead to a breakthrough in the design of approximation algorithms for \minb. We do not intend to take a stance on whether Hypothesis~\ref{h.bisection} is true or false. However, the existence of an algorithm refuted by Theorem~\ref{h.bisection} is far from the current knowledge: the best polynomial-time approximation algorithms for \minb achieve approximation factor $\Oh(\log OPT)$~\cite{Racke08}, whereas to refute Hypothesis~\ref{h.bisection} one would need to obtain approximation factor arbitrarily close to one. This could be, however, possible due to the assumption of $d$-regularity, for a constant $d$, as well as the access to subexponential-time computations. We are not aware of any approximation algorithms for \minb that would significantly use any of these assumptions. In fact, Berman and Karpinski~\cite{berman-karpinski} have shown that approximating \minb in graphs of maximum degree $3$ is as hard as in general graphs from the point of view of polynomial time approximation.

Finally, using the methodology of Theorem~\ref{thm:only-eth} we can give an improved lower bound for \fast. For this problem, the first subexponential FPT algorithm with running time $2^{\Oh(\sqrt{k}\log k)}\cdot n^{\Oh(1)}$ was given by Alon et al.~\cite{AlonLS09}, which was later improved to $2^{\Oh(\sqrt{k})}\cdot n^{\Oh(1)}$ independently by Feige~\cite{faster-feige} and by Karpinski and Schudy~\cite{faster-karpinski}.

\begin{theorem}\label{thm:only-eth-fast}\footnote{As the reduction in the proof of Theorem~\ref{thm:only-eth-fast} is randomized, here we assume the randomized version of the ETH.}
Unless ETH fails, there is an integer $c\geq 1$ such that there is no $2^{\Oh(\sqrt{n}/\log^c n)}$, and consequently no $2^{\Oh(k^{1/4}/\log^c k)}\cdot n^{\Oh(1)}$-time algorithm for \fast.
\end{theorem}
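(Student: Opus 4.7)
The plan is to follow the same three-stage pipeline as Theorem~\ref{thm:only-eth}, with the last step replaced by a randomized tournament-completion: start from a sparse ETH-hard \sat[3] instance, apply an almost linear size PCP to convert it into a constant-gap CSP, and then reduce, in a gap-preserving way, to \fast.

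The preparation phase is identical in spirit to that of Theorem~\ref{thm:only-eth}. Combining the Sparsification Lemma with an almost linear size PCP (for instance Dinur's construction) transforms an ETH-hard \sat[3] instance on $N$ variables into a Gap \esat[3] instance of size $\tilde{\Oh}(N)$ with a constant multiplicative gap; under ETH this gap instance cannot be solved in time $2^{\Oh(N/\log^{c_0} N)}$ for some integer $c_0$. A standard gap-preserving reduction then produces a Gap \mmaxcut instance $H$ with $n_0 = \tilde{\Oh}(N)$ vertices and multiplicative gap bounded away from $1$.

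The second step is the randomized tournament-completion, adapting Alon's randomized \NPh proof for \fast. The plan is to build a tournament $T$ on $\Theta(n_0^2)$ vertices by replacing each vertex of $H$ by a block of $n_0$ vertices, orienting inter-block arcs so as to encode the edges of $H$, and orienting all remaining pairs of vertices uniformly and independently at random. A Chernoff-type concentration argument then shows that, with constant positive probability, the value of a minimum feedback arc set of $T$ equals a linear function of the \mmaxcut value of $H$ up to an additive slack of order $\Oh(n_0^{3/2}\sqrt{\log n_0})$. Since the scaled-up gap in $T$ is $\Omega(n_0^2)$, this slack is dominated and constant multiplicative gaps in \mmaxcut survive as constant multiplicative gaps in \fast; the randomness of this step is what forces the use of the randomized variant of ETH mentioned in the footnote.

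Putting the two stages together, a \sat[3] instance on $N$ variables is mapped to a \fast instance on $n = \tilde{\Oh}(N^2)$ vertices, so that a hypothetical $2^{\Oh(\sqrt{n}/\log^c n)}$-time algorithm for \fast would give a $2^{\Oh(N/\log^{c-\Oh(1)} N)}$-time algorithm for the starting Gap \esat[3] instance, contradicting randomized ETH for large enough $c$. The parameterized lower bound $2^{\Oh(k^{1/4}/\log^c k)} \cdot n^{\Oh(1)}$ then follows from $k \leq \binom{n}{2}$, which gives $k^{1/4} = \Oh(\sqrt{n})$. The main technical obstacle is the joint calibration of the block size and the gap amplification: the Chernoff slack $\Oh(n_0^{3/2}\sqrt{\log n_0})$ must stay asymptotically below the preserved gap $\Omega(n_0^2)$ while keeping the total vertex count at $\Theta(n_0^2)$, and it is precisely this calibration that forces the quadratic blow-up from $N$ to $n = \tilde{\Oh}(N^2)$ in the final tournament.
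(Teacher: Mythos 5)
Your high-level plan (sparse instance $\to$ PCP gap instance $\to$ intermediate gap problem $\to$ randomized blow-up tournament $\to$ \fast) matches the paper's, but the middle of the pipeline and the concentration analysis both have genuine gaps.

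First, the reduction you propose from Gap \mmaxcut directly to \fast does not work as stated. The Alon/Ailon--Charikar--Newman randomized blow-up is a reduction from \fas in general directed graphs: the tournament $T_k$ is built from a \emph{digraph} $G$ by $k$-fold vertex duplication, and $\fasshort(T_k)$ relates to $\fasshort(G)$ precisely because the optimal vertex ordering of the blow-up is inherited from an optimal ordering of $G$ and backward arcs scale by $k^2$. There is no analogous ordering-based structure for \maxcut: an undirected edge $uv$ of a \maxcut instance gives you no canonical arc orientation, and the max-cut objective (a partition, not an ordering) does not embed directly into the feedback-arc-set objective of a tournament. The paper avoids this by never going through \maxcut for \fast at all: after the PCP step it takes a different chain, $\gapenaesat[3] \to \gapssat[1,1]{d,d} \to \gapfvs[d+2] \to \gapfas[d+1]$, specifically to produce a \emph{bounded-degree} directed multigraph \fas instance, which is exactly what the blow-up argument needs. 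Your route would require a linear, gap-preserving, degree-bounded \maxcut$\to$\fas reduction that you haven't supplied; without it the middle step is missing.

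Second, your Chernoff calibration is off. You take block size $n_0$, so the tournament has $\Theta(n_0^2)$ vertices and $\Theta(n_0^4)$ random arcs. For a single fixed ordering, Chernoff gives deviation $t$ with failure probability $\exp(-\Theta(t^2/n_0^4))$. But the NO direction of the argument must rule out \emph{every} ordering having small feedback arc set, so you need a union bound over $(n_0^2)! = \exp(\Theta(n_0^2\log n_0))$ orderings. Matching these forces $t = \Omega(n_0^3\sqrt{\log n_0})$, not the $\Oh(n_0^{3/2}\sqrt{\log n_0})$ you claim; with block size $n_0$ the slack then dominates the gap and the argument fails. This is precisely why the paper sets the block size to $k = \Theta(n\log n)$ (giving a tournament on $\Theta(n^2\log n)$ vertices) and takes the deviation $\eta k^2 n$ to be a constant fraction of the (also $\Theta(k^2 n)$) blow-up gap. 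With your block size, the calibration cannot be made to close. The final $2^{\Oh(\sqrt{n}/\log^c n)}$ and $k\leq\binom{n}{2}$ bookkeeping at the end are correct, but they rest on the two steps above which are not sound as written.
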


A direct inspection of the classic NP-hardness proofs~\cite{AilonCN08,Alon06,CharbitTY07} gives only a $2^{\Oh(n^{1/4})}$, and consequently also a $2^{\Oh(k^{1/8})}\cdot n^{\Oh(1)}$ lower bound, but Theorem~\ref{thm:only-eth-fast} is still not tight: we have a gap between $k^{1/2}$ and $k^{1/4}$, similarly as in Theorem~\ref{thm:only-eth}. Unfortunately, despite efforts we are unable to close this gap using Hypothesis~\ref{h.bisection}, or a similar assumption. 

\paragraph*{Our techniques.} The main idea of this work is to carefully combine the standard approach of starting with a sparse instance of \sat[3], prepared using the Sparsification Lemma~\cite{sparsification}, with the existence of almost linear PCPs. More precisely, an application of the Sparsification Lemma reduces solving an instance $\varphi$ of \sat[3] to solving a number of instances $\phi_1,\phi_2,\ldots,\phi_\ell$ of \sat[3] on the same variable sets, but in each $\phi_i$ we have that the number of clauses $m$ is linear in the number of variables $n$. Then, to each $\phi_i$ we apply the result of Dinur~\cite{Dinur07} on the existence of almost linear-size PCPs. This transforms each $\phi_i$ into an instance $\phi'_i$ of \sat[3] where the number of variables and clauses is $\Oh(n \log^c n)$, but the instance is either satisfiable, or at most a $\rho$-fraction of clauses can be satisfied, for some $\rho<1$. 

Having such an instance at hand, we can proceed with the chain of reductions proposed by Garey et al.~\cite{GareyJS76} and Yannakakis~\cite{yannakakis}. Namely, we first use simple manipulations to obtain an equivalent instance of \maxcut with a gap, and then reduce it to the \ola (\olashort) problem: Given a graph $G$ on $n$ vertices, find a linear ordering $\pi\colon V(G)\to \{1,2,\ldots,n\}$ that minimizes the cost defined as $\sum_{uv\in E(G)}|\pi(u)-\pi(v)|$; value $|\pi(u)-\pi(v)|$ is also called the {\em{cost}} of $uv$. 
This is precisely the moment where we exploit that we are working with an instance of \maxcut with a gap. 
Namely, the construction introduces a huge clique to the complement of the \maxcut instance; this clique is supposed to separate in the ordering the sides of an optimum max-cut solution. In Garey et al.~\cite{GareyJS76}, its size must be large enough so that the cost of any edge jumping over the clique dwarfs the ``noise'' contribution that is given by internal ordering of parts on the left and on the right. Hence, the clique is chosen to be of size $\Theta(n^4)$, which explodes the size of the instance. However, we observe that by starting with an instance with a gap, we can accommodate the noise in the gap, and therefore we only need the clique to be of linear size. Thus, we obtain the following theorem that can be of independent interest.

\begin{theorem}\label{thm:eth-ola}
Unless ETH fails, for some $c > 1$ there is no algorithm solving \ola in time $2^{\Oh(n/\log^c n)}$.
\end{theorem}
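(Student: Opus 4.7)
\textbf{Proof plan for Theorem~\ref{thm:eth-ola}.} The plan is to compose a chain of four near-linear-size reductions: (i) the Sparsification Lemma to reduce to a sparse \sat[3] instance, (ii) Dinur's almost-linear-size PCP theorem to introduce a constant-factor gap, (iii) a textbook gadget reduction from gap-\sat[3] to bounded-degree gap-\maxcut, and (iv) a tailored version of the Garey--Johnson--Stockmeyer reduction from gap-\maxcut to \ola that uses a linear-size separator clique instead of the $\Theta(N^4)$-size clique employed in~\cite{GareyJS76}. Starting from a \sat[3] formula $\varphi$ on $n$ variables, this chain produces an \ola instance on $N = \Oh(n \log^{c_0} n)$ vertices for an absolute constant $c_0 \geq 1$ coming from Dinur's construction.

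The heart of the argument is step (iv). Given a bounded-degree gap-\maxcut instance $G$ on $N_0$ vertices with $m_0 = \Theta(N_0)$ edges, where the max-cut size is either $\geq \beta m_0$ or $\leq \alpha m_0$ for fixed constants $\alpha < \beta$, we build the \ola graph $H$ on vertex set $V(G) \cup K$, where $K$ is a clique of size $t = \Theta(N_0)$, and $E(H)$ consists of the non-edges of $G$ together with all pairs inside $K$. In any optimal linear ordering of $H$, the clique $K$ occupies a contiguous middle block that splits $V(G)$ into a left part $V_L$ and a right part $V_R$; every non-edge of $G$ crossing $K$ contributes at least $t$ to the OLA cost, while the ``noise'' from internal orderings within $V_L$ and $V_R$ is $\Oh(N_0^2)$ thanks to bounded degree. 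Consequently the cost of an ordering has the form $t \cdot (\text{number of crossing non-edges}) \pm \Oh(N_0^2)$, so orderings induced by bipartitions of cut-values differing by $\Theta(m_0)$ differ in \ola cost by $\Theta(t \cdot m_0) = \Theta(N_0^2)$. Choosing the hidden constant in $t$ sufficiently large, this gap dominates the noise and the \ola optimum separates the YES and NO cases of gap-\maxcut. In the original setting without a gap, one must take $t = \Theta(N_0^4)$ to make the same argument work, which is precisely what blows up the instance size in~\cite{GareyJS76}.

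Putting the chain together, an algorithm solving \ola in time $2^{\Oh(N/\log^{c_0 + 1} N)}$ would, per sparsified subinstance of $\varphi$, run in time $2^{\Oh(n \log^{c_0} n / \log^{c_0+1}(n \log^{c_0} n))} = 2^{o(n)}$; summing over the $2^{\varepsilon n}$ subinstances produced by sparsification gives total running time $2^{\Oh(\varepsilon n)}$ for $\varphi$, and taking $\varepsilon \to 0$ contradicts ETH. The main technical obstacle is step (iv): we need the gap-\maxcut instance to be of bounded maximum degree so that the internal noise really stays $\Oh(N_0^2)$, which may require a preliminary expander-replacement reduction in the spirit of Berman--Karpinski before invoking the modified Garey--Johnson--Stockmeyer construction, and we must track all constants through the chain to make sure the gap-to-noise ratio survives.
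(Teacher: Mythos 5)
Your plan follows the same high-level chain as the paper (Sparsification Lemma $\to$ Dinur's almost-linear PCP $\to$ gap-\sat[3] $\to$ gap-\maxcut $\to$ \olashort, with the modification of Garey--Johnson--Stockmeyer in the last step using a linear-size clique), so the overall strategy is correct. However, there are two substantive issues with the way you describe step~(iv).

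First, and more seriously, the construction you write down does not work. You say $E(H)$ consists of the non-edges of $G$ together with all pairs inside $K$; this makes $K$ a clique that is \emph{disjoint} from $\overline{G}$. In that graph the optimum linear arrangement is just an optimum arrangement of $\overline{G}$ followed by an optimum arrangement of $K$ --- there is no edge pulling $K$ into the interior, so $K$ does not split $V(G)$ into two sides, and the whole connection to \maxcut evaporates. The correct construction (which is what Garey et al.\ and the paper actually do) additionally joins every vertex of $K$ to every vertex of $V(G)$; equivalently, $H$ is the complement of the graph obtained from $G$ by adding $|K|$ isolated vertices. This is not cosmetic: it yields the identity
\[
\sum_{uv\in E(H)}|\pi(u)-\pi(v)| \;+\; \sum_{uv\in E(G)}|\pi(u)-\pi(v)| \;=\; \binom{(M+1)n+1}{3},
\]
which converts \ola minimization on $H$ into maximization of the total stretch of the edges of $G$. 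That identity is what makes the ``swap $K$ to be contiguous'' argument go through and what makes every edge of $G$ crossing $K$ pay cost at least $|K|$. Without the $K$--$V(G)$ edges, neither claim holds.

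Second, the worry you raise at the end --- that one needs the gap-\maxcut instance to be of bounded maximum degree, perhaps via Berman--Karpinski expander replacement, so that the noise stays $\Oh(N_0^2)$ --- is unnecessary and is not what the paper does. The internal-noise bound in the correct analysis is $\Oh(n\cdot m)$ (each of the at most $m$ within-side edges of $G$ has stretch $<n$, and the ``$+n$'' part of the stretch of each crossing edge contributes another $\Oh(nm)$), while the separation between the YES and NO cases is $(\beta-\alpha)\, m\, Mn$. These have the \emph{same} order $\Theta(nm)$ regardless of whether $G$ is sparse, and the gap wins as soon as $M>2/(\beta-\alpha)$. So a constant-factor choice of the clique size suffices for arbitrary-density gap-\maxcut instances, and the additional expander-replacement step can be dropped.
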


Note that \olashort can be solved by a simple Held-Karp dynamic programming on subsets in time $2^n\cdot n^{\Oh(1)}$, so Theorem~\ref{thm:eth-ola} gives an almost tight result.

%TODO: check that we have a result for threshold graphs

To obtain Theorem~\ref{thm:only-eth} one just applies slightly modified elegant reduction from (the decision variant of) \olashort to \minimumfillin, proposed by Yannakakis~\cite{yannakakis}. The reduction firstly reduce \olashort to \chaincompletion. Because of this fact later we get a very specific \minimumfillin output instance. And this immediately gives us a reduction to \properintervalcompletion and \intervalcompletion problems.  With slight changes similar result is obtained for \thresholdcompletion and \triviallyperfectcompletion.  
Theorem~\ref{thm:only-eth-fast} is derived by applying the same methodology to the randomized hardness reduction for \fast of Ailon et al.~\cite{AilonCN08}.
\begin{comment} This reduction has the following properties:
\begin{itemize}
\item a yes-instance of \olashort is transformed to a yes-instance of \thresholdcompletion;
\item if the output instance is a yes-instance of \minimumfillin, then the input instance is a yes-instance of \olashort.
\end{itemize}
Thus, it is easy to see that this reduction proves the hardness of all the problems mentioned in Theorem~\ref{thm:only-eth}, and this result follows.
\end{comment}

The moment when we lose tightness in our chain of reductions is precisely the last reduction from \olashort to \minimumfillin. This is because in the construction we introduce a new vertex for every {\em{edge}} of the original graph, which blows up the vertex set quadratically. Therefore, the reduction would give a tight lower bound if we started with a sparse instance of \ola. This is precisely our idea behind the proof of Theorem~\ref{thm:main}.

Namely, we observe that by starting from the hardness of \gapminbd[d]$_{[\alpha, \beta]}$, instead of \maxcut, we can give an alternative hardness reduction to \ola in graphs
of bounded degree by a constant $d$ (\oladshort[d] for short). This reduction is much more intricate. Essentially, we replace the usage of a huge clique (which would blow up the degree) by a careful construction using several layers of expander graphs that mimics the same behaviour. This proves the following result.

\begin{theorem}\label{thm:hyp-ola}
Unless Hypothesis~\ref{h.bisection} fails, there exists an integer $d \in \N$ such that there is no $2^{o(n)}$-time algorithm solving \oladshort[d] in multigraphs.
\end{theorem}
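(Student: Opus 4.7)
The plan is to design a polynomial-time reduction that, given a $d_0$-regular graph $G$ on $n$ vertices with $m = d_0 n/2$ edges (the instance of \gapminbd[d_0]$_{[\alpha,\beta]}$ provided by Hypothesis~\ref{h.bisection}, where I use $d_0$ in place of the $d$ of the hypothesis), produces a multigraph $H$ on $\O(n)$ vertices of maximum degree bounded by a constant $d = d(d_0,\alpha,\beta)$, together with a threshold $T$, such that if $G$ has a bisection with at most $\alpha m$ crossing edges then $\OLA(H) \leq T$, while if every bisection of $G$ has at least $\beta m$ crossing edges then $\OLA(H) \geq T + \Omega(n^2)$. Since the reduction blows up the number of vertices only by a constant factor, a $2^{o(n)}$-time algorithm for \oladshort[d] on multigraphs would translate back to a $2^{o(n)}$-time algorithm for \gapminbd[d_0]$_{[\alpha,\beta]}$, contradicting the hypothesis.

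The heart of the construction is a \emph{separator gadget} $S$ that plays the role of the large clique in the classical Garey--Johnson--Stockmeyer reduction~\cite{GareyJS76}, but with bounded degree. I would build $S$ from several layers of constant-degree spectral expanders on $\Theta(n)$ vertices, using parallel edges (permitted in the multigraph setting) to amplify the weight of gadget edges without increasing the degree. The crucial property of $S$ is an \emph{even-distribution lemma} in the spirit of the classical expansion-to-OLA connection: in any linear arrangement $\pi$ of a graph containing $S$, the fraction of $S$-vertices falling in any interval of $\pi$ must, up to a small additive slack, match the relative length of that interval; otherwise the edge-expansion of $S$ would force $\Omega(n)$ gadget edges to cross some boundary, each paying a linear cost, which is ruled out by the overall budget. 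Vertices of $V(G)$ are then attached to the two ends of $S$ via constant-degree bipartite-expander matchings, so that any low-cost arrangement must place roughly half of $V(G)$ before $S$ and half after, thereby \emph{simulating} a bisection of $G$.

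For the soundness direction, given a bisection $(A,B)$ of $G$ with $|E(A,B)| \leq \alpha m$, I would place the vertices of $A$ first, then $S$ (in its own preferred internal ordering), then $B$, choosing any ordering inside $A$ and $B$. The cost then decomposes into: a fixed contribution $C n^2$ from the internal arrangement of $S$; an $\O(n^2)$ contribution from edges within $A$ and within $B$ (bounded trivially since $|A|,|B|\leq n/2$ and $G$ has degree $d_0$); a linear contribution per attachment edge; and an $\alpha m \cdot \O(n)$ contribution from the cut edges of $G$ that jump over $S$. Adding these up gives a threshold of the form $T = (C + \gamma) n^2$ with $\gamma$ linear in $\alpha$. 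For the completeness direction, given a $\pi$ whose cost is below $T + \delta n^2$ for a sufficiently small $\delta$, the even-distribution lemma forces the vertices of $S$ to lie in a small central window, and the attachment edges then force $V(G)$ to split into two halves $A'$ and $B'$ of size $n/2$ on opposite sides of that window; the bisection $(A',B')$ of $G$ must have at most $\beta m$ crossing edges, as otherwise the $G$-edges stretched across $S$ alone would contribute more than $\delta n^2$ and exceed the budget.

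The main obstacle will be the quantitative calibration of the separator gadget. Two competing constraints must be balanced simultaneously: the gadget has to be \emph{rigid} enough that any low-cost arrangement pins the two halves of $V(G)$ on opposite sides, which demands strong expansion and sufficiently many layers; and yet the internal OLA cost of $S$ has to be analyzable with slack much smaller than $\Omega(n^2)$, so that the transferred gap between the yes- and no-cases is not drowned by the intrinsic uncertainty of $S$. This is precisely where the assumption $d_0 > 4/(\beta-\alpha)$ of Hypothesis~\ref{h.bisection} enters: it guarantees the absolute bisection gap $\beta m - \alpha m$ exceeds $2n$, so after multiplying by the $\Theta(n)$ linear crossing cost incurred by each cut edge traversing $S$, the transferred OLA gap becomes $\Omega(n^2)$, large enough to dominate the unavoidable noise from the internal arrangements of $A$ and $B$.
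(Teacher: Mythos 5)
Your high-level plan is the same as the paper's: replace the huge clique in the Garey--Johnson--Stockmeyer reduction with a bounded-degree expander-based separator, keep the instance linear-size, and let the gap between $\alpha m$ and $\beta m$ absorb the noise from the non-separator part. However, as written the proposal has several real gaps.

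First, the ``even-distribution lemma'' you state is the \emph{opposite} of what the reduction needs, and your own completeness argument contradicts it: for $S$ to act as a separator you must prove that its vertices are \emph{consecutive} (bunched into one block) in any low-cost arrangement, not spread uniformly across the order. The paper establishes exactly this via a quantitative ``Swapping Lemma'': whenever the separator is split, expansion of $S$ guarantees so many $S$-to-$S$ edges reach across the splitting block that swapping that block past part of $S$ strictly decreases the cost. Your phrase ``match the relative length of that interval'' describes uniform spreading, which would destroy the separator property. Second, ``using parallel edges to amplify the weight of gadget edges without increasing the degree'' is simply false: in a multigraph a $k$-fold parallel edge contributes $k$ to the degree of each endpoint, so parallel edges cannot be used as free weights while staying bounded-degree. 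Third, the threshold $T$ must include the internal $\OLA$ cost of the separator, which for a generic expander is not computable in polynomial time; you present the reduction as a straight many-one map producing $(H,T)$, but you have no way to output $T$. The paper makes this a \emph{Turing} reduction precisely for this reason: it uses the hypothetical $\olashort$ oracle to binary-search for $\OLA(H)$ before setting $k$.

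Finally, the attachment of $V(G)$ to $S$ is underspecified in a way that actually matters. The paper does not attach $G$ to ``the two ends'' of one expander; it builds $Z$ disjoint expander layers $H_1,\dots,H_Z$ inside the big expander $H$, attaches each $v\in V(G)$ to exactly one vertex of \emph{each} $H_i$, and proves (iterated swapping) that each $H_i$ is itself consecutive in the optimum. The $Z$-fold redundancy is what turns a linear imbalance $\gamma n$ between $L(H)$ and $R(H)$ into a $\Theta(Z\varphi n^2)$ cost penalty and lets the paper prove near-balance of the induced cut (their Lemma~\ref{l.imbal}), and the per-layer consecutiveness is what makes the edge-cost accounting in the final inequality chain go through exactly. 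Your plan gestures at ``bipartite-expander matchings to the two ends'' but gives no mechanism that forces balance, which is the single most delicate step once consecutiveness of $S$ is established.
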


\todo{If you think that is reasonable, combine \olashort definitions with \oladshort[d]}

From this, Theorem~\ref{thm:main} follows similarly as Theorem~\ref{thm:only-eth} followed from Theorem~\ref{thm:eth-ola}.

\paragraph*{Related work.} The combination of ETH, Sparsification Lemma, and PCP tools was already used in the recent line of work on developing the theory of hardness of FPT approximation; see e.g.~\cite{BonnetE0P15,BonnetLP15,BonnetP14,BonnetP14b,HajiaghayiKK13,KhotS15}. There, the main goal is to provide lower bounds on how the access to FPT-time computations can help in the design of approximation algorithms for fundamental inapproximable problems, like {\sc{Clique}} or {\sc{Set Cover}}. 

The existence of almost linear PCPs combined with ETH was also used by Marx~\cite{Marx07a} to give almost tight lower bounds for the running times of polynomial time approximation schemes (PTASes) for several fundamental problems on planar graphs and in the Euclidean plane. The basic principle behind the approach of Marx~\cite{Marx07a} is very similar to ours. Namely, it is observed that using ETH and PCP tools one can prepare a sparse instance of the maximization variant of \sat[3] where it is hard to distinguish between a fully satisfiable instance and an instance where only a $(1-\epsilon)$-fraction of the clauses can be satisfied in time $2^{\Oh(n/\log^c n)}$, for some constant $c$ (see also Theorem~\ref{thm:hard-gapsat} in this work). This observation is used as a base for further reductions refuting the existence of certain approximation algorithms for geometric and planar problems.

Thus, this work provides another example where the said combination appears to be useful. This time we use it to prove improved lower bounds on the complexity of FPT algorithms solving certain graph modification problems exactly. The new idea in this work is that the gap property of the considered instances can be used not only to exclude the existence of approximation algorithms, but also to limit the instance size explosion in a chain of NP-hardness reductions by using more thrifty constructions.

Hypothesis~\ref{h.bisection} can possibly have links with the hypothesis put forward by Feige~\cite{Feige02}. Essentially, Feige conjectures the hardness of distinguishing a ``typical'' \sat[3] instance from a satisfiable one in polynomial time. From this, he derives as a corollary a variant of Hypothesis~\ref{h.bisection}, but without the assumption of $d$-regularity, for a constant $d$, and with subexponential time replaced by polynomial. It is conceivable that Hypothesis~\ref{h.bisection} can be also implied by some stronger variant of Feige's conjecture, but we refrain from formalizing this link due to many technical problems that arise when attempting to do this.

\paragraph*{Outline of the paper.} 

In Section~\ref{sec:preliminaries}, we give all the needed definitions and list the tools used in our reductions,
such as the Exponential Time Hypothesis and the PCP theorem.
Next, in Section~\ref{sec:ola-eth}, we prove Theorem~\ref{thm:eth-ola}.
The main technical contribution of the paper, that is the proof of Theorem~\ref{thm:hyp-ola},
is contained in Section~\ref{sec:sparse}. Theorem~\ref{thm:hyp-ola} is the main ingredient in the proof of Theorem~\ref{thm:main},
that is presented in Section~\ref{sec:completions}. Section~\ref{sec:fast} is devoted to
the proof of Theorem~\ref{thm:only-eth-fast}.
%Moreover, Appendix~\ref{sec:problems} contains definition of all the problems used throughout the paper.
 
\section{Preliminaries}
\label{sec:preliminaries}

\subsection{Parameterized complexity}
\label{s.parameterized}

\term{A parameterized problem} $Q$ is a subset of $\Sigma^{*} \times \mathbb{N}$, for a fixed finite alphabet $\Sigma$. 
An instance of~the~problem $Q$ is an element $(x,k) \in \Sigma^{*} \times \mathbb{N}$ with the integer $k$ called the parameter. 
A parameterized problem is \term{fixed parameter tractable} if there exists an algorithm deciding whether $(x,k) \in Q$, and
working in time $f(k) \cdot \text{poly}(|x|)$ for every instance $(x,k)$, where $f$ is a computable function.
For denoting the time of~an~algorithm we will use the $\mathcal{O}^{*}$ notation that suppresses factors polynomial in the size of the input, 
e.g. $\mathcal{O}^{*}(f(k))$.

We denote a polynomial deterministic linear reduction from a problem $X$ to $Y$ by $X \linleq Y$,
i.e., $X \linleq Y$ means that there is a polynomial time deterministic algorithm, which
given an instance $I$ of the problem $X$ produces an instance $I'$ of the problem $Y$ of size $\Oh(|I|)$,
such that $I$ is a yes-instance if and only if $I'$ is a yes-instance.

\subsection{Graph notations}
\label{s.graphs}

A standard graph theoretical notation is used throughout the paper. 
\term{A graph} $G$ is a tuple $(\vertices[G], \edges[G])$, where $\vertices[G]$ is the set of \term{vertices} and $\edges[G] \subseteq {\vertices[G] \choose 2}$ is the set of \term{edges}. 
When it is clear from the context what graph we refer to, we will denote by $V$ and $E$ the set of vertices and edges of $G$. 
%We also denote the set of vertices and edges of $G$ by $\vertices[G]$, $\edges[G]$, respectively. 
A graph $H$ is \term{a subgraph of $G$} if $\vertices[H] \ss \vertices[G]$ and $\edges[H] \ss \edges[G]$. 
A graph $H$ is \term{an induced subgraph of $G$} if $\vertices[H] \ss \vertices[G]$ and $\edges[H] = \edges[G] \cap {\vertices[H] \choose 2}$. 
An induced subgraph of $G$ with vertex set $X$ is denoted by $G[X]$.
\term{A complement} of $G$ is the graph with the vertices $V$ and the edge ${V \choose 2} - E$ and we denote it by $\overline G$.
For $X \subseteq V$, we denote by $\delta_G(X)$ the set of edges with exactly one endpoint in $X$.
We define \term{a cut}, as the set of edges $E_G(A,B)$, for a partition $(A,B)$ of $V$. We denote the size of the cut by $|E_G(A,B)|$.
% A graph $G$ is \term{bipartite}, if its vertex set is a disjoint union of sets $X, Y$ with the property that no edge of $G$ connects two vertices from $X$ or from $Y$.

We often use $n$, and $m$ to denote the size of $V$ and $E$, respectively. 
For a vertex $v$, $\deg_G(v)$ denotes the degree (the number of incident edges) of the vertex $v$. 
We say that a graph is \term{$d$-regular} if the degree of each vertex is equal to $d$.
We denote by $\Delta_G$ the maximum degree of $G$. 
The set $N_G(v) = \{ w : (v, w) \in \edges[G] \}$ is the neighbourhood of $v$. 
We extend this notation to subsets of vertices $X$, i.e. $N_G(X) = \bigcup_{v \in X} N_G(v) \setminus X$.
We will also omit subscripts, i.e. $\deg(v), \Delta, N(X), \delta(X), E(U,V)$, when it is clear from the context, which graph we refer to.
%If the choice of $G$ is clear from context, we often choose to omit it. \todo{You didn't precise to omit what}

If $X, Y \ss V$ are disjoint, then $E_G(X, Y)$ is the set of edges between $X$ and $Y$. 
We use $G[X, Y]$ to denote \term{the induced bipartite subgraph of $G$ with parts $X$ and $Y$.}
That is, $G[X, Y]$ is the graph with vertex set $X \cup Y$ that contains precisely the edges $E_G(X, Y)$. 

For $X \subseteq V$, we denote by $\delta_G(X)$ the set of edges with exactly one endpoint in $X$.
We define \term{a cut}, as the set of edges $E_G(A,B)$, for a partition $(A,B)$ of $V$. We denote the size of the cut by $|E_G(A,B)|$.

\subsection{Expanders}
\label{s.expanders} 

\term{The Cheeger number $h(G)$ of a graph $G$} is defined as
$$h(G) := \min \left\{ \frac{|\delta (X)|}{|X|} : X \subseteq V(G), |X| \leq \frac{|V(G)|}{2} \right\},$$
We say that a graph $G$ is \term{a $(d, e)$-expander} if it is $d$-regular and has Cheeger number of at least $e$. 
When discussing expanders, it is convenient to allow parallel edges and self-loops,
which naturally appear in most of the expanders constructions.
It is important to note that a self-loop contributes $1$ to the degree of a vertex.

There are many efficient constructions of expanders available in the literature. 
The following theorem produces graphs with provably optimal Cheeger numbers: 
\begin{theorem} \cite{Lubotzky88, Morgenstern94}
\label{t.expander} 
Let $d = p^k + 1$, where $p$ is a prime and $k \in \N$, and let $q$ be a prime congruent to $1 \text{ mod } 4$.
Assume $p \neq q$.  
There exists a $(d, \half{d} - \sqrt{d - 1})$-expander on $q + 1$ vertices.
Furthermore, such an expander can be constructed in polynomial time. 
\end{theorem}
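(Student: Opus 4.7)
The plan is to realize the required expander as a Cayley graph of a suitable finite group and then bound its Cheeger number via the spectral gap. Concretely, for the case $k = 1$ (so that $d = p+1$ with $p$ prime) I would follow the Lubotzky--Phillips--Sarnak route: choose the group $G_q = \mathrm{PGL}_2(\mathbb{F}_q)$ (or $\mathrm{PSL}_2(\mathbb{F}_q)$ depending on quadratic-residue behaviour of $p$ modulo $q$) and construct a symmetric generating set $S$ of size $d = p+1$ using the $p+1$ integer quaternions of norm $p$, reduced modulo $q$ and projected to $G_q$. The Cayley graph $\mathrm{Cay}(G_q, S)$ is then $d$-regular by construction and has roughly $q+1$ vertices after passing to the appropriate quotient (one restricts to the relevant subgroup in order to land on exactly $q+1$ vertices, taking into account whether $p$ is a QR mod $q$). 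For general $d = p^k+1$, I would replace the number-theoretic setup by the function-field analogue of Morgenstern: the group $\mathrm{PGL}_2(\mathbb{F}_q(T))$ with generators coming from a division algebra over $\mathbb{F}_q(T)$ ramified at two places.

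The spectral bound would be reduced to the Ramanujan property. The key step is to show that every non-trivial eigenvalue $\lambda$ of the adjacency operator satisfies $|\lambda| \leq 2\sqrt{d-1}$. For the number-field case this follows by identifying the eigenfunctions with Hecke eigenforms on a quotient of the $(p+1)$-regular tree by an arithmetic lattice, and then invoking Deligne's proof of the Ramanujan--Petersson conjecture for weight-two cusp forms (equivalently, the Eichler--Ihara relation bounding Hecke eigenvalues). For the function-field case one instead cites Drinfeld's proof of the Ramanujan conjecture for $\mathrm{GL}_2$ over function fields. In both cases polynomial-time constructibility is immediate: the generators are given by explicit quaternions, so producing the edge list of $\mathrm{Cay}(G_q, S)$ takes time polynomial in $|G_q|$, and the projection onto the desired $(q+1)$-vertex quotient is explicit.

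With the eigenvalue bound in hand, the passage to a Cheeger-number lower bound is routine. Cheeger's inequality for $d$-regular graphs states that $h(G) \geq (d - \lambda_2)/2$ where $\lambda_2$ is the second largest eigenvalue of the adjacency matrix. Plugging in the Ramanujan estimate $\lambda_2 \leq 2\sqrt{d-1}$ yields
\[
h(G) \;\geq\; \frac{d - 2\sqrt{d-1}}{2} \;=\; \frac{d}{2} - \sqrt{d-1},
\]
which is exactly the expansion guarantee claimed. The $d$-regularity is built into the construction (self-loops and parallel edges are allowed, which matters precisely when some element of $S$ is an involution or when two generators coincide modulo $q$).

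The main obstacle, by far, is the Ramanujan eigenvalue bound: everything else (Cayley construction, Cheeger's inequality, polynomial-time implementation) is essentially bookkeeping. Proving $|\lambda| \leq 2\sqrt{d-1}$ from scratch is not realistic in a combinatorics paper, since it requires either Deligne's resolution of Weil's conjectures in the number-field setting of \cite{Lubotzky88}, or Drinfeld's analogue over function fields used in \cite{Morgenstern94}. For the purposes of this paper I would therefore treat the Ramanujan bound as a black box and cite \cite{Lubotzky88, Morgenstern94} for it, devoting the write-up instead to verifying that the constructed Cayley graphs have the correct degree $d = p^k+1$, the correct vertex count $q+1$, and that they can be produced in time polynomial in $q$.
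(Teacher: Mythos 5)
The paper does not prove this theorem: it is quoted from \cite{Lubotzky88, Morgenstern94} and used as a black box (in fact, the paper's constructions in Section~\ref{sec:sparse} rely on the weaker but more flexible Theorem~\ref{thm:expander-factory}). So there is no in-paper proof to compare against; what can be assessed is whether your outline matches the standard derivation in the literature. Its logical skeleton does. The Cheeger step is correct: the easy (lower-bound) direction of the discrete Cheeger inequality for a $d$-regular graph is $h(G) \geq (d - \lambda_2)/2$, and substituting the Ramanujan bound $\lambda_2 \leq 2\sqrt{d-1}$ (Deligne's resolution of Ramanujan--Petersson in the number-field setting of \cite{Lubotzky88}, Drinfeld's in Morgenstern's function-field setting) gives exactly $h(G) \geq \frac{d}{2} - \sqrt{d-1}$. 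Treating the Ramanujan eigenvalue bound as a cited black box is the right call; nobody reproves Deligne in a lower-bounds paper.

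The soft spot is the vertex count. You assert the Cayley graph on $\mathrm{PGL}_2(\mathbb{F}_q)$ or $\mathrm{PSL}_2(\mathbb{F}_q)$ ``has roughly $q+1$ vertices after passing to the appropriate quotient,'' but these groups have $q(q^2-1)$ and $q(q^2-1)/2$ elements respectively, so the standard LPS Cayley graph is cubic in $q$, not linear. To land on $q+1$ vertices one would need a genuinely different object --- for instance a Schreier coset graph on $\mathbb{P}^1(\mathbb{F}_q)$ --- and one would then have to re-establish $d$-regularity and the spectral bound for that graph, since quotients of Ramanujan graphs are not automatically Ramanujan. Because the paper's own statement also says ``$q+1$ vertices'' while citing rather than proving the theorem, you are partly inheriting an imprecision rather than committing an error of your own; but in a standalone write-up you would need either to identify the specific $q+1$-vertex construction and verify its properties, or to weaken the claim to a polynomial-in-$q$ vertex count, which is all that the paper's applications actually use.
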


Throughout this paper, we use expanders of various sizes, hence
the following theorem appears to be useful in our setting,
despite providing weaker bound on the Cheeger number.
%Those that differ in size from the ones generated by the algorithm of Theorem \ref{t.expander} can be obtained, e.g., at the cost of reducing the associated Cheeger number.
%Here, we omit dealing with this technical issue and use $G_{n,d}$ to denote a $d$-regular expander graph on $n$ vertices such that $h(G) \ge \delta d$, where $\delta$ is a constant sufficiently small for expanders of all sizes to be generated by one of the known efficient constructions. 

\begin{theorem}\label{thm:expander-factory}[Theorem 21.19 of \cite{arora-barak}]
Let $p>0$ be a real. Then there exists a positive integer $d$, such that for every positive integer $n$ there exists a $d$-regular multigraph $G_{n,d}$ on $n$ vertices with $h(G_{n,d})\geq p$. Moreover, graph $G_{n,d}$ can be constructed in time polynomial in $n$.
\end{theorem}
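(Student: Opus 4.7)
The plan is to combine a strong explicit expander construction, available on a restricted set of vertex counts, with a simple padding argument that handles arbitrary $n$ at the cost of a constant factor in the Cheeger number. Throughout, we are free to enlarge the degree, provided that $d$ depends only on $p$.

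First I would pick $d_0$ large enough that Theorem~\ref{t.expander} produces $(d_0, h_0)$-expanders with $h_0 = d_0/2 - \sqrt{d_0 - 1} \geq 2p$. This gives, for every prime $q \equiv 1 \pmod 4$, an explicit $d_0$-regular expander $H_q$ on $q + 1$ vertices, constructible in polynomial time. By an effective form of Dirichlet's theorem on primes in arithmetic progressions (a simple trial search up to $\operatorname{poly}(n)$ is enough, since such primes have positive density), for every sufficiently large integer $n$ one can find in polynomial time a prime $q \equiv 1 \pmod 4$ with $q + 1 \in [n/2, n]$; small values of $n$ can be handled separately by taking a complete multigraph with suitable self-loops.

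Next, for given $n$, I would set $N = q + 1$ and $r = n - N \leq N$. I pad $H_q$ with $r$ new vertices $w_1, \ldots, w_r$, pair each $w_i$ with a distinct \emph{anchor} $v_i \in V(H_q)$, and connect $w_i$ to $v_i$ by $d_0$ parallel edges. Finally, I add the appropriate number of self-loops at each vertex so that the resulting multigraph $G_{n,d}$ is $d$-regular with, say, $d = 3 d_0$. This construction is polynomial time by inspection.

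The main obstacle, and the only non-routine step, is verifying that $h(G_{n,d}) \geq p$. Given $X \subseteq V(G_{n,d})$ with $|X| \leq n/2$, let $Y \subseteq V(H_q)$ be the ``projection'' obtained by replacing every $w_i \in X$ by its anchor $v_i$ (and taking the set union). I would split the analysis into two regimes. If $|Y| \leq N/2$, then $|\delta_{H_q}(Y)| \geq 2p|Y|$ by the choice of $d_0$, and all these boundary edges survive in $G_{n,d}$; a short averaging argument over anchors shows $|Y| \geq |X|/2$, which yields $|\delta_{G_{n,d}}(X)| \geq p|X|$. Otherwise $|Y| > N/2$, in which case one applies the same reasoning to the complement, additionally accounting for pairs $(v_i, w_i)$ with exactly one endpoint in $X$: each such mismatch contributes $d_0$ boundary edges through the parallel bundle. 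Summing the two contributions and invoking $h(H_q) \geq 2p$ gives the required bound $|\delta_{G_{n,d}}(X)| \geq p|X|$.
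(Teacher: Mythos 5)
The paper cites this as Theorem~21.19 of Arora--Barak and gives no proof, so there is nothing in the paper to compare against; the only question is whether your argument is correct. Your overall plan --- take an explicit Ramanujan expander $H_q$ on $N = q+1$ vertices with $N \in [n/2, n]$, pad with $r = n - N$ extra vertices $w_1,\dots,w_r$ joined to distinct anchors $v_1,\dots,v_r$ by bundles of $d_0$ parallel edges, and top up to $d$-regularity with self-loops --- is sound, and it is essentially a standard way to fill in the sizes missed by an explicit family. The Cheeger verification, however, has a real gap.

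You claim that for $|X| \le n/2$ the edges in $\delta_{H_q}(Y)$ ``survive'' and hence lower-bound $|\delta_{G_{n,d}}(X)|$, where $Y$ replaces each $w_i \in X$ by its anchor $v_i$. That step is false: an edge $yz \in \delta_{H_q}(Y)$ with $y \in Y$, $z \notin Y$ lies in $\delta_{G_{n,d}}(X)$ only if $y \in X$, but $y$ may lie in $Y \setminus X$ (namely $y = v_i$ for some $w_i \in X$ with $v_i \notin X$). The simplest counterexample is $X = \{w_1\}$: then $Y = \{v_1\}$, $\delta_{H_q}(Y)$ is the set of $d_0$ edges of $H_q$ at $v_1$, and none of these is in $\delta_{G_{n,d}}(X)$; the actual cut of $X$ is the disjoint bundle of $d_0$ parallel edges from $w_1$ to $v_1$. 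The ``averaging'' claim $|Y| \ge |X|/2$ is correct, but the inequality $|\delta_{G_{n,d}}(X)| \ge |\delta_{H_q}(Y)|$ that your first regime rests on does not hold, and the same conflation of $Y$ with $X \cap V(H_q)$ infects the complement argument in the second regime.

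The object to track is $X_H := X \cap V(H_q)$, not $Y$. Then $\delta_{G_{n,d}}(X)$ decomposes exactly as $\delta_{H_q}(X_H)$ together with $d_0$ parallel edges for each \emph{mismatched} index $i$ (those $i$ with exactly one of $v_i, w_i$ in $X$). Writing $m$ for the number of mismatches, every $w_i \in X$ is either mismatched or has its anchor in $X_H$, so $|X| \le 2|X_H| + m$. If $|X_H| \le N/2$ this gives $|\delta_{G_{n,d}}(X)| \ge 2p|X_H| + d_0 m \ge p(2|X_H| + m) \ge p|X|$ at once; if $|X_H| > N/2$ one expands $V(H_q)\setminus X_H$ instead and combines the expansion bound with $|X| \le n/2$ (together with $t \le N-r$ for non-anchors and $s + m_v \le r$ for anchors) to reach the same conclusion. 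The construction you chose is fine; the Cheeger bound needs to be reorganised around $X_H$ and the mismatch count rather than around the projection $Y$, and with that change the case split is on $|X_H|$ rather than on $|Y|$.
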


\subsection{Linear arrangements}

\term{A linear arrangement} of a graph $G=(V,E)$ is a function (a vertex ordering) $\aorder : V \rightarrow \{1, \ldots, n\}$. \term{The cost} of a linear arrangement $\aorder$
is defined by $\sum_{uv \in E} |\aorder(u) - \aorder(v)|$. We call $\aorder$ the optimum linear arrangement if its cost is minimized over all vertex orderings of $G$
and we denote this cost by $\olashort(G)$. 
\begin{comment}
Having a linear arrangement $\aorder$ of $G$, we assume, that the vertices of $G$ are ordered from left to right according to $\aorder$. 
When we speak about the $i$-th vertex (from the left), we mean the vertex mapped to the number $i$ by $\aorder$. 
A set of vertices $U$ is \term{consecutive in $\aorder$}, if $\aorder(U) = \{p, p + 1, \ldots, q - 1, q\}$ for $p, q \in \N$. 
The set of all vertices that are to the left of every vertex from some set $U$ is called \term{vertices to the left of $U$} and denoted by $L(U)$.
Similarly, we define \term{vertices to the right of $U$} and denote them by $R(U)$. 
\term{A block of $U$} is any inclusion-wise maximal non-empty subset of $U$ that is consecutive in $\aorder$. 
\term{The left-most block of $U$} is the block of $U$ whose vertices are mapped to the smallest values by $\aorder$. 
\term{Second left-most block of $U$} is the first block of $U$ to the right of the left-most block of $U$. 
\term{Inner block of $U$} is the set of all vertices from $\vertices[G'] \setminus U$ located simultaneously to the right the left-most block of $U$ and to the left of the second left-most block of $U$ (in the case when $U$ forms a single block, the inner block does not exist). 
\end{comment}

\subsection{Satisfiability}

We employ a standard notation related to \justsat\ problems. 
We use symbols $x_1, \ldots ,x_n$ for the variables of an instance, and $C_1, \ldots, C_m$ for the clauses. 
An $l$-CNF formula is E$l$-CNF, if it has exactly $l$ literals. We say that an assignment of $x_1,...,x_n$ NAE-satisfies an $l$-CNF formula if every clause contains 
a satisfied and an unsatisfied literal.
An $l$-AND is a conjunction of clauses, where each clause is a conjunction of at most $l$ literals.

\subsection{Exponential Time Hypothesis}

The Exponential Time Hypothesis (ETH), introduced by Impagliazzo, Paturi and Zane~\cite{sparsification, ImpagliazzoP01}
is now an established tool used for proving conditional lower bounds in the parameterized complexity area (see~\cite{eth-survey}
for a survey on ETH-based lower bounds). Intuitively, ETH states that \sat[3] cannot be solved in time subexponential in the number of variables. 

%The exact statement together with the Sparsification Lemma is described in Appendix~\ref{sec:eth}.

% Lukas, If you change something, change everything the same way - wasn't sure what to choose
\begin{hypothesis}[Exponential Time Hypothesis (ETH) \cite{sparsification,ImpagliazzoP01}]
There is no $2^{o(n)}$ time randomized algorithm for \sat[3]. 
\end{hypothesis}

\begin{lemma} \label{sparsification} (Sparsification Lemma, \cite{sparsification})
  For every $\epsilon > 0$, there is an algorithm that takes a $3$-CNF formula $\phi$ and returns $l$ $3$-CNF formulas $\phi_1, ..., \phi_l$, such that: i) $l = \mathcal{O}(2^{\epsilon n})$, 
  ii) for every $i$, $\phi_i$ has $n$ variables, and every such variable appears in at most $c_{\epsilon}$ clauses of $\phi_i$, for some constant $c_{\epsilon} \geq 0$,
  iii) $\phi$ is satisfiable if and only if at least one of $\phi_i$ is satisfiable. The running time of the algorithm is $\mathcal{O}^{*}(2^{\epsilon n})$.
  \end{lemma}

% Lukas, I use fails, as it used everywhere - if you want we can change it to does not hold
Consequently, based on ETH we have subexponential hardness of \sat[3]
in terms of both the number of variables and clauses.

\begin{theorem}[\cite{sparsification}]
Unless ETH fails, there is no $2^{o(m + n)}$ time algorithm for \sat[3].
\end{theorem}

\subsection{Gap problems and PCPs}

In a gap version of a problem, the input instance is promised
to belong to one of two languages specifying the allowed input, and the goal is to decide
which case (language) a given instance belongs to.
Gap problems are associated with two parameters $\bounda, \boundb$.
For example, in $\gapmaxcut_{[\bounda,\boundb]}$, we are to distinguish
between the case when a given graph admits a cut of size at least $\boundb m$
and the case when a given graph does not admit a cut of size larger than $\bounda m$.
Similarly for satisfiability problems with gap $[\bounda,\boundb]$
we are to distinguish between a formula, for which at least $\boundb m$ clauses can be satisfied,
and the case where it is impossible to satisfy more than $\bounda m$ clauses.

To introduce gaps in our reductions, we will use the following fundamental result of Dinur.

\todo{give some reference for PCP notation}

\begin{theorem} \label{linear-pcp} (Almost Linear Size PCP, \cite{Dinur07}) 
\sat[3] $\in \textbf{PCP}_{1, \frac{1}{2}} (\log(n) + \mathcal{O}(\log \log(n)), \mathcal{O}(1))$
\end{theorem}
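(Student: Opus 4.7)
The plan is to follow the gap-amplification approach of Dinur, which gives a combinatorial proof of the PCP theorem with nearly-linear proof length. The central object is a \emph{constraint graph}: a graph $G=(V,E)$ together with an alphabet $\Sigma$ of constant size and, for each edge, a constraint on the values assigned to its endpoints. The unsatisfiability value $\mathrm{unsat}(G)$ is the minimum over all assignments $\sigma:V\to\Sigma$ of the fraction of edges violated by $\sigma$. The goal is to show that \sat[3] reduces in near-linear time to distinguishing $\mathrm{unsat}(G)=0$ from $\mathrm{unsat}(G)\geq \alpha_0$ for a universal constant $\alpha_0$, where the reduced instance has size $n\cdot \mathrm{polylog}(n)$; the desired randomness complexity of $\log n + \Oh(\log\log n)$ corresponds to the logarithm of the number of constraints in this final graph.

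First, I would encode a \sat[3] formula with $n$ variables and $m=\Oh(n)$ clauses as a constraint graph with $\Oh(n)$ vertices over a constant alphabet, whose initial unsatisfiability value is either $0$ (if the formula is satisfiable) or at least $1/m$ (otherwise). This is a straightforward local encoding: vertices correspond to variables and clauses, and the edge constraints enforce consistency between occurrences of variables and their containing clauses.

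The main technical ingredient is then the \emph{gap-amplification lemma}: there exist constants $c_1$, $\alpha_0$, and $t$ such that, given a constraint graph $G$ of constant alphabet size, one can construct in linear time a new constraint graph $G'$ with at most $c_1\cdot |V(G)|$ vertices, the same constant alphabet, and $\mathrm{unsat}(G')\geq \min(2\cdot \mathrm{unsat}(G),\alpha_0)$. Iterating this $\Oh(\log m)$ times amplifies the initial gap of $\Omega(1/m)$ to the constant $\alpha_0$. The amplification itself consists of three sub-steps: \emph{preprocessing} makes $G$ into a $d$-regular graph with good expansion, for instance using the expanders provided by Theorem~\ref{t.expander}; \emph{powering} replaces $G$ by a graph whose edges correspond to walks of length $t$ in $G$, with constraints demanding consistency along these walks, so that an expander-mixing analysis shows the unsat value is boosted by a factor of roughly $\sqrt{t}$, at the cost of inflating the alphabet to $\Sigma^{d^{\Oh(t)}}$; finally, \emph{composition with an assignment tester} collapses this large alphabet back to constant size by verifying each inflated constraint with a small PCP-of-proximity using only $\Oh(1)$ queries, losing just a constant factor in the gap.

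The main obstacle is the alphabet-reduction step: it requires an \emph{efficient} assignment tester whose size is polylogarithmic in the alphabet size, which itself has to be built using an algebraic PCP construction based on low-degree polynomials over a small finite field. The near-linear size bound then follows by checking that every amplification round expands the instance by a constant factor, so after $\Oh(\log m)$ rounds the total size stays within $n\cdot\mathrm{polylog}(n)$ provided the assignment tester used in composition has size polylogarithmic in its input; the additional $\Oh(\log\log n)$ term in the randomness complexity of the verifier arises precisely from this polylogarithmic composition overhead.
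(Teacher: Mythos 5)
The paper does not prove Theorem~\ref{linear-pcp} at all; it is imported verbatim from Dinur's work via the citation \cite{Dinur07}, so there is nothing to compare against internally. Still, your sketch as a standalone argument has a genuine gap at the very step where "almost linear size" is supposed to be established, and it is worth spelling out because the near-linear bound is precisely the content that distinguishes this theorem from the ordinary PCP theorem.

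You start from a constraint graph of size $m=\Oh(n)$ with unsatisfiability gap $\Omega(1/m)$, invoke a gap-amplification lemma in which each round multiplies the number of vertices by a constant $c_1>1$, and iterate $\Oh(\log m)$ times. That produces a final instance of size $m\cdot c_1^{\Oh(\log m)} = m^{1+\Oh(\log c_1)}$, i.e.\ polynomial in $m$, corresponding to $\Oh(\log n)$ randomness. Your concluding sentence, "every amplification round expands the instance by a constant factor, so after $\Oh(\log m)$ rounds the total size stays within $n\cdot\mathrm{polylog}(n)$," is simply an arithmetic error: a constant multiplicative blow-up compounded $\Theta(\log m)$ times yields a polynomial, not a polylogarithm. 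The assignment tester's internal size is also not where the $\log\log n$ term comes from — in Dinur's composition the tester is applied to a constraint over a constant-sized (though large) alphabet, so its contribution is a fixed constant independent of $n$. The way Dinur actually gets randomness $\log n + \Oh(\log\log n)$ is to \emph{not} start gap amplification from a trivial $1/m$-gap instance: one first invokes an algebraic, near-linear-size PCP (Ben-Sasson--Sudan style) producing a constraint graph of size $n\cdot\mathrm{polylog}(n)$ with gap already $1/\mathrm{polylog}(n)$, and then runs gap amplification for only $\Oh(\log\log n)$ rounds. Then $c_1^{\Oh(\log\log n)}=\mathrm{polylog}(n)$, which is what keeps the final size nearly linear. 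Your sketch omits this alternate starting point and the shortened iteration count, so as written it proves the standard (polynomial-size) PCP theorem, not the near-linear variant claimed. A minor further omission: the constant gap $\alpha_0$ gives soundness $1-\alpha_0$, and reaching soundness $\frac12$ needs a constant number of additional repetitions (reusing randomness, e.g.\ via an expander walk) to avoid inflating the randomness complexity.
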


\section{Combining known reductions for \olashort with PCPs and ETH}
\label{sec:ola-eth}
%{No $2^{\mathcal{O}(\frac{n}{\log^c(n)})}$ algorithm for \olashort}

%Let us recall the problem definition of \olashort. 
%Given a graph $G=(V,E)$, \term{a linear arrangement} of $G$ is a vertex ordering $\aorder : V \rightarrow \{1, \ldots, n\}$ and \term{its cost}
%is defined by $\sum_{uv \in E} |\aorder(u) - \aorder(v)|$. We call $\aorder$ the optimum linear arrangement if its cost is minimized over all vertex orderings 
%and we denote this value by $\olashort(G)$. 
%In \ola, we are given an undirected graph $G=(V,E)$, together with an integer $k$, and one is to decide whether there exist a linear arrangement $\aorder$ of $G$ of cost at most $k$.

In this section we discuss the proof of Theorem~\ref{thm:eth-ola},
i.e., explain how to show that unless ETH fails, 
  there is no $2^{\mathcal{O}(\frac{n}{\log^c(n)})}$-time algorithm for \ola, for some constant $c \in \N$.
We combine several well known concepts from the complexity theory: the ETH hypothesis, the PCP theorem, the Sparsification Lemma, gap reductions, and gap amplification. 

First, we show that the combination of the ETH hypothesis,
together with the almost linear size PCP theorem, and the Sparsification Lemma,
gives subexponential hardness of \gapesat[3]. This fact was already observed and used by Marx~\cite{Marx07a}.

\begin{theorem}[see also Lemma 2.2 of~\cite{Marx07a}]
\label{thm:hard-gapsat}
Unless ETH fails, there exist $c \in \N$, and $r \in (0,1)$ such that there is no $2^{\mathcal{O}(\frac{m}{\log^c(m)})}$-time algorithm for \gapesat[3]$_{[r, 1]}$.
\end{theorem}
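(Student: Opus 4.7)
\medskip

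\noindent\textbf{Proof proposal for Theorem~\ref{thm:hard-gapsat}.}
The plan is to chain three reductions: the Sparsification Lemma to control the clause-to-variable ratio, Dinur's almost linear PCP (Theorem~\ref{linear-pcp}) to introduce a constant-factor gap, and a contradiction with ETH via a running-time bookkeeping argument.

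Concretely, suppose for contradiction that for every $c \in \N$ and every $r \in (0,1)$ there is an algorithm $\mathcal{A}_{c,r}$ solving \gapesat[3]$_{[r,1]}$ in time $2^{\Oh(m/\log^c m)}$. I start with an arbitrary \sat[3] instance $\phi$ on $n$ variables. First, I fix an arbitrarily small $\epsilon > 0$ and apply the Sparsification Lemma (Lemma~\ref{sparsification}) to $\phi$, producing at most $2^{\epsilon n}$ formulas $\phi_1, \ldots, \phi_\ell$, each on $n$ variables in which every variable appears in at most $c_\epsilon$ clauses, so in particular each $\phi_i$ has $m_i = \Oh(n)$ clauses; $\phi$ is satisfiable iff at least one $\phi_i$ is.

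Second, I apply Theorem~\ref{linear-pcp} to each $\phi_i$. Dinur's verifier uses $\log m_i + \Oh(\log\log m_i)$ random bits and reads $\Oh(1)$ proof positions, with completeness $1$ and soundness $\tfrac{1}{2}$. The standard reduction from a PCP system to a CSP (and then, since each constraint has constant arity, to \sat[3] by expanding each constraint into $\Oh(1)$ clauses) therefore produces an \sat[3] instance $\phi'_i$ with $m'_i = \Oh(n \cdot \log^{c_1} n)$ clauses for some absolute constant $c_1$, satisfying the following gap: if $\phi_i$ is satisfiable then $\phi'_i$ is satisfiable, while if $\phi_i$ is not satisfiable then no assignment satisfies more than an $r$-fraction of the clauses of $\phi'_i$, for some constant $r \in (0,1)$ depending only on Dinur's soundness $\tfrac{1}{2}$ and the clause-expansion factor.

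Third, I choose $c = c_1 + 1$ in the assumed algorithm $\mathcal{A}_{c,r}$. Running it on each $\phi'_i$ takes time
\[
2^{\Oh(m'_i / \log^{c} m'_i)} \;=\; 2^{\Oh\!\left(n \log^{c_1} n \,/\, \log^{c_1+1} n\right)} \;=\; 2^{\Oh(n/\log n)} \;=\; 2^{o(n)},
\]
using that $\log m'_i = \Theta(\log n)$. Summing over all $\ell \leq 2^{\epsilon n}$ sparsified instances, the overall running time to decide $\phi$ is $2^{\epsilon n} \cdot 2^{o(n)}$. Since $\epsilon$ was chosen arbitrarily and independently of the assumed algorithm, we obtain for every $\epsilon>0$ an algorithm for \sat[3] running in time $2^{(\epsilon + o(1))n}$, contradicting ETH.

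The routine work is standard: the only slightly subtle steps are confirming that the PCP-to-\sat[3] encoding loses only a constant factor in the number of clauses (so $m'_i$ really is $n \cdot \mathrm{polylog}(n)$) and that the combined gap parameter $r$ remains a fixed constant strictly below $1$ independent of $n$. The main conceptual point, which is the only place where one must be careful, is the matching of exponents: one must commit to the constant $c$ in the hypothetical algorithm \emph{after} fixing the exponent $c_1$ produced by Dinur's construction, so that the $\log^{c-c_1} n$ denominator dominates in the arithmetic above.
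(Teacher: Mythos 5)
Your proposal is correct and takes essentially the same route as the paper's proof: sparsify via Lemma~\ref{sparsification}, apply Dinur's PCP (Theorem~\ref{linear-pcp}) to each sparsified instance to obtain an E3-SAT instance of size $\Oh(n\log^A n)$ with a constant gap $r = 1 - \Theta(1)$, then take $c$ one larger than the polylog exponent coming from the PCP randomness so that $m'/\log^c m' = \Oh(n/\log n)$, yielding a $2^{\epsilon n + o(n)}$ algorithm for 3-SAT for arbitrary $\epsilon > 0$. The only difference is cosmetic: the paper writes out the PCP-to-E3-CNF encoding explicitly (introducing variables for proof bits, functions $f_r$, CNF expansions $F_r$ of size at most $C$, and fixing $c = A+1$, $r = 1-\tfrac{1}{2C}$), whereas you defer to the standard PCP-to-CSP-to-3SAT translation; and you phrase the argument by negating the statement rather than directly exhibiting $c,r$, which is logically equivalent.
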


\begin{proof} 
 From Theorem \ref{linear-pcp}, it follows that there is a $\textbf{PCP}_{1, \frac{1}{2}} (\log(n) + A \log \log(n), B)$ verifier $V$ for $3$-SAT, for some constants $A,B > 0$.\\
 
 Assume we are given the input E3-CNF formula $\phi$. Fix any $\epsilon >0$ and apply Lemma \ref{sparsification} to $\phi$, obtaining $l = \mathcal{O}(2^{\epsilon n})$ instances $\phi_i$, $i=1,...,l$,
 each with size bounded by $cn$, for some constant $c > 0$, depending on $\epsilon$. Let $R = \log(c n) + A \log \log(c n)$. \\
 
 If we take $\phi_i$ as the input to the verifier $V$, then for each random binary string of length at most $R$, it reads at most $B$ bits of the proof
 (we will assume without loss of generality that $V$ always reads exactly $B$ bits).
 Thus, $V$ has access to $N$ bits of the proof in total, where $N \leq B \cdot 2^{R} = B \cdot c n \log^A(c n).$
 We create variables $x_1, ... , x_N$ corresponding to each of these bits. For a boolean string $r$ of length $R$, we define a boolean function $f_r : \{0,1\}^B \rightarrow \{0,1\}$, 
 with its arguments corresponding to relevant variables $x_{i_1}, ... ,x_{i_B}$ that can be read from the proof. $f_r$ is defined as follows, for every $S \in \{0,1\}^B$
 it evaluates to true on $S$ if and only if $V$ accepts $\phi_i$ using $S$ as the bits from the proof.
 Every $f_r$ can be written as an equivalent E$3$-CNF formula $F_r$ (possibly adding some constant number of variables). 
 Let $C$ be an upper bound (depending on $B$) for the number of clauses for each such formula. Let $\phi'_i$ be a formula combined by AND 
 of all $F_r$. It has the number of clauses bounded by $C \cdot 2^R$ and the following holds.  \\
 
 If $\phi_i$ is satisfiable, then there is a proof for which $V$ always accepts.
 By taking the corresponding values from this proof and setting them to $x_1, ... ,x_N$,
 %and adjusting appropriately other variables 
 we obtain a satisfying assignment of $\phi'$.\\
 
 For the proof in the other direction, if $\phi_i$ is not satisfiable, 
 then for any proof for at least half of all binary strings $r$ the verifier $V$ rejects $\phi_i$. 
 Consider an arbitrary proof and set the variables $x_i$ to the corresponding values from this proof. 
 Observe, that for the binary strings $r$ for which $V$ rejects $\phi_i$
 at least one clause in $F_r$ at least one clause is false.
 Consequently, at least $\frac{1}{2} \cdot 2^R$ clauses of $\phi_i'$ are false,
 i.e., a fraction of at least $\frac{1}{2C}$ of the total number of clauses of $\phi'_i$.\\ 
 
 Finally, let us put $c = A+1$, and $r = 1 - \frac{1}{2C}$ and suppose there is a $\mathcal{O}^{*}(2^{\mathcal{O}(\frac{m}{\log^{A+1}(m)})})$-time algorithm for \gapesat[3]$_{[1-\frac{1}{2C}, 1]}$.
 Using this algorithm, we can check whether $\phi_i$ is satisfiable in time $2^{o(n)}$ and 
 therefore check whether the initial formula $\phi$ is satisfiable in time $\mathcal{O}^{*}(2^{\epsilon n} + 2^{\epsilon n} \cdot 2^{o(n)}) = \mathcal{O}^{*}(2^{\epsilon n})$. 
 As $\epsilon$ is an arbitrary positive number, we obtain a contradiction with the ETH.
\end{proof}

Next, we inspect the chain of three textbook NP-hardness reductions~\cite{schaefer1978complexity}, that starts with \gapesat[3] and ends with \gapmaxcut. 
All of them produce an output instance of linear size in terms of the size of the input instance and at the same time preserve gaps.
Combining them with the previous theorem, we will establish (almost) subexponential hardness of \gapmaxcut.

\begin{theorem}
\label{thm:hard-gapmaxcut}
 $$\gapesat[3]_{[\bounda, \boundb]} \linleq \gapenaesat[4]_{[\bounda, \boundb]} \linleq $$
 $$\gapenaesat[3]_{[\frac{1 + \bounda}{2}, \frac{1 + \boundb}{2}]} \linleq \gapmaxcut_{[\frac{16 + \bounda}{18}, \frac{16 + \boundb}{18}]}$$
\end{theorem}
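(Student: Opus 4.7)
The plan is to carry out three classical Schaefer-style reductions in sequence, each producing an instance of linear size and tracking exactly how many clauses (resp.\ cut edges) can be satisfied by each assignment. What needs attention is that the gap parameters propagate through them in a clean linear fashion.

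\textbf{The two satisfiability reductions.} For $\gapesat[3] \linleq \gapenaesat[4]$, introduce a fresh global variable $z$ and append it to every clause, turning $(l_1 \lor l_2 \lor l_3)$ into $(l_1, l_2, l_3, z)$. Since NAE-satisfaction is invariant under a global flip, I may assume $z = \mathrm{false}$; then the clause is NAE-satisfied iff at least one of $l_1, l_2, l_3$ is true, matching the 3-SAT condition exactly, and the gap is preserved verbatim. For $\gapenaesat[4] \linleq \gapenaesat[3]$, introduce a private variable $y_j$ per input clause $C_j = (l_1, l_2, l_3, l_4)$ and replace $C_j$ by the pair $(l_1, l_2, y_j), (\bar y_j, l_3, l_4)$. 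A short case analysis shows that a NAE-satisfied $C_j$ admits a choice of $y_j$ making both output clauses NAE-satisfied, while an all-equal $C_j$ leads to exactly one output clause being NAE-satisfied regardless of $y_j$. Thus $s$ NAE-satisfied $4$-clauses extend optimally to $s + m$ NAE-satisfied $3$-clauses out of $2m$, so the map $s/m \mapsto (1+s/m)/2$ sends the gap $[\alpha, \beta]$ to $\bigl[(1+\alpha)/2, (1+\beta)/2\bigr]$. Both reductions are linear in the input size.

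\textbf{The reduction to $\gapmaxcut$.} Introduce two vertices $x_i$ and $\bar x_i$ per variable, and for every input $3$-clause $(l_1, l_2, l_3)$ add a triangle on its three literal-vertices (three edges per clause). Additionally, for every $i$ place a multiplicity $k_i$ of ``consistency'' edges between $x_i$ and $\bar x_i$ --- parallel edges in the multigraph setting, or an equivalent simple-graph gadget --- chosen so that $k_i > 2 d_i$ at every variable, where $d_i$ is the number of clauses containing $x_i$ or $\bar x_i$, and such that $\sum_i k_i = 15 m$. This is feasible for any input since $\sum_i d_i \leq 3m$ implies the minimum feasible total $\sum_i (2 d_i + 1) \leq 9m$, leaving slack to pad up to $15m$. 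The inequality $k_i > 2 d_i$ forces every optimum cut to place $x_i$ and $\bar x_i$ on opposite sides: flipping either of them when they are on the same side gains $k_i$ consistency edges while losing at most $2 d_i$ triangle edges. In any such consistent cut, each NAE-satisfied triangle contributes exactly $2$ cut edges and each unsatisfied one contributes $0$, so the maximum cut has size $15m + 2s$ out of $3m + 15m = 18m$ total edges, i.e., the ratio $(15 + 2 s/m)/18$. Substituting the input bound $s/m \in \bigl\{(1+\alpha)/2, (1+\beta)/2\bigr\}$ yields exactly $(16+\alpha)/18$ and $(16+\beta)/18$, as required.

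\textbf{Main obstacle.} The only delicate point is calibrating the consistency multiplicity to the exact value $15m$, which is what makes the output ratio take the prescribed linear form in $\alpha$: writing $V = \mu m$, the ratio at $s/m = (1+\alpha)/2$ equals $(\mu + 1 + \alpha)/(3 + \mu)$, which matches $(16+\alpha)/18$ uniformly in $\alpha$ iff $\mu = 15$. This fixes the output to $18m$ edges, guaranteeing a linear reduction; the compatibility of the local constraints $k_i > 2 d_i$ with the global budget $\sum_i k_i = 15m$ follows from the elementary bound $\sum_i (2 d_i + 1) \leq 9m < 15m$.
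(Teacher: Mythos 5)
Your first two steps coincide exactly with the paper's Lemmas~\ref{lem:3sat-nae4sat} and~\ref{lem:nae4sat-nae3sat} and are correct. The third step is where your route diverges from the paper's and where a genuine gap appears.

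The paper reduces $\gapenaesat[3]$ to \gapmmaxcut (Lemma~\ref{lem:nae3sat-mmaxcut}, ratio $(4+\alpha)/6$ after composing with the earlier gap) and then converts the multigraph to a simple graph by replacing \emph{every} edge with a length-$3$ path (Lemma~\ref{lem:mmaxcut-maxcut}, which maps a ratio $c$ to $(2+c)/3$). The composition then yields $(16+\alpha)/18$. You instead over-provision the consistency multiplicities ($\sum_i k_i = 15m$ instead of the paper's $\sum_i n_i = 3m$) so that the multigraph itself already has ratio $(16+\alpha)/18$ --- a cleaner-looking one-step construction, and the multigraph arithmetic is indeed correct. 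But $\gapmaxcut$ in this paper is \maxcut on simple graphs (the definition in the appendix, and the downstream use in Theorem~\ref{thm:red-maxcut-ola} which takes the \emph{complement} of the graph, makes this essential), and your phrase ``or an equivalent simple-graph gadget'' conceals the problem rather than solving it.

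Concretely: the only natural ``force-$1$'' simple-graph replacement for a parallel edge is a length-$3$ path, which costs $3$ edges per unit of force. If you replace your $k_i$ parallel edges by $k_i'$ such paths, the flipping argument still needs $k_i' > 2d_i$, hence $\sum_i k_i' \geq \sum_i(2d_i+1) = 6m+n > 5m$, while the path-gadget version of your ratio calculation forces $\sum_i k_i' = 5m$ (so that the $3\sum_i k_i' = 15m$ consistency edges sit inside $3m + 15m = 18m$ total edges). These two requirements are incompatible, so the calibration to $15m$ does not survive the conversion to a simple graph. The paper sidesteps this by keeping the consistency multiplicity small (just $n_i$, using the sharper observation that \emph{at least one} of $x_i, \bar x_i$ can always be flipped without loss, rather than insisting that both flips strictly help), and by applying the path gadget uniformly to \emph{all} edges, triangle and consistency alike, so that the ratio map $c\mapsto(2+c)/3$ is applied globally and uniformly. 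To repair your proof you would need to either adopt that two-phase structure, or exhibit a simple-graph gadget with strictly better force-to-edge ratio than a length-$3$ path --- which does not exist, since any simple gadget joining two vertices has at least one internal vertex and hence at least two edges, and force $1$ needs at least three.
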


\begin{proof}

For simplicity, the proof of Theorem~\ref{thm:hard-gapmaxcut} is split 
into the following four separate lemmas.

\begin{lemma}\label{lem:3sat-nae4sat}
 $$\gapesat[3]_{[\bounda, \boundb]} \linleq \gapenaesat[4]_{[\bounda, \boundb]}$$
\end{lemma}

\begin{proof} 
Given an E3-CNF formula $\phi = C_1 \wedge \ldots \wedge C_m$, 
we create a new E4-CNF formula $\phi' = C_1' \wedge \ldots \wedge C_m'$
by adding a new variable $z$ to every clause.
That is, if $C_i = l_1 \vee l_2 \vee l_3$, 
then we set $C'_i = l_1 \vee l_2 \vee l_3 \vee z$.

If some assignment satisfies $k$ clauses of $\phi$, then by additionally setting $z = 0$
the corresponding $k$ clauses of $\phi'$ are NAE-satisfied.
In the other direction, if some assignment $\varphi$ NAE-satisfies $k$ clauses of $\phi'$, then 
its negation $\tilde{\varphi}$ also NAE-satisfies $k$ clauses of $\phi'$.
W.l.o.g. assume that $\varphi$ sets $z$ to false,
which means that $\varphi$ restricted to the variables of $\phi$ satisfies $k$ clauses of $\phi$.
Consequently we obtain a gap preserving reduction and the theorem follows.
\end{proof}

\begin{lemma}\label{lem:nae4sat-nae3sat}
 $$\gapenaesat[4]_{[\bounda, \boundb]} \linleq \gapenaesat[3]_{[\frac{1 + \bounda}{2}, \frac{1 + \boundb}{2}]}$$
\end{lemma}

\begin{proof}
 
Given an E4-CNF formula $\phi = C_1 \wedge ... \wedge C_m$, we add $m$ new variables $z_1, ... ,z_m$, and replace every clause $C_i = l_1 \vee l_2 \vee l_3 \vee l_4$
with the following two clauses: $C'_i = l_1 \vee l_2 \vee z_i$, $C''_i = l_3 \vee l_4 \vee \neg z_i$, obtaining an E3-CNF formula $\phi'$. 
In the following we show that one can NAE-satisfy at least $k$ clauses of $\phi$
if and only if one can NAE-satisfy at least $m+k$ clauses of $\phi'$,
which suffices to prove the theorem.

First, observe that if an assignment NAE-satisfies $C_i$, then by setting
$z_i$ appropriately both $C'_i$ and $C''_i$ become NAE-satisfied.
On the other hand if an assignment does not NAE-satisfy $C_i$,
then setting $z_i$ to an arbitrary value
NAE-satisfies exactly one clause out of $C_i'$ and $C_i''$.
Consequently if there is an assignment, which NAE-satisfies $k$ clauses of $\phi$,
then it can be extended to an assignment of variables of $\phi'$, which NAE-satisfies
$m+k$ variables of $\phi'$.

Let us assume that there is an assignment $\varphi$, which NAE-satisfies at least $m+k$ clauses of $\phi'$.
Note that there is a set $I$ of at least $k$ indices $i$, 
such that $\varphi$ NAE-satisfies both $C_i'$ and $C_i''$.
Consider a fixed $i \in I$ and w.l.o.g. assume that $\varphi$ sets $z_i$ to false.
As $\varphi$ NAE-satisfies both $C_i'$ and $C_i''$, we infer that 
$\varphi$ sets at least one of the literals $l_1$, $l_2$ to true,
and at least one of the literals $l_3, l_4$ to false.
Consequently $\varphi$ restricted to the variables of $\phi$ NAE-satisfies
all the clauses $C_i$ for $i \in I$, which finishes the proof of the theorem as $|I| \ge k$.
\end{proof}

\begin{lemma}\label{lem:nae3sat-mmaxcut}
$$ \gapenaesat[3]_{[\bounda, \boundb]} \linleq \gapmmaxcut_{[\frac{3 + 2 \bounda}{6}, \frac{3 + 2 \boundb}{6}]}$$
\end{lemma}

\begin{proof}
 Given an E3-CNF formula $\phi = C_1 \wedge ... \wedge C_m$, let $n_i$ be the number of 
 occurrences of $x_i$ in $\phi$, both in the positive and negative form.
 We construct a multigraph $G$ with $2n$ vertices, and $6m$ edges as follows.
 For each variable $x_i$, we create two vertices corresponding to literals $x_i$, $\neg x_i$, and add exactly $n_i$ edges between them. 
 Finally, for every clause $C_i$ we add a triangle connecting
 the vertices corresponding to its literals.
 We will prove that there is an assignment which NAE-satisfies at least $k$ clauses of $\phi$ 
 if and only if $G$ admits a cut of size at least $3m+2k$.

 Let us assume that there is an assignment which NAE-satisfies $k$ clauses of $\phi$.
 We put vertices of $G$, that correspond to literals evaluated to true,
 on one side of the cut, and remaining literals to the other side of the cut.
 Every edge connecting $x_i$ and $\neg x_i$ is clearly in the cut, 
 and all of them contribute $\sum_{i} n_i = 3m$ to the cut. 
 Furthermore, every triangle corresponding to a NAE-satisfied clause has exactly two edges cut, all of them contribute $2k$ to the cut. 
 Thus, we obtain the cut of size at least $3m+2k$.
 
 Let us assume that $G$ has a cut of size at least $3m+2k$. First, suppose that $x_i$ and $\neg x_i$ are on the same side of the cut for some variable $x_i$.
 They contribute at most $n_i$ edges to the cut,
 and we can move one of the vertices $x_i$ or $\neg x_i$ to the other side of the cut
 without decreasing the number of edges in the cut.
 Thus, we can assume that variables are separated from their negations. The edges connecting $x_i$ and $\neg x_i$ contribute exactly $3m$ to the cut. 
 If a clause triangle is cut $2$ times, then it corresponds to a NAE-satisfied clause,
 otherwise the clause is not NAE-satisfied as all its literals have the same value.
 Thus, we deduce that the considered cut corresponds to an assignment
 which NAE-satisfies $k$ clauses of $\phi$.
\end{proof}

\begin{lemma}\label{lem:mmaxcut-maxcut}
$$\gapmmaxcut_{[\bounda, \boundb]} \linleq \gapmaxcut_{[\frac{2 + \bounda}{3}, \frac{2 + \boundb}{3}]}$$
\end{lemma}

\begin{proof}
 Given a multigraph $G$, for every edge $e = uv$ of $G$, we create two auxiliary vertices $w_e,z_e$, add three edges $uw_e$, $w_ez_e$, $z_e v$, and remove the edge $uv$, obtaining a resulting 
 graph $G'$. We claim that $G$ has a cut of size at least $k$ if and only if $G'$ has a cut of size at least $2m+k$. 
 
 Let $C$ be a cut of $G$ of size $k$. If $e = uv \not \in C$, we put
 $w_e$ and $z_e$ to the side of the cut opposite to $u$.
 This way two of the new edges, i.e., $uw_e$ and $z_ev$, are in the cut. 
 If $e = uv \in C$, then putting $w_e$, $z_e$ to the opposite sides of $u,v$,
 respectively, gives three edges in the cut. All together we obtain a cut of $G'$
 of size $2m+k$. 
 
 Let $C$ be a cut of $G'$ of size $2m+k$.
 Clearly, there are at least $k$ edges $uv$ of the original graph $G$,
 such that all the three edges $uw_e$, $w_e z_e$, and $z_e v$
 are in the cut $C$.
 Note that for such an edge $uv$ the vertex $w_e$ is on the side opposite to $u$,
 $z_e$ in on the side opposite to $w_e$ and $v$ is on side opposite to $z_e$,
 which means that $v$ in on the side of the cut opposite to $u$.
 Consequently a restriction of $C$ to vertices of $G$ gives a cut of size $k$.
\end{proof}

%By combining the above two theorems we obtain a transition of the gap from ${[\bounda, \boundb]}$ to ${[\frac{16 + \bounda}{18}, \frac{16 + \boundb}{18}]}$. 

Lemmas~\ref{lem:3sat-nae4sat},~\ref{lem:nae4sat-nae3sat},~\ref{lem:nae3sat-mmaxcut}, and~\ref{lem:mmaxcut-maxcut} together prove Theorem~\ref{thm:hard-gapmaxcut}.
\end{proof}

Thus, we infer:

\begin{theorem} \label{thm:eth-maxcut} 
Unless ETH fails, there exist $c \in \N$, and $0 \leq \bounda < \boundb \leq 1$ such that there is no $2^{\mathcal{O}(\frac{m}{\log^c(m)})}$-time algorithm for \gapmaxcut$_{[\bounda,\boundb]}$.
\end{theorem}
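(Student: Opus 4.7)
The plan is to combine Theorem~\ref{thm:hard-gapsat} with the chain of gap-preserving linear reductions established in Theorem~\ref{thm:hard-gapmaxcut}, thus transferring the subexponential lower bound from \gapesat[3] to \gapmaxcut. Let $c \in \N$ and $r \in (0,1)$ be the constants given by Theorem~\ref{thm:hard-gapsat}, so that assuming ETH there is no $2^{\Oh(m/\log^{c} m)}$-time algorithm for \gapesat[3]$_{[r,1]}$. Instantiating Theorem~\ref{thm:hard-gapmaxcut} with $\bounda = r$ and $\boundb = 1$, I obtain a polynomial-time $\linleq$-reduction chain
$$\gapesat[3]_{[r,1]} \linleq \gapenaesat[4]_{[r,1]} \linleq \gapenaesat[3]_{[\tfrac{1+r}{2},\,1]} \linleq \gapmaxcut_{[\tfrac{16+r}{18},\,\tfrac{17}{18}]}.$$
Set $\bounda' = \tfrac{16+r}{18}$ and $\boundb' = \tfrac{17}{18}$; since $r < 1$, we indeed have $0 \leq \bounda' < \boundb' \leq 1$, so the target gap is nontrivial.

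Next, I would verify that the composition of these $\linleq$-reductions produces a \maxcut instance whose number of edges $m'$ is linear in the number of clauses $m$ of the input \gapesat[3] instance; a direct inspection of Lemmas~\ref{lem:3sat-nae4sat}, \ref{lem:nae4sat-nae3sat}, \ref{lem:nae3sat-mmaxcut}, and \ref{lem:mmaxcut-maxcut} shows that each step increases the instance size by at most a constant factor (a constant number of new variables, clauses, vertices, or edges is introduced per input clause/edge). Hence $m' = \Theta(m)$, and consequently $\log m' = \Theta(\log m)$ for large enough instances.

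Suppose now for contradiction that there is an algorithm solving \gapmaxcut$_{[\bounda', \boundb']}$ in time $2^{\Oh(m'/\log^{c} m')}$. Running the reduction chain and then this algorithm solves \gapesat[3]$_{[r,1]}$ in time polynomial in $m$ plus $2^{\Oh(m'/\log^{c} m')} = 2^{\Oh(m/\log^{c} m)}$, contradicting Theorem~\ref{thm:hard-gapsat}. Therefore, with the same constant $c$, and with the gap endpoints $\bounda = \bounda'$ and $\boundb = \boundb'$, the claimed subexponential lower bound for \gapmaxcut holds under ETH. The only mild care needed is the bookkeeping of gap endpoints and the verification that every reduction in the chain is genuinely linear in the natural size parameter (number of clauses for \justsat variants, number of edges for cut problems); both are routine and already present in the lemmas cited.
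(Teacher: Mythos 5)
Your proposal is correct and follows exactly the route the paper intends: the paper derives Theorem~\ref{thm:eth-maxcut} with the laconic "Thus, we infer" immediately after Theorem~\ref{thm:hard-gapmaxcut}, meaning precisely the composition of Theorem~\ref{thm:hard-gapsat} with the $\linleq$-chain and the observation that the instance size remains linear so the $2^{\Oh(m/\log^c m)}$ bound transfers. Your explicit computation of the resulting gap endpoints $\bigl[\tfrac{16+r}{18},\tfrac{17}{18}\bigr]$ and the check $\log m' = \Theta(\log m)$ are exactly the bookkeeping the paper leaves implicit.
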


%Then, we show the following reductions
%that prove the hardness of \olashort:
%\begin{center}
%$\gapesat[3] \linleq \gapenaesat[4] \linleq \gapenaesat[3] \linleq$\\  
%$ \gapmaxcut \linleq \olashort$ 
%\end{center}
%The first three reductions are known from the literature \cite{6,7} and they are linear, i.e. they produce instances with the size linear in terms of the size of the input instances. 
%They establish the subexponential hardness of \gapmaxcut. For~self-containment and completeness we will reprove them in Appendix \ref{appendix}. 
%\todo{I'd omit it completely. Either it is well-known and can be cited or we should say more about it. | from Pawel: It is well known, I have not found the references and 
%moreover there are many versions of those almost the same reductions, including non-linear - let's decide what to do} 

Finally, we modify the reduction by Garey et al. \cite{GareyJS76} from \maxcut to \olashort.
The construction from~\cite{GareyJS76} introduces a huge clique to the complement of the \maxcut instance; this clique is supposed to separate in the ordering the sides of an optimum max-cut solution. 
The clique has to be large enough so that the cost of any edge going over the clique eclipses the ``noise'' contribution that is given by internal ordering of parts on the left and on the right. 
For this reason the clique is chosen to be of size $\Theta(n^4)$, which imposes
a significant blow-up in the instance size and immediately prevents from obtaining a desired subexponential hardness result.

This is precisely the moment where we exploit that we are working with an instance
of \maxcut with a gap, as starting with a gap instance,
we can accommodate the noise in the gap, and therefore we only need the clique to be of linear size.
Note that in the following theorem the number of vertices produced is linear,
however the produced instance might be dense, hence we cannot use the notation $\linleq$
and formulate the reduction properties explicitly.

\begin{theorem} \label{thm:red-maxcut-ola}
There is a polynomial time algorithm,
which given an instance $I$ of $\gapmaxcut_{[\bounda, \boundb]}$ with $n$ vertices
produces an instance $I'$ of $\olashort$ with $\mathcal{O}(n)$ vertices,
such that if $I$ admits a cut with at least $\beta m$ edges, then $I'$ is a yes-instance,
and if $I$ does not have a cut with at least $\alpha m$ edges, then $I'$ is a no-instance.
\end{theorem}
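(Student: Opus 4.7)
The plan is to revisit the classical reduction of Garey, Johnson and Stockmeyer~\cite{GareyJS76} from \maxcut to \olashort, and replace their clique of size $\Theta(n^4)$ by one of linear size, exploiting that \maxcut now comes with a constant multiplicative gap. Given a $\gapmaxcut_{[\alpha,\beta]}$ instance $G=(V,E)$ with $|V|=n$ and $|E|=m$, I would introduce a set $K$ of $N := \lceil n/(\beta-\alpha)\rceil + 1$ fresh vertices and form $G' = (V',E')$ by $V' := V \cup K$ and $E' := \binom{V'}{2} \setminus E(G)$. Equivalently, $G'$ is the complement of $G$ viewed as a graph on $V'$, so $K$ becomes a clique fully joined to $V$. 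Clearly $|V'| = O(n)$ as required.

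The crucial observation is the identity
\begin{equation*}
\mathrm{cost}(\pi) \;=\; \binom{|V'|+1}{3} \;-\; \sum_{uv \in E(G)} |\pi(u)-\pi(v)|,
\end{equation*}
valid for every linear arrangement $\pi$ of $V'$, simply because $\sum_{\{u,v\}\in\binom{V'}{2}}|\pi(u)-\pi(v)|$ equals the constant $\binom{|V'|+1}{3}$. Thus minimizing $\mathrm{cost}(\pi)$ is equivalent to \emph{maximizing} $S(\pi) := \sum_{uv \in E(G)} |\pi(u)-\pi(v)|$. Writing each distance $|\pi(u)-\pi(v)|$ as the number of $V$-vertices plus the number of $K$-vertices strictly between $u$ and $v$ and double-counting yields
\begin{equation*}
S(\pi) \;=\; \sum_{uv \in E(G)} |\pi'(u)-\pi'(v)| \;+\; \sum_{w \in K} e_G(A_w, B_w),
\end{equation*}
where $\pi'$ is $\pi$ restricted to $V$ and $(A_w,B_w)$ is the partition of $V$ into vertices placed before and after $w$ under $\pi$.

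Fixing the threshold $T := \binom{|V'|+1}{3} - N\beta m$, I would then argue both directions. In the yes-case, pick a cut $(A^*,B^*)$ of $G$ with $e_G(A^*,B^*) \ge \beta m$ and arrange $V'$ as $A^* \mid K \mid B^*$; every $w \in K$ has $A_w = A^*$, so $S(\pi) \ge N \beta m$ and hence $\mathrm{cost}(\pi) \le T$. In the no-case, every partition of $V$ has cut at most $\alpha m$, so for every $\pi$ we have $\sum_{w \in K} e_G(A_w,B_w) \le N\alpha m$; together with the trivial bound $\sum_{uv \in E(G)} |\pi'(u)-\pi'(v)| \le (n-1) m$, this yields $\mathrm{cost}(\pi) \ge \binom{|V'|+1}{3} - N\alpha m - (n-1) m > T$ for every $\pi$, provided $N(\beta-\alpha) > n-1$, which our choice of $N$ ensures.

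The main obstacle I anticipate is the initial conceptual leap: noticing the complement identity that recasts $\olashort$ on $G'$ as a \emph{maximum} linear arrangement of the sparse graph $G$ inside an extended set of $|V'|$ slots. This is what allows one to bypass the delicate ``clique-dominates-noise'' argument of~\cite{GareyJS76} and, in particular, avoid ever having to prove that $K$ must sit consecutively in the middle of any optimum arrangement. Once this step is taken, the double-counting formula for $S(\pi)$ and the calibration of $N$ become purely arithmetic, and yield the desired reduction with a linear blow-up in the number of vertices.
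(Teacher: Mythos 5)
Your proof is correct, and it takes a genuinely different route from the paper's. Both start from the same construction (complement of $G$ together with a linear-size clique $K$ joined to everything) and the same observation that, since $\overline{G}\cup K$ together with $E(G)$ is a complete graph, minimizing $\mathrm{cost}(\pi)$ is the same as maximizing $S(\pi)=\sum_{uv\in E(G)}|\pi(u)-\pi(v)|$. After that point the arguments diverge. The paper proves a structural Claim (Claim~\ref{c.clique_together}): there is an optimum arrangement in which the clique $K$ occupies a consecutive block, established by a local-improvement/swapping argument on the inner block of $K$; only with this in hand do they define the cut $(L(K),R(K))$ and estimate. You sidestep this structural step entirely with the double-counting decomposition
$$S(\pi)=\sum_{uv\in E(G)}|\pi'(u)-\pi'(v)|+\sum_{w\in K}e_G(A_w,B_w),$$
which is valid for \emph{every} ordering $\pi$, not just optimal ones, and which makes both the yes-direction and the no-direction completely mechanical: the first summand is trivially in $[0,(n-1)m]$, and each term in the second summand is a cut value of $G$, hence in $[0,\alpha m]$ in the no-case and achievable at $\beta m$ for all $w$ simultaneously in the yes-case. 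Both proofs then close by the same calibration, choosing the clique size so that the gap $N(\beta-\alpha)m$ dwarfs the $O(nm)$ noise. What your approach buys is the elimination of the consecutivity lemma: you never need to reason about inner blocks of $K$, about optimality of $\pi$, or about the ``which side has more edges'' case distinction, which makes the argument shorter and more robust. What the paper's approach buys is a structural statement about optimal orderings that is reused (in spirit) in the more delicate bounded-degree reduction of Section~\ref{sec:sparse}, where the clique is replaced by expanders and one genuinely needs the separator to be consecutive.
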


\begin{proof} 
 Given the input graph $G$, we create a corresponding instance of \olashort, $G'$ as follows. We take a complement of $G$, i.e., $\overline G$ and add a disjoint clique $C$ 
 of size $Mn$, for $M = \lceil \frac{2}{\boundb - \bounda} \rceil$ and fully connect it to $\overline{G}$.
  
 We will show that $G$ has a cut of size at least $\boundb m$ if and only if $G'$ has a linear arrangement of cost at most ${(M+1)n+1 \choose 3} - \boundb m \cdot Mn$,
 that will prove the equivalence of the instances.

 Let $(A,B)$ be a cut of size at least $\boundb m$. We define a linear arrangement $\aorder$ as follows. First, we list the vertices of $A$, then we list the vertices of the clique $C$
 and then the vertices of $B$. %, i.e. $\aorder(A) = \{1,...,|A|\}, \aorder(C)=\{|A|+1,...,|A|+Mn\}, \aorder(B) = \{|A|+Mn+1,...,Mn+n\}$. 
 The order of vertices inside $A$, $B$ and $C$ can be arbitrary. Then by counting the costs of all the edges in the clique on $(M+1)n$ vertices
 we obtain:
 
 $$\sum_{uv \in E(G')} |\aorder(u) - \aorder(v)| + \sum_{uv \in E(G)} |\aorder(u) - \aorder(v)| = {(M+1)n+1 \choose 3} (*)$$ 
 
 Now, the costs of each edge going across the clique, i.e., each edge in $E_{G'}(A,B)$, is at least $Mn$, thus $\sum_{uv \in E(G)} |\aorder(u) - \aorder(v)| \geq \boundb m \cdot Mn$, and it implies the desired inequality.
 
 In the other direction, let $\aorder$ be a linear arrangement of $G'$ of cost at most ${(M+1)n+1 \choose 3} - \boundb m \cdot Mn$. 
 First, we will prove that there exists an optimum linear arrangement $\aorder$, such that the vertices of $C$ has to be in the same consecutive block of $\aorder$. 
 Before that, we will introduce notation regarding to linear arrangements, also used in other proofs.
 
We assume the vertices of $G'$ are ordered from left to right according to a linear arrangement $\aorder$. 
When we speak about the $i$-th vertex (from the left), we mean the vertex mapped to the number $i$ by $\aorder$. 
A set of vertices $U$ is \term{consecutive in $\aorder$}, if $\aorder(U) = \{p, p + 1, \ldots, q - 1, q\}$ for $p, q \in \N$. 
The set of all vertices that are to the left of every vertex from some set $U$ is called \term{vertices to the left of $U$} and denoted by $L(U)$.
Similarly, we define \term{vertices to the right of $U$} and denote them by $R(U)$. 
\term{A block of $U$} is any inclusion-wise maximal non-empty subset of $U$ that is consecutive in $\aorder$. 
\term{The left-most block of $U$} is the block of $U$ whose vertices are mapped to the smallest values by $\aorder$. 
\term{Second left-most block of $U$} is the first block of $U$ to the right of the left-most block of $U$. 
\term{Inner block of $U$} is the set of all vertices from $\vertices[G'] \setminus U$ located simultaneously to the right the left-most block of $U$ and to the left of the second left-most block of $U$ (in the case when $U$ forms a single block, the inner block does not exist).

 \begin{claim}
\label{c.clique_together} 
 There exists an optimum linear arrangement $\aorder$ of $G'$ such that the vertices of $C$ are consecutive in $\aorder$.  
 \end{claim}
 
\begin{proof} 
  %Let us assume that $C$ is not consecutive in $\aorder$.

  %Let us choose $\aorder$ to be an optimum linear arrangement such that the number of vertices in the left-most block of $C$ is maximized. We claim that the vertices of $C$ are consecutive in $\aorder$.
   %From $(*)$ we infer that $\sum_{uv \in E} |\aorder(u) - \aorder(v)| \geq \boundb m \cdot Mn$.
   Let us choose $\aorder$ to be an optimum linear arrangement of $G'$ that minimize the number of vertices of $G$ that lie between the vertices of $C$,
   i.e. $f(\aorder) = |\{v \in V(G)  :  \aorder(u) < \aorder(v) < \aorder(w),  u,w \in V(C)\}|$ is the smallest possible.
   We claim that the vertices of $C$ are consecutive in $\aorder$, i.e. $f(\aorder) = 0$.
  
  Assume that this is not the case. Let $X$ be the inner block of $C$ and we consider the following two cases: 
  \begin{itemize}
   \item $|E_{G}(L(X), X)| \leq |E_{G}(X, R(X))|$
   \item $|E_{G}(L(X), X)| > |E_{G}(X, R(X))|$
  \end{itemize}

  %Let
  %$v_l = \pi^{-1}(\min_{v \in S} \pi(v) - 1)$, and $v_r = \pi^{-1}(\max_{v \in S} \pi(v) + 1)$, then $v_l, v_r \in C$. For each $X \subseteq V$, we define: 
  %$$N_{\pi}^l(X) = \{\{u,v\} \in E \ | \ u \in V-X, \ v \in X, \ \pi(u) < \pi(v)\}$$ 
  %$$N_{\pi}^r(X) = \{\{u,v\} \in E \ | \ u \in V-X, \ v \in X, \ \pi(u) > \pi(v)\}$$
  In each of them, we will create another linear arrangement that will contradict the assumptions we made.

\begin{figure}[h]
\centering

\begin{tikzpicture}[scale=0.4,auto,node distance=2.8cm,semithick]

\path (0.5, 0) .. controls (0.5, 4) and (20.5, 4) .. (20.5, 0) -- (15.5, 0) .. controls (15.5, 0.7) and (14.5, 0.7) .. (14.5, 0) [fill=darkgraybluecolor,opacity=0.5];
\path (15.5, 0) .. controls (15.5, 4) and (28.5, 4) .. (28.5, 0) -- (21.5, 0) .. controls (21.5, 0.7) and (20.5, 0.7) .. (20.5, 0) [fill=darkgraybluecolor,opacity=0.5];

% \path (8.5, 0) .. controls (8.5, -3.5) and (20.5, -3.5) .. (20.5, 0) -- (15.5, 0) .. controls (15.5, -0.7) and (14.5, -0.7) .. (14.5, 0) [fill=deepbluecolor];

% \path (9.5, 0) .. controls (9.5, -4) and (28.5, -4) .. (28.5, 0) -- (21.5, 0) .. controls (21.5, -2.5) and (14, -2.5) .. (14, 0) [fill=yellowishcolor,opacity=0.65];

\draw[decorate, decoration={brace}, yshift=1.825cm] (0.75,2.8) -- node[above=0.4ex] {$G'$} (28.25,2.8);
\draw[decorate, decoration={brace,mirror}, yshift=1.825cm] (0.75,-4.2) -- node[below=0.4ex] {$L(C)$} (7.25,-4.2);
\draw[decorate, decoration={brace,mirror}, yshift=1.825cm] (8.75,-4.2) -- node[below=0.4ex] {$C'$} (14.25,-4.2);
\draw[decorate, decoration={brace,mirror}, yshift=1.825cm] (15.75,-4.2) -- node[below=0.4ex] {$X$} (20.25,-4.2);
\draw[decorate, decoration={brace,mirror}, yshift=1.825cm] (21.75,-4.2) -- node[below=0.4ex] {$R(X)$} (28.25,-4.2);
\path (7.9,2.75) node[above=0] {$E_{G'}(L(X), X)$};
\path (21.9,2.75) node[above=0] {$E_{G'}(X, R(X))$};
\path (22,-2.25) node[above=0] {$u$};

\path (4,0) ellipse (3.5 and 1.5) [draw=black] [fill=white];
\path (11.5,0) ellipse (3 and 1.5) [draw=black] [fill=white];
\path (18,0) ellipse (2.5 and 1.5) [draw=black] [fill=white];
\path (25,0) ellipse (3.5 and 1.5) [draw=black] [fill=white];

\foreach \s in {1, ..., 28}
{
	\ifnum \s=8 \else 
	\ifnum \s=15 \else 
	\ifnum \s=21 \else
		\node[vertex] (v\s) at (\s, 0) {}; 
	\fi \fi \fi
}

\end{tikzpicture}

\caption[]{The situation in the proof of Claim \ref{c.clique_together}.}
\label{f.clique_together}
\end{figure}
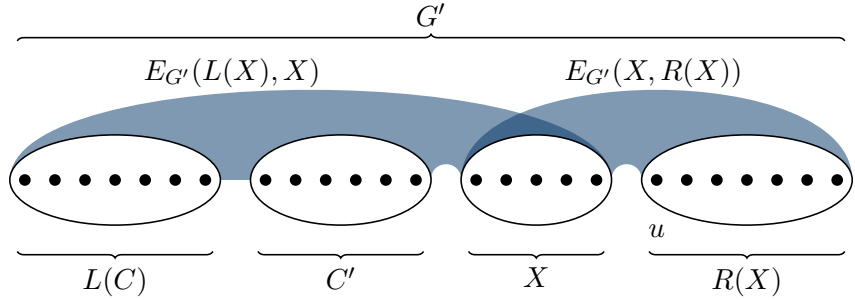
  
  If the first case occurs, let $C'$ be the left-most block of $C$. We swap places of $C'$ and $X$. Then, 
  we consider how it affects the cost of $\pi$. First, we note that by $(*)$, the optimality of $\pi$ can be equivalently restated as maximizing the sum of cost of edges of $G$, 
  i.e. $\pi$ is optimal if $\sum_{uv \in E(G)} |\aorder(u) - \aorder(v)|$ is the largest possible.
  We claim that the value of this sum has not decreased.
  Clearly, the cost of edges with both endpoints in $V(G) \setminus X$ or $X$ does not change. 
  Hence, it is enough to inspect the edges with exactly one endpoint in $X$. 
  Every such edge goes from $X$ to $L(X)$ or from $X$ to $R(X)$. An edge going from $X$ to $L(X)$ has decreased its length by $|C|$, and 
  an edge going from $X$ to $R(X)$ has increased its length by $|C|$. Thus, as $|E_{G}(L(X), X)| \leq |E_{G}(X, R(X))|$, the overall contribution 
  of those edges has not decreased, but the number of vertices between $C$ has dropped by $|X|$, leading to the contradiction.
  
  If the second case occurs, let $C'$ be the second-left most block of $C$. Similarly we swap places of $C'$ and $X$. Again, only affected edges
  are the ones that have exactly one endpoint in $X$. The inequality $|E_{G}(L(X), X)| > |E_{G}(X, R(X))|$ imples that $\sum_{uv \in E(G)} |\aorder(u) - \aorder(v)|$
  has increased at least by $|X|$, contradicting the optimality of $\aorder$.
  
  Thus, we have proved that $C$ has to be consecutive in $\aorder$.
  
  %We repeat the procedure of moving to the left as long as there is $S$ with $|N_{\pi}^l(S)| \geq |N_{\pi}^r(S)|$, in any order. 
  %Then, we do the same for moving to the right. At the end we obtain the linear arrangement with consecutive vertices of $C$.
\end{proof}

Let $\aorder$ be the optimum linear arrangement from the claim above. We define a cut $(A,B)$, by taking $A$ to be $L(C)$ and $B$ to be $R(C)$. Then by $(*)$ we know:
$$\boundb m \cdot Mn \leq \sum_{uv \in E(G)} |\aorder(u) - \aorder(v)| = $$ 
$$\sum_{uv \in E(G), \atop u,v \in A} |\aorder(u) - \aorder(v)| + \sum_{uv \in E(G), \atop u,v \in B} |\aorder(u) - \aorder(v)| + \sum_{uv \in E(G), \atop u \in A, v \in B} |\aorder(u) - \aorder(v)| <$$
$$nm + (Mn + n) \cdot |E_G(A,B)| \leq 2nm + Mn \cdot |E_G(A,B)|$$
Thus:
$$|E_G(A,B)| > \boundb m - \frac{2m}{M} \ge \boundb m - (\boundb - \bounda) \cdot m = \bounda m $$
By the promise given by the gap problem we know that if $G$ admits a cut greater than $\alpha m$,
then it actually admits a cut of size at least $\beta m$. 
\end{proof}

%\begin{theorem} 
%Unless ETH fails, there exist $c \geq 1$, and $0 < r < 1$ such that there is no $\mathcal{O}(2^{\mathcal{O}(\frac{m}{\log^c(m)})})$ time $r$-approximation algorithm for 
%MAX CUT.
%\end{theorem}

The proof of Theorem~\ref{thm:eth-ola} follows immediately from Theorems~\ref{thm:eth-maxcut}
and~\ref{thm:red-maxcut-ola}.

\begin{comment}
As our contribution in this section is mainly in observing how to combine
and modify existing results, due to lack of space we defer the proofs of all the theorems
stated in this section to Appendix~\ref{appendix}.
 
\end{comment}

%\begin{theorem} 
%Unless ETH fails, there exist $c \in \N$, such that there is no $2^{\mathcal{O}(\frac{n}{\log^c(n)})}$ time algorithm for \ola.
%\end{theorem}
 
\def\minbola{T}

\section{Sparse reduction}\label{sec:sparse} 

We now introduce a polynomial-time Turing reduction from a gap version of the \minb\ problem on $d$-regular graphs to \ola. 
Its key property is that the instances of the former problem result in instances of \ola\ with linear number of vertices and bounded degree.
Even though the created instance of \ola is a multigraph, it does not cause any additional difficulties in further reductions described in Section~\ref{sec:completions}.
%(we note that in this paper we are considering \olashort on graphs with multiple edges).
%Therefore, the instance size is increased only linearly. 
The reduction allows one to distinguish between instances of \minb\ with at most $\bounda m$ edges and at least $\boundb m$ edges in the optimum cut for some fixed choice of $0 \leq \bounda < \boundb \leq 1$ by solving the resulting instance of \ola. 
This relates \ola\ to the following hypothesis:  
\begin{rephypothesis}{h.bisection} 
%There exist $d_0 \in \N$ and $0 \leq \bounda < \boundb \leq 1$ such that for each $d \ge d_0$ there is no $2^{o(n)}$ time algorithm for \gapminbd[d]$_{[\alpha, \beta]}$. 
There exist $0 \leq \bounda < \boundb \leq 1$, and an integer $d > \frac{4}{\boundb-\bounda}$, such that there is no $2^{o(n)}$-time algorithm for \gapminbd[d]$_{[\alpha, \beta]}$.
% There exist $d_0 \in \N$ and $0 \leq \bounda < \boundb \leq 1$ such that for each $d \ge d_0$ there is no $2^{o(n)}$ time algorithm for \gapminbd[d]$_{[\bounda, \boundb]}$. 
\end{rephypothesis} 
% We remark that a simple argument shows that the gap in this hypothesis can be arbitrarily amplified by replacing each vertex by a clique and replacing each edge with a complete bipartite graph. 

The main result of this section is: 
\begin{reptheorem}{thm:hyp-ola}
Unless Hypothesis~\ref{h.bisection} fails, there exists an integer $d \in \N$ such that there is no $2^{o(n)}$-time algorithm solving \oladshort[d] in multigraphs.
%Assuming Hypothesis \ref{h.bisection}, there is no subexponential algorithm for \ola. 
% Unless Hypothesis \ref{h.bisection} fails, there exists $d \in \N$ such that there is no $2^{o(n)}$ time algorithm for \oladshort[d]. 
% XXX Check that we indeed want o(n) and not o(n+m). It is a presentation issue. 
\end{reptheorem}

% Finally, we relate the size of the minimum bisection to the value of the optimal arrangement in the resulting instance of \ola\ and show how to decide the instance of the original problem based on this value.
 
%Recall that the \term{Cheeger number $h(G)$ of a graph $G$} is defined as
%$$h(G) := \min \left\{ \frac{|\delta (X)|}{|X|} : X \subseteq V(G), |X| \leq \frac{|V(G)|}{2} \right\},$$
%where $\delta(X)$ is the set of edges crossing the cut $(X, V(G) \setminus X)$. 
%We say that a graph $G$ is a $(d, e)$-expander if it is $d$-regular and has $h(G) \ge e$. 
% We use $G_{n,d}$ to denote a $d$-regular expander graph on $n$ vertices such that $h(G) \ge \half{d}$. 
%The following theorem provides us with an explicit construction of expander graphs: 
%\begin{theorem}[Lubotzky, Phillips, Sarnak (1988)] \todo{cite a reference} \todo{this is used by other parts of the paper -- move this to the beginning} 
%\label{t.expander} 
%For every prime $p$ and every positive integer $k$ there exists infinitely many $(d, \frac{d - 2\sqrt{d - 1}}{2})$-expanders, where $d = p^k + 1$. 
%\end{theorem}
%We use $G_{n,d}$ to denote a $d$-regular expander graph on $n$ vertices such that $h(G) \ge \frac{d}{3}$. 
%The above theorem shows that for infinitely many choices of $d$ there exists an infinite number of such graphs. % XXX it should be reflected that for many values of d there is no known construction
% XXX \todo{We should relate this sentence here: The notion of expander graphs plays a major role in our reduction.}

We first describe a transformation $\minbola(\cdot)$ from an instance $G$ of \minb\ to an instance of \ola\ that forms the key component of our reduction. 
Then, we introduce several technical claims about its properties. 
Finally, we prove Theorem \ref{thm:main} by showing how to decide the instances of \gapminb\ based on the cost of the optimum arrangement of $\minbola(G)$. 
The fact that our reduction exhibits only a linear increase in the size of the instance is crucial in achieving the $2^{\Omega(n)}$ bound. 
% The values of these parameters are specified in the proof of Theorem \ref{t.hypothesisola}. 

The result of transformation $\minbola(\cdot)$ is influenced by several parameters. 
%The choice of their values is deferred to the proof of Theorem 
The choice of their values is deferred to the proof of Theorem
\ref{thm:main}. 
Consider an instance $G$ of the \minb\ problem, where $G$ is a $d_G$-regular graph. 
Assume $\vertices[G] = \{ v_1, \ldots, v_n \}$. 
The transformation produces a graph $G' := \minbola(G)$ with 
the vertex set $\{v_1, \ldots, v_n, x_1, \ldots, x_{Z \cdot \phin}\}$, where $Z \in N$ and $\varphi \in (0,1)$ are constants chosen later. 
\begin{figure}[h]
\centering

\begin{tikzpicture}[scale=0.4,auto,node distance=2.8cm,thick]

\foreach \s in {1, ..., 6}
{
	\node[vertex] (v\s) at (\s - 0.5, 0) {}; 
}

\foreach \s in {7, ..., 24, 26, 27, 28, 29, 30}
{
	\node[vertex] (v\s) at (\s, 0) {}; 
}

\fill[dot] (24.75, 0) circle (1.4pt);
\fill[dot] (25, 0) circle (1.4pt);
\fill[dot] (25.25, 0) circle (1.4pt);

\fill[dot] (6, 0) circle (1.4pt);
\fill[dot] (6.25, 0) circle (1.4pt);
\fill[dot] (6.5, 0) circle (1.4pt);

\path[greenedge] (v1) -- (v2);
\path[greenedge] (v5) -- (v6); 
\path[greenedge] (v7) -- (v8); 
\path[greenedge] (v2) to[out=-45,in=-135] (v7); 
\path[greenedge] (v1) to[out=30,in=150] (v6); 
\path[greenedge] (v3) to[out=-45,in=-135] (v7); 
% \path[greenedge] (v5) to[out=-45,in=-135] (v7); 
\path[greenedge] (v4) to[out=-45,in=-135] (v8); 
\path[greenedge] (v5) to[out=-45,in=-135] (v9); 
\path[greenedge] (v3) to[out=30,in=150] (v9); 
\path[greenedge] (v4) to[out=30,in=150] (v9); 

\draw (4.75,0) ellipse (4.66 and 1.525);
\draw (12,0) ellipse (2.45 and 1.3);
\draw (17,0) ellipse (2.45 and 1.3);
\draw (22,0) ellipse (2.45 and 1.3);
\draw (28,0) ellipse (2.45 and 1.3);

\draw (5, -2.5) node {$G$};
\draw (12, -2.5) node {$H_1$};
\draw (17, -2.5) node {$H_2$};
\draw (22, -2.5) node {$H_3$};
\draw (28, -2.5) node {$H_Z$};

\draw[decorate, decoration={brace,mirror}, yshift=-3.2cm] (9.85,0) -- node[below=0.4ex] {$H \approx G_{Z\lceil \varphi n \rceil, d_H}$} (30.15,0);
\draw[decorate, decoration={brace,mirror}, yshift=-4.85cm] (0.25,0) -- node[below=0.4ex] {$G'$} (30.15,0);
\draw[decorate, decoration={brace}, yshift=1.825cm] (19.85,0) -- node[above=0.4ex] {$H_i \approx G_{\lceil \varphi n \rceil, d_{H_i}}$} (24.15,0);

\draw[blueedge] (v1) .. controls (2, 4.5) and (8, 4.5) .. (v10); 
\draw[blueedge] (v2) .. controls (3, 4.5) and (8, 3.5) .. (v10); 
\draw[blueedge] (v3) .. controls (4, 4.5) and (9, 4.5) .. (v11); 
\draw[blueedge] (v4) .. controls (5, 4.5) and (9, 3.5) .. (v11); 
\draw[blueedge] (v5) .. controls (6, 4.5) and (10, 4.5) .. (v12); 
\draw[blueedge] (v6) .. controls (7, 4.5) and (10, 3.5) .. (v12); 
\draw[blueedge] (v7) .. controls (8, 4.5) and (11, 4.5) .. (v13); 
\draw[blueedge] (v8) .. controls (9, 4.5) and (11, 3.5) .. (v13); 
\draw[blueedge] (v9) .. controls (10, 4.5) and (12, 4.5) .. (v14); 

\end{tikzpicture}

\caption[]{
The resulting instance $G' = T(G)$ after applying the reduction. 
The original graph $G$ is an induced subgraph of $G'$ with its edges shown in green. 
The edges of the bipartite graph added between $V(G)$ and $V(H_1)$ are shown in blue. 
}
\label{f.sparse}
\end{figure}
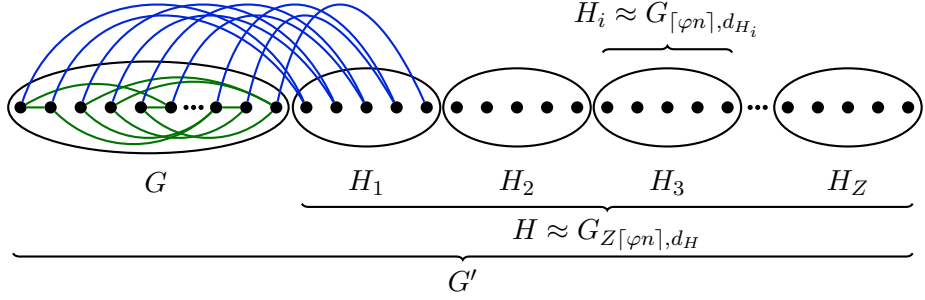
Note that $G'$ contains the vertices of $G$. 
Indeed, we are going to construct the edge-set in such a way that $G$ is an induced subgraph of $G'$. 
It is actually convenient to introduce notation for some of the induced subgraphs of $G'$. 
The subgraph with the vertex set $\{x_1, \ldots, x_{Z \cdot \phin}\}$ is denoted by $H$. 
The graph $H$ is (arbitrarily) divided into $Z$ disjoint induced subgraphs $H_i$ of size $\phin$ each, for some constants $Z \in \N$ and $\varphi \in (0,1)$. 
%We will also introduce a constant $M := Z \varphi$, such that $Mn$ is an upper bound for the number of vertices in $H$.

% The graph $G'$ is constructed as follows: 
The result of the transformation is illustrated in Figure~\ref{f.sparse}. 
The edge-set of $G'$ is constructed as follows: 
% We also denote the induced subgraph of $G'$ on vertices $\{x_i\}$ by $H$. 
% We now add edges to the graph $G'$ in the following way: 

\begin{itemize}
\item The induced subgraph of $G'$ on $\{v_1, \ldots, v_n\}$ is $G$. % on vertices $\{v_i\}_{i = 1}^n$ are the same as in the graph $G$, i.e. $G' | V(G) = G$, 
\item We construct a $\dH$-regular expander $G_{|H|,\dH}$ using Theorem \ref{thm:expander-factory} 
satisfying $h(G_{|H|,\dH}) \ge p_H$, and add its edges on the vertices of $H$ (the value $p_H$
will be determined later). 
\item For each $i \in \{1, \ldots, Z\}$ we construct a $\dHi$-regular expander $G_{|H_i|,\dHi}$ using Theorem~\ref{thm:expander-factory} satisfying $h(G_{|H_i|,\dHi}) \ge p_{H_i}$, and add its edges on the vertices of $H_i$ (the value $p_{H_i}$ will be determined later).
% \item We add an arbitrary maximum matching of the complete bipartite graph with parts $V(G)$ and $V(H_i)$. 
\item For each $i \in \{1, \ldots, Z\}$ we add a bipartite graph on parts $\vertices[G]$ and $\vertices[H_i]$ such that all vertices of $\vertices[G]$ have degree 1 in this bipartite graph and the degrees of vertices 
from $\vertices[H_i]$ differ by at most 1. 
We denote the maximum degree of the $\vertices[H_i]$ part of this added bipartite graph by $\dHG$. 
It is at most $\lceil \frac{1}{\varphi} \rceil$.  % could we assume that 
% The maximum degree of a vertex from $V(H_i)$ is denoted by $\dHG$.  
% (The maximum degree of the vertex $\dHG$ to denote the maximum degree 
\end{itemize}

Note that we are constructing a multigraph, that is when an edge is to be added several
times in the construction process, we keep all its copies.

\begin{comment}
Due to lack of space we give the proof of correctness of the transformation, as
well as determine the parameters driving the reduction, in Section~\ref{sec:sparse2}. 
\end{comment}

In the following part, we give the proof of correctness of the transformation, as
well as determine the parameters driving the reduction. In the proof, we first show that 
the vertices of $H$ have to be consecutive in an optimum linear arrangement.
Next, we show that actually vertices of each small expander $H_i$ are consecutive 
in such an ordering.
This is crucial when we analyze the change of the cost of the ordering
when moving a vertex of $G$ from one side of $H$ to the other,
and in turn prove that in an optimum ordering the parts of $G$ to the left
and to the right of $H$ are almost of the same size.
Interestingly, in our reduction we have to use the hypothetical oracle solving the decision version
of \ola to find the cost of an optimum ordering of an auxiliary graph by using binary search.

The constructed $G'$ is influenced (apart from the input graph $G$) by our choice of parameters $Z, \varphi, p_H$ and $p_{H_i}$,
which in turn influence $\dH$ and $\dHi$ by Theorem~\ref{thm:expander-factory}.
The lemmas below impose a particular structure on the optimum linear arrangement of $G'$, provided certain inequalities between these parameters are satisfied. % give properties of this transformation, assuming certain relations between these parameters are satisfied. % XXX the fact that something introduces constraints on the parameters is said several times 
% Their values are determined by the lemmas below.
Eventually, the lemmas are employed in the proof of the main theorem of this section. 

\begin{comment}
 
We begin with a technical Lemma~\ref{l.swapping}. % XXX there were a few words of motivation for the existance of this lemma 
% This is used to prove Lemma \ref{l.htogether} and Lemma \ref{l.hitogether}, which impose a particular structure on an optimal solution of the \ola\ of $G'$. 
% Each of them provides a property of the optimal linear arrangement of the graph $G'$, assuming appropriate values of the parameters have been used. 

% The Lemma \ref{l.htogether}, the Lemma \ref{l.hitogether}, and the Lemma \ref{l.imbal} proceed by considering a linear arrangement \aorder~of $V(G')$ and proving that if it does not have the desired property, it can be altered to get a strictly lower cost. 

In all proofs of this section, we think of the vertices of $G'$ as ordered from left to right according to some ordering $\aorder$. 
When we speak about the $i$-th vertex (from the left), we mean the vertex mapped to the number $i$ by $\aorder$. 
A set of vertices $U$ is \term{consecutive in $\aorder$}, if $\order(U) = \{p, p + 1, \ldots, q - 1, q\}$ for $p, q \in \N$. 
The set of all vertices that are to the left of every vertex from some set $U$ is called \term{vertices to the left of $U$} and denoted by $L(U)$.
Similarly, we define \term{vertices to the right of $U$} and denote them by $R(U)$. 
\term{A block of $U$} is any inclusion-wise maximal non-empty subset of $U$ that is consecutive in $\aorder$. 
\term{The left-most block of $U$} is the block of $U$ whose vertices are mapped to the smallest values by $\aorder$. 
\term{Second left-most block of $U$} is the first block of $U$ to the right of the left-most block of $U$. 
\term{Inner block of $U$} is the set of all vertices from $\vertices[G'] \setminus U$ located simultaneously to the right of the left-most block of $U$ and to the left of the second left-most block of $U$ (in the case when $U$ forms a single block, the inner block does not exist). 
\end{comment}

The following technical \textit{Swapping Lemma} establishes a condition on degrees in two consecutive sets of $G'$ under which the swapping of the two sets results in a decreased cost of the ordering. 
\begin{lemma}[Swapping Lemma]
\label{l.swapping}
Consider an ordering $\aorder$\ of any finite graph $G'$. % obtained by the reduction $\minbola(\cdot)$ from a graph $G$. 
Assume that the sets $X, Y \ss \vertices[G']$ are consecutive and $X$ immediately precedes $Y$. 
Let $L := L(X)$ and $R := R(Y)$. 
% Also, denote by $L$ or $R$ the set of all vertices of $G'$ mapped to a vertex to the left or right of all vertices from $X$ or $Y$, respectively. % \todo{Is this a correct use of ``respectively''? With three \textit{or}s?}
Assume 
\begin{itemize}
\item 
the value $P_X$ upper bounds the degree of vertices from $X$ in the induced bipartite subgraph $G'[L, X]$, 
\item 
$P_C$ is an upper bound on the maximum degree of $G'[X, Y]$, and  
\item 
$P_Y$ is an upper bound on the degree of a vertex from $Y$ in $G'[Y, R]$. 
\end{itemize}
Finally, let $p$ be a lower bound on the average degree of a vertex from $X$ in $G'[X, R]$. 
Then the inequality
$p > P_X + 2P_C + P_Y $
implies that swapping the vertices of $X$ with the vertices of $Y$ in the order specified by $\aorder$\ results in a decrease in the cost of the ordering. 
% Similarly, denote by $P_2$ the upper bound on a degree of a vertex from $X$ in the induced bipartite subgraph $G[L, X]$. 
\end{lemma}
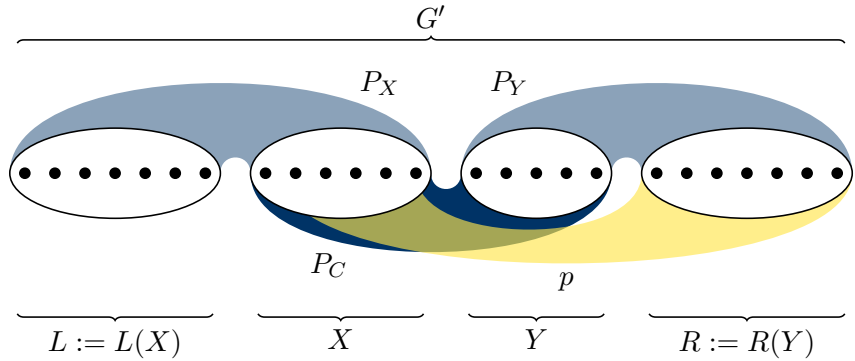
\begin{figure}[h]
\centering

\begin{tikzpicture}[scale=0.4,auto,node distance=2.8cm,semithick]

\path (0.5, 0) .. controls (0.5, 4) and (14.5, 4) .. (14.5, 0) -- (8.5, 0) .. controls (8.5, 0.7) and (7.5, 0.7) .. (7.5, 0) [fill=graybluecolor];
\path (15.5, 0) .. controls (15.5, 4) and (28.5, 4) .. (28.5, 0) -- (21.5, 0) .. controls (21.5, 0.7) and (20.5, 0.7) .. (20.5, 0) [fill=graybluecolor];

\path (8.5, 0) .. controls (8.5, -3.5) and (20.5, -3.5) .. (20.5, 0) -- (15.5, 0) .. controls (15.5, -0.7) and (14.5, -0.7) .. (14.5, 0) [fill=deepbluecolor];

\path (9.5, 0) .. controls (9.5, -4) and (28.5, -4) .. (28.5, 0) -- (21.5, 0) .. controls (21.5, -2.5) and (14, -2.5) .. (14, 0) [fill=yellowishcolor,opacity=0.65];

\draw[decorate, decoration={brace}, yshift=1.825cm] (0.75,2.5) -- node[above=0.4ex] {$G'$} (28.25,2.5);
\draw[decorate, decoration={brace,mirror}, yshift=1.825cm] (0.75,-6.5) -- node[below=0.4ex] {$L := L(X)$} (7.25,-6.5);
\draw[decorate, decoration={brace,mirror}, yshift=1.825cm] (8.75,-6.5) -- node[below=0.4ex] {$X$} (14.25,-6.5);
\draw[decorate, decoration={brace,mirror}, yshift=1.825cm] (15.75,-6.5) -- node[below=0.4ex] {$Y$} (20.25,-6.5);
\draw[decorate, decoration={brace,mirror}, yshift=1.825cm] (21.75,-6.5) -- node[below=0.4ex] {$R := R(Y)$} (28.25,-6.5);
\path (17.1,2.25) node[above=0] {$P_Y$};
\path (12.8,2.25) node[above=0] {$P_X$};
\path (11.1,-3.8) node[above=0] {$P_C$};
\path (19,-4.25) node[above=0] {$p$};

\path (4,0) ellipse (3.5 and 1.5) [draw=black] [fill=white];
\path (11.5,0) ellipse (3 and 1.5) [draw=black] [fill=white];
\path (18,0) ellipse (2.5 and 1.5) [draw=black] [fill=white];
\path (25,0) ellipse (3.5 and 1.5) [draw=black] [fill=white];

\foreach \s in {1, ..., 28}
{
	\ifnum \s=8 \else 
	\ifnum \s=15 \else 
	\ifnum \s=21 \else
		\node[vertex] (v\s) at (\s, 0) {}; 
	\fi \fi \fi
}

\end{tikzpicture}

\caption[]{
The vertex set of $G'$ is partitioned into four sets, $L, X, Y,$ and $R$, in Lemma \ref{l.swapping}.
The bounds $P_X, P_Y, P_C$ (upper bounds), and $p$ (a lower bound) on the degrees of the induced bipartite subgraphs are also shown. 
}
\label{f.swapping}
\end{figure}
\begin{proof}
The situation is illustrated in Figure \ref{f.swapping}. 
The length of all edges connecting a pair of vertices from one of the sets $L, X, Y, R$ remains unchanged after swapping $X$ and $Y$ in the ordering. 
The same holds for edges connecting $L$ with $R$. 
The length of each edge connecting $X$ and $Y$ increases by at most $|X| + |Y| \le 2\max\{|X|, |Y|\}$.
The cost of each edge connecting $X$ and $L$ increases by at most $|Y|$. 
Similarly, the cost of each edge connecting $Y$ and $R$ increases by at most $|X|$. 
On the other hand, the edges connecting $X$ and $R$ are shortened, each by $|Y|$. 
The upper bounds on maximum degrees and the lower bound on average degree from the statement of the lemma allow us to lower bound the decrease in the cost of the ordering after the swap is performed. 
For example, the decrease in total cost of the edges connecting $X$ with $R$ is at least $p|X||Y|$.
The decrease in cost after swapping is at least
$$p|X||Y| - 2 \min\big\{|X|, |Y|\big\} P_C \max\big\{|X|, |Y|\big\} - |X| P_X |Y| - |Y| P_Y |X|,$$
which is equal to
$$|X||Y|(p - 2 P_C - P_X - P_Y).$$
Assuming the inequality from the lemma, this is strictly larger than zero. 
\end{proof}

We now make several claims about the optimum ordering $\aorder$ of $G' := \minbola(G)$, where $G$ is a $\dG$-regular graph. 
Recall that that $Z, \varphi, p_H, p_{H_i}, \dH,$ and $\dHi$ are the parameters of the transformation $\minbola(\cdot)$
still to  be determined. % XXX Should we also mention \dHG? Shouldn't this be handled more concisely? 

\begin{lemma}
\label{l.htogether} 
If $p_H > 3\dHG + 3Z + \dG$ and $\aorder$\ is an optimum linear arrangement of $G'$, then $V(H)$ is consecutive in $\aorder$. 
\end{lemma}
\begin{proof}
Suppose $\vertices[H]$ is not consecutive in $\aorder$. 
% Among the left-most and the right-most blocks of $H$ choose one which size does not exceed $|H|/2$. 
Consider the left-most block of $\vertices[H]$ and denote its elements by $X$. 
We can assume that $|X| \le \half{|H|}$ -- otherwise we take the right-most block of $\vertices[H]$ and proceed with a mirrored version of the following argument. 
Denote by $Y$ the inner block of $\vertices[H]$ % all the vertices that are to the right of the block $X$ in \aorder\ but to the left of the 
and set $L := L(X), R := R(Y)$. 
% In line with the statement of Lemma \ref{l.swapping}, we denote by $L$ all vertices that are to the left of every vertex from $X$ and by $R$ the set of vertices to the right of every vertex from $R$. 
The following choice of values satisfies the assumptions on degree upper-bounds of the Lemma \ref{l.swapping}: 
\begin{align*}
P_X & := \dHG, & 
P_Y & := \dG + Z, & 
P_C & := \dHG + Z.
\end{align*}
Since $H$ is an expander, $|X|\leq |H|/2$ and $H \setminus X \ss R$, we take $p = p_H$. 
% (If $|X| > \frac{|H|}{2}$ then we proceed by a mirror argument from the right-hand side of the ordering.)
It remains to show the inequality from the statement of the Swapping Lemma.
We have: 
$$p = p_H > 3\dHG + 3Z + \dG = P_X + 2P_C + P_Y.$$
Thus, we can swap $X$ and $Y$ and decrease the cost of the ordering. 
This contradicts the optimality of $\aorder$.
\end{proof} 

\begin{lemma} 
\label{l.hitogether} 
Let $i \in \{1, \ldots, Z\}$.
If $p_{H_i} > \dHip + 4\dH + 2\dHG$, $\aorder$\ is an optimum linear arrangement of $G'$, and for each $i' < i$ the vertices of $H_{i'}$ are consecutive in $\aorder$, then the vertices of $H_i$ are consecutive in~$\aorder$. 
(For the purposes of this lemma, we set $d_{H_{Z+1}} := 0$.) 
\end{lemma} 

\begin{proof}
Assume $H_i$ not to be consecutive and denote by $X$ the left-most block of $\vertices[H_i]$ in $\aorder$. 
Similarly to the situation in the proof of Lemma~\ref{l.htogether}, 
we can assume $|X|\leq |H_i|/2$. 
(Otherwise we consider the right-most block instead and proceed with a mirrored version of the argument.) 
We show that moving $X$ to the right decreases the cost of the arrangement. 

Denote by $u$ the vertex positioned by $\aorder$\ immediately to the right of $X$. 
Due to Lemma \ref{l.htogether}, we know $u \not\in \vertices[G]$. 
Therefore, $u \in \vertices[H_j]$ for $j \neq i$. 
We distinguish two cases: either $j < i$ or $j > i$. 

Suppose that $u \in \vertices[H_j]$ for $j < i$. 
Note that $H_j$ is consecutive in $\aorder$. 
We set $Y := \vertices[H_j], L := L(X),$ and $R := R(Y)$.  
% Also, we let $L$ to be the set of all vertices to the left of each vertex of $X$ and $R$ to be the set of all vertices to the right of each vertex of $Y$. 
Again, we employ the Swapping Lemma. 
The following degree upper-bounds satisfy its assumptions: 
\begin{align*}
P_X & := \dHG + \dH, &
P_Y & := \dHG + \dH, &
P_C & := \dH.
\end{align*}
Since $H_i$ is an expander, $|X|\leq |H_i|/2$ and $\vertices[H_i] \setminus X \ss R$, we can set the average degree lower-bound $p$ to $p_{H_i}$. 
By the inequality from the statement of this lemma, we have
$$p = p_{H_i} > 4\dH + 2\dHG = P_X + 2P_C + P_Y.$$
Thus, the inequality from the Swapping Lemma holds and we can use it to decrease the cost of ordering, contradicting the optimality of $\aorder$. 

Suppose therefore that $u \in \vertices[H_j]$ for $j > i$. 
We now use Lemma \ref{l.swapping} again to move the block $X$ one position to the right, effectively swapping $X$ and $Y := \{u\}$. 
We set $L := L(X), R := R(Y)$.
This time, we set: 
\begin{align*}
P_X & := \dHG + \dH, &
P_Y & := \dHG + \dH + \dHip, & 
P_C & := \dH.
\end{align*}
Similarly to the previous cases, we set $p := p_{H_i}$. 
The inequality from the Swapping Lemma is again satisfied: 
$$p = p_{H_i} > 4\dH + 2\dHG + \dHip = P_X + 2P_C + P_Y.$$
Once more, we get a contradiction with the optimality of $\aorder$. 

%Note that the lemma is true for $i=Z$ if to $d_{H_{Z+1}}$ we assign value $0$.
\end{proof}

Due to Lemma \ref{l.htogether} we know that an optimum linear arrangement of $G'$ places vertices of $H$ consecutively, assuming the inequalities from its statement are satisfied. 
Furthermore, by iterating Lemma \ref{l.hitogether} we get that within $H$, the vertices of each $H_i$ are grouped together in the arrangement. 
To be precise, in the optimum ordering the subgraphs $H_i$ are placed in the order $H_{\ell_1}, H_{\ell_2}, \dots , H_{\ell_Z}$, where $(\ell_1,\ell_2,\dots, \ell_Z)$ is a permutation of $\{1,2,\dots, Z\}$.
The vertices of $G$ can thus be only to the left of $H$ or to its right. 
The next lemma shows that $H$ divides the graph $G$ into two roughly equal parts. 

\begin{lemma}
\label{l.imbal} 
Assume $G'$ has been constructed by the transformation $\minbola(\cdot)$ with parameters satisfying the inequalities from the statements of Lemmas \ref{l.htogether} and \ref{l.hitogether}
as well as $Z \varphi \ge 2$.
% \todo{copy here the inequality from Lemma \ref{l.htogether}} 
Moreover, assume $\gamma  = 3 \varphi \dG$. 
Consider an optimum linear arrangement $\aorder$\ of $G'$
and set $A := L(H), B := R(H)$. 
Then $\big| |A| - |B| \big| \le \gamma n$.
\end{lemma}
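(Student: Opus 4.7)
The plan is a proof by contradiction: assume $s := |A| - |B| > \gamma n = 3\varphi d_G n$; the opposite case $|B| - |A| > \gamma n$ follows symmetrically by reversing $\aorder$. First I would invoke Lemmas~\ref{l.htogether} and~\ref{l.hitogether} to fix the global structure of $\aorder$: the vertices of $H$ occupy the consecutive positions $|A|+1,\ldots,|A|+|H|$, and each $H_i$ forms a contiguous block of $\phin$ positions inside this interval. Denoting by $r_i$ the offset of $H_i$'s leftmost vertex from position $|A|$, the offsets $r_1,\ldots,r_Z$ form a permutation of $\{1,\phin+1,\ldots,(Z-1)\phin+1\}$, hence $\sum_i r_i = Z + \phin Z(Z-1)/2$.

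Next I would define a local perturbation $\aorder'$ of $\aorder$ obtained by taking the rightmost vertex $v$ of $A$ (currently at position $|A|$) and moving it to position $|A|+|H|$, forcing every vertex of $H$ to shift one position to the left. I would then compute the cost difference $\Delta := \mathrm{cost}(\aorder')-\mathrm{cost}(\aorder)$ by classifying each edge according to the partition $A\setminus\{v\}, \{v\}, H, B$. Writing $a_G := |N_G(v)\cap A|$, $b_G := d_G - a_G$, and letting $u_i$ be $v$'s unique neighbor in $H_i$, the accounting yields
\begin{equation*}
\Delta \;=\; |H|(a_G - b_G) \;+\; \sum_{i=1}^Z \bigl(2|A|+|H|+1 - 2\aorder(u_i)\bigr) \;+\; Z(1-s),
\end{equation*}
where the final term is the contribution of the non-$v$ bipartite edges: the $Z(|A|-1)$ edges between $A\setminus\{v\}$ and $H$ each shorten by $1$, while the $Z|B|$ edges between $H$ and $B$ each lengthen by $1$, using the construction property that every vertex of $V(G)$ has exactly one edge into every $H_i$.

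The key algebraic step is to bound the $v$-to-$H$ sum. Using $\aorder(u_i) \ge |A|+r_i$ together with the closed form for $\sum_i r_i$, a direct computation gives $\sum_{i=1}^Z \bigl(2|A|+|H|+1-2\aorder(u_i)\bigr) \le Z(\phin-1) = |H|-Z$. Combined with $|H|(a_G-b_G) \le |H|d_G$, this yields $\Delta \le |H|(d_G+1) - Zs$. Optimality of $\aorder$ forces $\Delta \ge 0$, hence $s \le \phin(d_G+1) \le (\varphi n + 1)(d_G+1)$, and the hypothesis $Z\varphi \ge 2$ is used to ensure that $\varphi n$ dominates the additive correction $d_G + 1$ for moderately large $n$, so that $(\varphi n + 1)(d_G+1) < 3\varphi d_G n$, contradicting $s > 3\varphi d_G n$.

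The main obstacle will be controlling the $v$-to-$H$ sum: a priori it depends on the unknown locations of the neighbors $u_i$ inside their respective $H_i$-blocks, but the saving grace is that the worst case (every $u_i$ at the leftmost vertex of its $H_i$) is itself tightly bounded by $|H|-Z$, which is small compared to $Zs$ precisely because each $H_i$ has width only $\phin$ and the offsets $r_i$ are forced into a fixed arithmetic progression by Lemma~\ref{l.hitogether}. A secondary care point is that bounding $a_G - b_G$ by $d_G$ discards information about the $G$-cut, but the resulting slack of order $|H|d_G$ is absorbed by the linear-in-$s$ term $Zs$ once the imbalance exceeds a constant multiple of $\varphi d_G n$.
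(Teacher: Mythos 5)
Your proposal is correct but takes a genuinely different swap than the paper's. The paper moves the \emph{leftmost} vertex of $A$ (at position $1$) all the way to the \emph{global rightmost} position, so the $d_G$ edges of $G$ incident to the moved vertex can each change length by up to $|V(G')| = Z\phin + n$; the cost decrease then comes from the $Z$ edges into $H$, whose total length drops by roughly $(|B|-|A|)Z + Z\phin < -\gamma nZ + Z\phin$, and the hypothesis $Z\varphi \geq 2$ is what lets the paper bound the $G$-edge increase $\dG(Z\phin+n) \leq 2\dG Z\varphi n$ (it needs $n \leq Z\varphi n/2$). You instead move the rightmost vertex of $A$ just past $H$, a far more local perturbation: the $G$-edges change by exactly $\pm|H|$, the non-$v$ bipartite $G$--$H$ edges shift by exactly $1$ each, and the $v$--$H$ edges are tightly accounted via the forced arithmetic-progression structure of the $H_i$ offsets from Lemma~\ref{l.hitogether}. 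Your computation is correct (I verified $\sum_i(2|A|+|H|+1-2\aorder(u_i)) \leq Z(\phin-1)$ and the assembly into $\Delta \leq |H|(d_G+1) - Zs$), and it gives the sharper bound $s \leq \phin(d_G+1)$, which is strictly stronger than the lemma's $s \leq 3\varphi d_G n$.

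One small misdiagnosis, though not a gap: your proof does \emph{not} actually use the hypothesis $Z\varphi \geq 2$. The inequality $(\varphi n+1)(d_G+1) < 3\varphi d_G n$ reduces to $\varphi n(2d_G - 1) > d_G + 1$, which holds for sufficiently large $n$ because $\varphi, d_G$ are constants with $d_G \geq 1$; no relation between $Z$ and $\varphi$ is needed. The hypothesis $Z\varphi \geq 2$ is specific to the paper's global move, where the jump distance is the whole arrangement length rather than $|H|$. Your explanation of where it enters your argument is therefore wrong, but since you never actually use it, this doesn't affect correctness — it just means you have a slightly cleaner proof that dispenses with one of the lemma's stated hypotheses.
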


\begin{proof}
Assume the imbalance $\big| |A| - |B| \big|$ is strictly bigger than $\gamma n$. 
Without loss of generality, assume $|A| > |B|$. 
We consider the vertex $u$ such that $\aorder(u) = 1$ (i.e., the one placed on the left side of the arrangement). 
\begin{comment}
Moving $u$ to the right-most position results in the following changes in the cost of the arrangement. 
The cost associated with the edges of $G$ might be increased by at most $\dG (M + 1) n$. 
In addition to this, the vertex $u$ is connected to precisely one vertex of each $H_i$. 
We cannot be sure precisely where is the neighbouring vertex mapped by \aorder, which leads us to a potential increase that is upper-bounded by $\frac{M}{\varphi} \varphi n = |H|$. 
The potential total cost increase is upper-bounded by $(\dG + 1)(M + 1) n$. 
However, moving the vertex $u$ to the other side also leads to a decrease of $\gamma n$ for each edge between $u$ and a subgraph $H_i$. 
This represents a cost decrease of at least $\gamma n \frac{M}{\varphi}$. 
In total, we get a cost decrease of $\gamma n \frac{M}{\varphi} - 2\dG |H| = 3 \dG n M - 2\dG n M = \dG n M.$ 
\end{comment}

Moving $u$ to the right-most position results in the following changes in the cost of the arrangement. 
The cost associated with the edges of $G$ might be increased by at most $\dG (Z \phin + n) \leq \dG (Z \varphi n + Z + n)$.
Note that for sufficiently large $n$ we have $\frac{Z \varphi n}{2} \ge Z$ and by the assumption of the Lemma
we have $\frac{Z \varphi n}{2} \ge n$, therefore the cost assosiated with the edges of $G$ increases by at most $\dG (Z \varphi n + Z + n) \le 2 \dG Z \varphi n$. 

In addition to this, the vertex $u$ is connected to precisely one vertex $v_i$ of each $H_{\ell_i}$. 
%We cannot be sure precisely where is the neighbouring vertex mapped by $\aorder$. 
Before moving $u$, the edge cost of $uv_i$ was $|A|+ (i-1) \phin + r_i$,
for some $0 \leq r_i \leq \phin-1$ and after, it becomes $|B|+((Z - i + 1) \phin - r_i)$, thus the contribution of all those edges has changed by:
$$(|B| - |A|) Z + \sum_{i=1}^{Z} ((Z - i + 1) \phin - r_i) - \sum_{i=1}^{Z} ((i-1) \phin + r_i)=(|B| - |A|) Z + \sum_{i=1}^{Z} (\phin - 2r_i) < $$

$$ - \gamma n Z +Z \phin \stackrel{\gamma = 3 \dG \varphi}{=} -3\dG Z \varphi n + Z \phin  \stackrel{Z \phin < \dG Z \varphi n}{<} - 2\dG Z \varphi n$$
Note that in the last inequality we have used the assumption $\phin < \dG \varphi n$ which holds for $\dG \ge 2$ and suffiently large $n$, as $\varphi$ is a constant.

Combining this cost change with the cost increase of edges of $G$ being at most $2 \dG Z \varphi n$, we obtain that moving $u$ to the right-most position causes 
the decrease in the cost of $\aorder$, leading to a contradiction. 
\end{proof}

%Note that for any $\gamma > 0, M \in \N$, it is possible to i think it should be Z not M, as M maybe is not integer, not sure

Being equipped with all the required tools, we are ready to prove Theorem~\ref{thm:hyp-ola}.

\begin{proof}[Proof of Theorem \ref{thm:hyp-ola}]
We prove the theorem by introducing a Turing reduction from \gapminbd[d]$_{[\bounda, \boundb]}$ to \oladshort[d'], where \begin{align} \label{eq:dg} d > \frac{4}{\boundb - \bounda} \end{align} and $d'$ is some constant 
depending on $d$, $\bounda$, and $\boundb$. The reduction proceeds in the following way. 

Given an instance $G$ of \gapminbd[d]$_{[\bounda, \boundb]}$, we create an equivalent instance $(G',k)$ of \oladshort[d'] based on the transformation $\minbola(\cdot)$ applied to $G$. 
The value $k$ will depend on, among other parameters, the cost of the optimal arrangement of the expander $H$ (recall this is an induced subgraph of $G'$). 
As mentioned in the beginning, the reduction is a Turing reduction. 
It is therefore equipped with an oracle solving the decision version of \olashort,
which in turn is enough to find the cost of an optimal ordering by using binary search (we will use it to compute $\mbox{OLA}(H)$).

%We prove that if the original instance had a small bisection, then there is a linear arrangement of cost at most $k$. 
%Finally, we also prove that if there is an arrangement with cost at most $k$, there necessarily exists a bisection of $G$ with at most $\bounda m$ edges.
% We then show how to decide the original instance based on the value of its optimum linear arrangement, 

We start by establishing the parameters of the transformation in the following order.
\begin{itemize}
  \item First, we set $\gamma := \frac{\boundb-\bounda}{4}$ and $\varphi := \frac{\gamma}{3 \dG}$.
  Note that $\gamma, \varphi \in (0,1)$ and $\gamma, \varphi$ satisfy the 
  condition $\gamma  = 3 \varphi \dG$ from Lemma~\ref{l.imbal}.
  \item Next, we set the value of $Z$ to the following integer:
    $$Z := \lceil \frac{2(2\bounda + 1)}{(\boundb - \bounda) \varphi}) \rceil\,.$$
    In particular $Z \geq \frac{2}{\varphi}$ satisfying the condition from Lemma~\ref{l.imbal}. Moreover, the following inequality holds:
    \begin{align}
    \label{eq:z} 2(2\bounda + 1) \leq (\boundb - \bounda) Z \varphi.
    \end{align}
  \item By construction $\dHG \leq \big\lceil \frac{1}{\varphi} \big\rceil = \big\lceil \frac{3 \dG}{\gamma} \big\rceil$. 
  \item Next, we define $p_H$, which in turn determines the value of $\dH$ by Theorem~\ref{thm:expander-factory}:
$$p_H   := 3\dHG + 3Z + \dG + 1\,.$$
The additive term $+1$ is just to make sure the inequality from Lemma~\ref{l.htogether} is strict and it could be replaced by any positive constant.
  \item Finally, we set the values $p_{H_i}$:
  $$p_{H_{i}}   := d_{H_{i+1}} + 4\dH + 2\dHG + 1,$$ in the decreasing order $i=Z,\ldots,1$, where for simplicity we denote $d_{H_{Z+1}} = 0$.  Note that the value of $p_{H_i}$ determines the value of $d_{H_i}$ via Theorem~\ref{thm:expander-factory}.
\end{itemize}

Observe that with the above definition of all the constants we satisfy all the 
requirements of Lemmas~\ref{l.htogether},~\ref{l.hitogether},~\ref{l.imbal}.

Let $G'$ be the result of the transformation $\minbola(G)$ with 
the above choice of parameters.
The value $k$ is then set as follows: 
\begin{align}
\label{eq:defk}
k := \mbox{OLA}(H) + \bounda m \cdot (Z \phin + n) + m \cdot \half{n} + \Big((\frac{n}{2}+1) \frac{n}{2} Z + n\sum_{i=1}^{Z} i \phin\Big)\,.
\end{align}
Note that to compute the value of $k$ we need use the oracle solving the decision
version of $\minbola$ in a binary search routine.
It remains to show that $G$ has a bisection of size at most $\bounda m$ if and only if 
$G'$ has an optimum linear arrangement of size at most $k$. 

Let us assume $G$ has a bisection with at most $\bounda m$ edges. 
We claim that $k$ is an upper bound on the cost of an optimum linear arrangement of $G'$. 
This is because it accounts for all costs associated with an ordering of $G'$ constructed from the optimum bisection of $G$. 
Denote by $A, B$ the partition of $V(G)$ corresponding to an optimum bisection of $G$. 
We create an ordering $\aorder$ that first lists all vertices of $A$, then the vertices of $H$ in the order of an optimum linear arrangement of $H$, and finally the vertices of $B$. 

The first term of~(\ref{eq:defk}) is the cost of all edges inside $H$. 
The second term upper bounds the cost of edges of $G$ between $A$ and $B$: 
there are at most $\bounda m$ of them and we upper bound the cost of every such edge by $|V(G')| = (Z \phin + n)$.
In the third term, we account for the cost of edges within $A$ and within $B$.
There are at most $m$ of them and since $|A| = |B| = \frac{n}{2}$ 
every such edge has cost at most $\half{n}$. %The overall cost of them is at most $\frac{n \dG}{2} \big(\half{n} + 1\big)$. 
The last term is an upper bound on the cost of edges connecting $G$ and $H$. 
Every vertex $v$ of $G$ has an edge to exactly one vertex of each $H_{\ell_i}$. 
If $v \in A$ we may bound its cost by $j(v) + i \phin$, where $j(v)$ is the length of the part 
of the ordering from $v$ to the first vertex in $H$. 
We first count the contribution of the $j(v)$-terms in the above expression for all choices of $v \in A$. 
Since $|A| = n/2$, summing over all $v \in A$ and $i=1,\ldots,Z$
we get $\sum_{j=1}^{|A|} j Z = \frac{(\frac{n}{2}+1) \frac{n}{2}}{2} Z$. 
The situation is analogous for $B$. 
The last term of~(\ref{eq:defk}) is obtained by summing 
the remaining edge costs $i \phin$ for all $i=1,\ldots,Z$ and $v \in G$. 
This proves the claimed upper bound.

%Clearly, the value of the \textit{optimum} ordering can be only smaller than $C$. 
%\item If $OLA(G') \le C$, output that the value of \minb\ of $G$ is at most $\half{\boundb} \dG n$. 
%\item Otherwise, we output that the value of \minb\ of $G$ is at least $\half{\bounda} \dG n$. 
In the other direction, we start with assuming that the graph $G'$ has an optimum arrangement $\aorder$ of cost at most $k$. 
The aim is to prove that $G$ has a bisection of size at most $\alpha m$.
Lemma \ref{l.htogether}, Lemma \ref{l.hitogether}, and Lemma \ref{l.imbal} together impose a certain structure on $\aorder$. 
Particularly, the vertices of $H$ are placed together in $\aorder$.
We use this fact to construct a bisection of $G$. 
Set $A := L(H)$ and $B := R(H)$, and if $|A| < |B|$, then replace $A$ with $B$. 
Note that $(A,B)$ is a partition of $V(G)$ and these two sets might have different sizes with the imbalance bounded by Lemma \ref{l.imbal}. % and that Lemma \ref{l.imbal} ensures they are non-empty. 
% It is not a bisection yet, but we will balance it later and first, we bound the number of edges between $A$ and $B$. 
We now bound the number of edges between $A$ and $B$. 
To this end we lower bound the cost of $\aorder$ in terms of $|E_{G'}(A,B)|$. 
Specifically, it is at least:
%It remains to argue there is a choice of constants $\gamma$ and $M$, such that it is guaranteed that instances with large optimum bisection always satisfy $\mbox{OLA}(G') > C$. 
%Suppose there are at least $\half{\bounda} \dG n$ edges in the optimum bisection of $G$. 
%Lemma \ref{l.htogether}, Lemma \ref{l.hitogether}, and Lemma \ref{l.imbal} together restrict how $\aorder$ looks, which allow us to define a bisection. 
\begin{align}
\label{eq:lb}
\mbox{OLA}(H) + |E_{G'}(A,B)| \cdot Z \phin + \Big((\frac{n}{2}+1) \frac{n}{2} Z + n\sum_{i=1}^{Z} (i-1) \phin \Big)\,.
\end{align}
There, the first term of~(\ref{eq:lb}) accounts for the cost of the edges of $H$,
as it is lower bounded by $\mbox{OLA}(H)$. 
The second term of~(\ref{eq:lb}) is a lower bound on the cost of edges 
of $G$ going across the partition $(A, B)$. 
There are $|E_{G'}(A,B)|$ of such edges and each contributes at least $Z \phin $ to the cost. 
Recall that vertices of $H$ must have the following order $H_{\ell'_1}, H_{\ell'_2}, \dots, H_{\ell'_Z}$, where $(\ell'_1,\ell'_2, \dots, \ell'_Z)$ is some permutation of $\{1,2,\dots, Z\}$. 
The third term lower bounds the cost of edges connecting $G$ to $H$. 
Every vertex $v$ of $G$ has an edge to exactly one vertex from $H_{\ell'_i}$. 
Similarly to the analysis above, 
we lower bound the cost of such an edge by $j(v) + (i-1) \phin$. 
The contribution of $j(v)$'s from the above expression for all choices of $v \in G$ is 
equal to $\Big( \sum_{j=1}^{|A|} j + \sum_{j=1}^{|B|} j \Big) Z \geq \Big( 2 \sum_{j=1}^{\frac{n}{2}} j \Big) Z \geq (\frac{n}{2}+1) \frac{n}{2} Z$.
The remaining part is obtained by summing $(i-1) \phin$ over all $v \in G$ and $i=1,\ldots,Z$. 
Comparing~(\ref{eq:defk}) with~(\ref{eq:lb}) we obtain:
$$|E_{G'}(A,B)| \cdot Z \phin \leq \bounda m \cdot (Z \phin + n)  + m \cdot \half{n} + n\sum_{i=1}^{Z} i \phin
-n\sum_{i=1}^{Z} (i-1) \phin  =$$
$$\bounda  m \cdot Z \phin +  \frac{2(2 \bounda + 1) mn}{4}  + n\sum_{i=1}^{Z} \phin \stackrel{{\mathrm by}~(\ref{eq:z})}{\leq}
\bounda  m \cdot Z \phin +  \frac{(\boundb - \bounda) m \cdot Z \varphi n}{4}  + n \cdot Z \phin \stackrel{\varphi n \leq \phin}{\leq}$$
$$\bounda  m \cdot Z \phin +  \frac{(\boundb - \bounda) m}{4} \cdot Z \phin  + n \cdot  Z \phin \stackrel{n = \frac{2}{\dG}m < \frac{\boundb - \bounda}{2} m~{\mathrm by}~(\ref{eq:dg})}{<} 
\frac{\bounda + 3\boundb}{4}m \cdot Z \phin.$$

We infer that $|E_{G'}(A,B)| < \frac{\bounda + 3\boundb}{4} m$.
Next, we create a bisection $(A',B')$ of $G$ as follows. 
Let $C$ be a set of $\frac{|A|-|B|}{2}$ arbitrary vertices of $A$, then we put $A' = A - C$, $B' = B \cup C$. 
By moving these vertices we get at most
$\frac{\gamma  n}{2} \cdot \dG = \gamma m$ additional edges in the cut, thus:
$$|E_{G'}(A',B')| \leq |E_{G'}(A,B)| + \gamma m  < \frac{\bounda + 3\boundb}{4}m +  \frac{\boundb - \bounda}{4}m =  \boundb  m.  $$
Therefore, we have $|E_{G'}(A',B')| < \boundb m$. 
Since the problem considered is a gap problem, we know that $G$ admits either a bisection of size at most $\bounda m$ or there is no bisection of size less then $\boundb m$. 
Therefore, we conclude the instance admits a bisection of size at most $\bounda m$.
\end{proof}

 \section{Lower Bounds for Minimum Fill-in and Other Completion \protect\\ Problems}
 \label{sec:completions}

In this section we prove Theorems~\ref{thm:only-eth} and~\ref{thm:main},
that is prove conditional lower bounds (under ETH and under Hypothesis~\ref{h.bisection})
for parameterized completion problems such as
\minimumfillin, \chaincompletion, \properintervalcompletion, \intervalcompletion, \thresholdcompletion, \triviallyperfectcompletion.
As a starting point we use Theorems~\ref{thm:eth-ola} and~\ref{thm:hyp-ola},
hence our goal is to transform an instance of \olashort into an instance
of a graph completion problem.
The main reduction of this section, which transforms
an instance of \olashort to \chaincompletion 
is a slight modification of the reduction of Yannakakis~\cite{yannakakis},
with the only difference that on bounded degree instance of \olashort we obtain
linear number of vertices in the final instance of \chaincompletion. This fact is crucial to prove Theorem~\ref{thm:main} while for proof of Theorem~\ref{thm:only-eth} it is enough to use the original version of Yannakakis reduction.
% we exploit the fact that our initial
%instance of \olashort has bounded degree, which allows us to obtain
%linear number of vertices in the final instance.

\begin{definition}A bipartite graph $(A, B ,F)$ with vertices $A \uplus B$ and edges $F$ is a \emph{chain graph} if the set of vertices $A$ (called left side) can be ordered $v_1, v_2, \dots, v_n$ 
(called left order) in such a way that $N(v_1) \subseteq N(v_2) \subseteq \dots \subseteq N(v_n).$
\end{definition}

In the \chaincompletion problem given a bipartite graph $(A, B, F)$
one is asked to add a minimum number of edges $F' \subseteq A \times B \setminus F$ 
such that $(A, B, F\cup F')$ is a chain graph.

\begin{lemma}\label{lem:ola-to-chain}
There is a polynomial time algorithm,
which given an instance $I=(G=(V,E),k)$ of \ola
creates an equivalent instance $I'=(G'=(A, B, F), k')$ of 
\chaincompletion, such that the number of vertices
of $G'$ is bounded by $\Oh(\Delta_G \cdot |V|)$,
where $\Delta_G$ is the maximum degree of $G$.
The reduction works even if $G$ is a multigraph.
\end{lemma}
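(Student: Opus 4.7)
The plan is to adapt Yannakakis's bipartite-incidence reduction from \olashort to \chaincompletion, preceded by a degree-regularization preprocessing. The preprocessing is needed because in the naive incidence construction, the chain completion cost acquires a degree-weighted term $\sum_v \deg_G(v)\,\pi(v)$ that couples the optimum ordering to the degree sequence; regularizing $G$ makes this term a constant, after which the chain completion cost equals the OLA cost of $G$ up to an explicit additive shift. The whole construction will use $\Oh(\Delta_G \cdot |V|)$ vertices, as required.

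First, I would preprocess $G$ into a $D$-regular multigraph $\widetilde{G}$ on the same vertex set $V$ with $D := 2\Delta_G$. I double every edge of $G$ (so every vertex degree becomes even and the OLA cost doubles), and then add $(D-2\deg_G(v))/2$ self-loops at each vertex $v$ to bring its degree up to $D$. Since a self-loop $vv$ contributes $|\pi(v)-\pi(v)|=0$ to OLA cost, these additions preserve the OLA value: $\mathrm{OLA}(\widetilde{G}) = 2\,\mathrm{OLA}(G)$. The multigraph $\widetilde{G}$ has $|V|$ vertices and $D|V|/2 = \Delta_G|V|$ edges.

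Next, I construct $G' = (A, B, F)$: take $A := V$ and, for each edge $e$ of $\widetilde{G}$ counted with multiplicity (so each self-loop and each copy of a multi-edge gives its own representative), add a vertex $b_e$ to $B$. For a non-loop edge $e = uv$ put both $\{u, b_e\}$ and $\{v, b_e\}$ into $F$; for a self-loop $e = vv$ put only $\{v, b_e\}$ into $F$. Then $|A| + |B| = |V| + \Delta_G|V| = \Oh(\Delta_G \cdot |V|)$, matching the desired size bound, and $G'$ is clearly computable in polynomial time.

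For the cost analysis, fix a left-ordering $\pi$ of $A$: making $G'$ into a chain graph compatible with $\pi$ requires extending each neighborhood $N(b)$ to the final segment $\{a \in A : \pi(a) \geq \min_{a' \in N(b)}\pi(a')\}$, which adds $(|V|+1-\min_{a\in N(b)}\pi(a))-|N(b)|$ edges per $b \in B$. Summing over non-loop edges (each $b_e$ contributes $|V|-1-\min(\pi(u),\pi(v))$) and over self-loops (each $b_e$ contributes $|V|-\pi(v)$), then applying the identity $\min(x,y) = \tfrac{1}{2}(x+y) - \tfrac{1}{2}|x-y|$ and using $D$-regularity of $\widetilde{G}$ to see that $\sum_v \deg_{\widetilde{G}}(v)\,\pi(v) = D\binom{|V|+1}{2}$ is independent of $\pi$, the total chain completion cost collapses to $C + \mathrm{OLA}(G,\pi)$ for the explicit integer $C := \tfrac{\Delta_G |V|(|V|-1)}{2} - 2|E(G)|$. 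Setting $k' := k + C$ then gives the equivalence: $(G',k')$ is a yes-instance of \chaincompletion iff $\mathrm{OLA}(G) \leq k$ iff $(G,k)$ is a yes-instance of \olashort. The main subtlety I expect is verifying the exact cancellation between the pendant contributions produced by self-loops and the degree-weighted term coming from non-loop edges; the up-front doubling step is what makes this cancellation exact while simultaneously ensuring that $(D - 2\deg_G(v))/2$ is always a non-negative integer, sidestepping any parity issues that a pendant-based padding would otherwise introduce.
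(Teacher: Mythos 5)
Your proposal is correct and is essentially the same construction as the paper's: the paper pads each vertex $v$ with $\Delta_G-\deg_G(v)$ dummy pendants to make $|S_v|=\Delta_G$, which is exactly what your ``add $\Delta_G-\deg_G(v)$ self-loops then take the incidence bipartite graph'' step produces, and your doubling step corresponds to the paper introducing two edge-vertices $u_e\in S_u$, $v_e\in S_v$ per edge $e=uv$, each with neighborhood $\{u,v\}$. The resulting bipartite graphs are isomorphic and both yield the identical cost formula $\Delta_G\frac{n(n-1)}{2}+\mathrm{OLA}(G,\pi)-2|E|$, so the two arguments differ only in narrative, not in substance.
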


\begin{proof}
As the left side of $G'$ we take $A=V$.
For each vertex $v \in V$ create a set of $\Delta_G$ new vertices $S_v=\{ v_e : e \in \delta_G(v) \} \cup \{v_i : \deg(v) < i \le \Delta_G\}$.
We define $B$ as the union of all the sets $S_v$, thus $B$ contains exactly $\Delta_G \cdot |V|$ vertices.
The set of edges $F$ is constructed as follows. %For each $v_i \in B$, where $i \in \mathbb{N}$  we add to $F$ an edge $vv_i$.
For each $w \in S_v$ we add to $F$ an edge $vw$.
Additionally, for each $v_e \in B$, where $e \in E, e=uv$ we add to $F$ an edge $uv_e$, so that the vertex $v_e$ is of degree exactly two in $G'$.
The described transformation is depicted in Fig.~\ref{fig:transformation}.
To finish the construction of $I'$ we define $k' = k + \Delta_G \frac{n(n-1)}{2} - 2|E|$.

 \begin{figure}[h]
  \subfigure[{\olashort instance $G$~~~~~~~~~}]{
\begin{tikzpicture}[thick,scale=1]\label{fig:graphG}
\node[draw,circle,fill=white,minimum size=4pt,
                            inner sep=0pt, label=above:{{$c$}}] (ur) at (1,1.5) {};
\node[draw,circle,fill=white,minimum size=4pt,
                            inner sep=0pt,label=above:{$b$}] (ul) at (-1,1.5) {} edge[-] node[above]{$e_2$}(ur) ;
                            
\node[draw,circle,fill=white,minimum size=4pt,
                            inner sep=0pt,label=below:{$a$}] (bl) at (-1,-0.5) {} edge[-]node[left]{$e_1$}(ul);
\node[draw,circle,fill=white,minimum size=4pt,
                            inner sep=0pt, label=below:{$d$}] (br) at (1,-0.5) {} edge[-]node[above, right]{$e_3$}(ul) edge[-]node[right]{$e_4$}(ur);
\node[label=right:{\LARGE$\rightarrow$}] (arrow) at (1.75, 0.5){};
\end{tikzpicture}}
%{\LARGE$\xrightarrow{T}$}
%\caption{graph $G$}
\subfigure[\chaincompletion instance $G'$]{
\begin{tikzpicture}[thick,scale=1]\label{fig:graphG'}
\node[draw,circle,fill=white,minimum size=4pt,
                            inner sep=0pt, label=above:{{$a$}}] (a) at (-6,1) {};
\node[draw,circle,fill=white,minimum size=4pt,
                            inner sep=0pt,label=above:{$b$}] (b) at (-3,1) {};                           
\node[draw,circle,fill=white,minimum size=4pt,
                            inner sep=0pt,label=above:{$c$}] (c) at (0,1) {} ;
\node[draw,circle,fill=white,minimum size=4pt,
                            inner sep=0pt, label=above:{$d$}] (d) at (3,1) {};
                            
\node[draw,circle,fill=white,minimum size=4pt,
                            inner sep=0pt,label=below:{$a_{e_1}$}] (vab) at (-7,-1) {} edge[-](a) edge[dashed](b);                           
\node[draw,circle,fill=white,minimum size=4pt,
                            inner sep=0pt,label=below:{$a_2$}] (a2) at (-6,-1) {} edge[-](a);
\node[draw,circle,fill=white,minimum size=4pt,
                            inner sep=0pt, label=below:{$a_3$}] (a3) at (-5,-1) {} edge[-](a);
                            
\node[draw,circle,fill=white,minimum size=4pt,
                            inner sep=0pt,label=below:{$b_{e_1}$}] (bba) at (-4,-1) {} edge[](b) edge[dashed](a);                           
\node[draw,circle,fill=white,minimum size=4pt,
                            inner sep=0pt,label=below:{$b_{e_2}$}] (bbc) at (-3,-1) {} edge[-](b) edge[dashed](c);
\node[draw,circle,fill=white,minimum size=4pt,
                            inner sep=0pt, label=below:{$b_{e_3}$}] (bbd) at (-2,-1) {} edge[-](b) edge[dashed](d);
                           
\node[draw,circle,fill=white,minimum size=4pt,
                            inner sep=0pt,label=below:{$c_{e_2}$}] (ccb) at (-1,-1) {} edge[-](c) edge[dashed](b);                           
\node[draw,circle,fill=white,minimum size=4pt,
                            inner sep=0pt,label=below:{$c_{e_4}$}] (ccd) at (0,-1) {} edge[-](c) edge[dashed](d);
\node[draw,circle,fill=white,minimum size=4pt,
                            inner sep=0pt, label=below:{$c_3$}] (c3) at (1,-1) {} edge[-](c);

\node[draw,circle,fill=white,minimum size=4pt,
                            inner sep=0pt,label=below:{$d_{e_3}$}] (ddb) at (2,-1) {} edge[-](d) edge[dashed](b);                           
\node[draw,circle,fill=white,minimum size=4pt,
                            inner sep=0pt,label=below:{$d_{e_4}$}] (ddc) at (3,-1) {} edge[-](d) edge[dashed](c);
\node[draw,circle,fill=white,minimum size=4pt,
                            inner sep=0pt, label=below:{$d_3$}] (d3) at (4,-1) {} edge[-](d);

\end{tikzpicture}}
\caption{Transformation of \ola to \chaincompletion}\label{fig:transformation}
%\caption{graph $G'$}

%\end{subfigure}

%\caption{Forbidden induced subgraphs for interval graphs}\label{fig:interval}
\end{figure}

For a given ordering $\pi$ of the vertices of the graph $G=(V,E)$ denote by $C(G, \pi)$ the cost of arrangement induced by this ordering (i.e., $C(\pi,G)=\sum_{uv\in E} |\pi(u)-\pi(v)|$). For an ordering $\sigma$ of the left side $A$ of the bipartite graph $G'=(A, B, F)$ denote by $E(G',\sigma)$ the number of edges that we should add to obtain a minimal chain graph in which the left order coincides with $\sigma$. We will prove the following claim.

% Equivalence of the instances $I$ and $I'$ follows from the following claim, the proof of which is postponed to Appendix~\ref{sec:app_5}.

\begin{claim}
\label{claim:chain-reduction}
For any ordering $\pi$ of $V$ (or equivalently $A$) we have $E(G',\pi)=C(G,\pi)+\Delta_G\frac{n(n-1)}{2}-2|E|$. 
\end{claim}

\begin{proof}
Let us inspect what is the number of edges in a minimal chain bipartite graph $G''$ which has left order $\pi=(v_1, v_2, \dots, v_n)$ and contains $G'=(A, B, F)$ as a subgraph. We know that $N(v_i)\subseteq N(v_j)$ for any $i<j$, which means that each $x \in S_{v_i}$ must be connected to all vertices from the set $\{v_i,v_{i+1}, \dots, v_n\}$.
So it means that each vertex from $S_{v_i}$ is connected to at least $(n+1)-i$ vertices from the set $A$.
Moreover, by minimality of $G''$ the vertex $x$ is connected to exactly this number of the vertices if $x$ does not correspond to any edge in $G$,
i.e., when $x$ is of degree exactly one in $G'$.
If a vertex $w_i \in S_{v_i}$ corresponds to some edge $e \in G$ with endpoints $v_i,v_j$ (note there could be several edges with equal endpoints as we are working with multigraphs)
then the vertex $w_i$ is connected to vertices $v_i, v_j$ in the graph $G'$. 
Hence, in a minimal chain graph $G''$ the degree of $w_i$ is either $(n+1)-i$ or $(n+1) - j = (n+1)-i+(i-j)$ depending on whether $i<j$ or $i>j$. 
Note that there is a second vertex $w_j \in S_{v_j}$ which also corresponds to the edge $e$. 
The degrees of $w_i$ and $w_j$ in $G''$ both are equal to  $(n+1)-i$ or $(n+1) - j$, depending whether $i<j$ or $i>j$. 
In both cases the sum of degrees $w_i, w_j$ in $G''$ can be written as $((n+1)- i)+ ((n+1) -j) + |i-j|$.
Hence, for each edge $e$ with endpoints $v_i,v_j$ we have additional cost of $|i-j|$.  
Summing up, we infer that the number of edges in $G''$ equals 
$$\Delta_G (\sum_i^n((n+1)-i))+\sum_{v_iv_j\in E}|i-j|=\Delta_G\frac{n(n+1)}{2}+C(G,\pi).$$

The number of added edges equals the number of edges in $G''$ minus the number of edges in $G'$. 
So we add exactly $$(\Delta_G\frac{n(n+1)}{2}+C(G,\pi))-(\Delta_Gn+2|E|)=\Delta_G\frac{n(n-1)}{2}+C(G,\pi)-2|E|$$
edges.
\end{proof}

Equivalence of the instances $I$ and $I'$ follows from the claim, and it proves the lemma.

\end{proof}

Having an instance of \chaincompletion we transform it further
to an instance of other completion problems
by simply making $A$ a clique, or by making
both $A$ and $B$ cliques.
By inspecting the forbidden subgraphs definition of each graph class
we infer the equivalence of the instances,
which is enough to prove Theorems~\ref{thm:only-eth} and~\ref{thm:main}.

%Due to lack of space the details are given in Appendix~\ref{sec:app_5}.

\begin{figure}[ht]
	\centering
		\subfigure[claw]{ 
			\begin{tikzpicture}
			\node[draw,circle,fill=white,minimum size=4pt,
			inner sep=0pt] (c) at (0:0) {};
			\node[draw,circle,fill=white,minimum size=4pt,
			inner sep=0pt] (r1) at (-30:1) {} edge[-](c) ;
			\node[draw,circle,fill=white,minimum size=4pt,
			inner sep=0pt] (u1) at (90:1) {} edge[-](c);
			\node[draw,circle,fill=white,minimum size=4pt,
			inner sep=0pt] (l1) at (210:1) {} edge[-](c);
			\end{tikzpicture}}
		\hspace{.04\textwidth}
		\subfigure[$P_4$]{ 
			\begin{tikzpicture}
			\node[draw,circle,fill=white,minimum size=4pt,
				inner sep=0pt] (v1) at (-1.5,0) {};
				\node[draw,circle,fill=white,minimum size=4pt,
				inner sep=0pt] (v2) at (-0.5,0) {} edge[-](v1) ;
				\node[draw,circle,fill=white,minimum size=4pt,
				inner sep=0pt] (v3) at (0.5,0) {} edge[-](v2);
				\node[draw,circle,fill=white,minimum size=4pt,
				inner sep=0pt] (v4) at (1.5,0) {} edge[-](v3);
				\end{tikzpicture}}
			\hspace{.04\textwidth}
	\subfigure[$2 K_2$]{ 
		\begin{tikzpicture}
		\node[draw,circle,fill=white,minimum size=4pt,
		inner sep=0pt] (v1) at (-0.5,-0.5) {};
		\node[draw,circle,fill=white,minimum size=4pt,
		inner sep=0pt] (v2) at (-0.5,0.5) {} edge[-](v1) ;
	    \node[draw,circle,fill=white,minimum size=4pt,
		inner sep=0pt] (v3) at (0.5,-0.5) {} ;
		\node[draw,circle,fill=white,minimum size=4pt,
		inner sep=0pt] (v4) at (0.5,0.5) {} edge[-](v3);
		\end{tikzpicture}}
	\hspace{.04\textwidth}
	\subfigure[$C_n$, $n\geq 4$]{
		\begin{tikzpicture}
		\node[draw,circle,fill=white,minimum size=4pt,
		inner sep=0pt, label=above:{\tiny{$n$}}] (ur) at (1,1.5) {};
		\node[draw,circle,fill=white,minimum size=4pt,
		inner sep=0pt,label=above:{\tiny{$1$}}] (ul) at (-1,1.5) {} edge[-](ur) ;
		
		\node[draw,circle,fill=white,minimum size=4pt,
		inner sep=0pt,label=below:{\tiny{$2$}}] (bl) at (-1,-0.5) {} edge[-](ul);
		\node[draw,circle,fill=white,minimum size=4pt,
		inner sep=0pt, label=below:{\tiny{$\dots$}}] (br) at (1,-0.5) {} edge[-](bl) edge[-](ur);
		\end{tikzpicture}}
		\hspace{.04\textwidth}
	\subfigure[bipartite claw]{ 
		\begin{tikzpicture}
		\node[draw,circle,fill=white,minimum size=4pt,
		inner sep=0pt] (c) at (0:0) {};
		\node[draw,circle,fill=white,minimum size=4pt,
		inner sep=0pt] (r1) at (-30:1) {} edge[-](c) ;
		\node[draw,circle,fill=white,minimum size=4pt,
		inner sep=0pt] (r2) at (-30:2) {} edge[-](r1);
		\node[draw,circle,fill=white,minimum size=4pt,
		inner sep=0pt] (u1) at (90:1) {} edge[-](c);
		\node[draw,circle,fill=white,minimum size=4pt,
		inner sep=0pt] (u2) at (90:2) {} edge[-](u1);
		\node[draw,circle,fill=white,minimum size=4pt,
		inner sep=0pt] (l1) at (210:1) {} edge[-](c);
		\node[draw,circle,fill=white,minimum size=4pt,
		inner sep=0pt] (l2) at (210:2) {} edge[-](l1);
		\end{tikzpicture}}
	\hspace{.04\textwidth}
	\subfigure[umbrella]{
		\begin{tikzpicture}
		\node[draw,circle,fill=white,minimum size=4pt,
		inner sep=0pt] (c) at (0,1) {};
		\node[draw,circle,fill=white,minimum size=4pt,
		inner sep=0pt] (u) at (0,2) {} edge[-](c) ;
		\node[draw,circle,fill=white,minimum size=4pt,
		inner sep=0pt] (r1) at (1,1) {} edge[-](c) edge[-](u);
		\node[draw,circle,fill=white,minimum size=4pt,
		inner sep=0pt] (r2) at (2,1) {} edge[-](u) edge[-](r1);
		\node[draw,circle,fill=white,minimum size=4pt,
		inner sep=0pt] (l1) at (-1,1) {} edge[-](c) edge[-](u);
		\node[draw,circle,fill=white,minimum size=4pt,
		inner sep=0pt] (r1) at (-2,1) {} edge[-](u) edge[-](l1);
		\node[draw, circle,fill=white,minimum size=4pt,inner sep=0pt] (b) at (0,-1){} edge[-](c);
		\end{tikzpicture}}
	\hspace{.04\textwidth}
	\subfigure[$n$-net, $n\geq 2$]{
		\begin{tikzpicture}
		\node[draw,circle,fill=white,minimum size=4pt,
		inner sep=0pt] (u2) at (0,2) {};
		\node[draw,circle,fill=white,minimum size=4pt,
		inner sep=0pt] (u1) at (0,1) {} edge[-](u2) ;
		\node[draw,circle,fill=white,minimum size=4pt,
		inner sep=0pt,label=below:{\tiny{$1$}}] (l2) at (-1.5,-0.5) {} edge[-](u1);
		\node[draw,circle,fill=white,minimum size=4pt,
		inner sep=0pt, label=below:{\tiny{$2$}}] (l1) at (-0.6,-0.5) {} edge[-](u1) edge[-](l2);
		
		\node[draw,circle,fill=white,minimum size=4pt,
		inner sep=0pt, label=below:{\tiny{$n$}}] (r2) at (1.5,-0.5) {} edge[-](u1)  ;
		\node[draw,circle,fill=white,minimum size=4pt,
		inner sep=0pt, label=below:{\tiny{$\dots$}}] (r1) at (0.6,-0.5) {} edge[-](u1) edge[-](r2) edge[-](l1);
		
		\node[draw,circle,fill=white,minimum size=4pt,
		inner sep=0pt] (ll) at (-2,-1) {} edge[-](l2);
		\node[draw,circle,fill=white,minimum size=4pt,
		inner sep=0pt] (rr) at (2,-1) {} edge[-](r2);
		\end{tikzpicture}}
	\hspace{.04\textwidth}
	\subfigure[$n$-tent, $n\geq 3$]{
		\begin{tikzpicture}
		\node[draw,circle,fill=white,minimum size=4pt,
		inner sep=0pt] (u) at (0,2) {};
		
		\node[draw,circle,fill=white,minimum size=4pt,
		inner sep=0pt] (l) at (-0.5,0.7) {} edge[-](u) ;
		\node[draw,circle,fill=white,minimum size=4pt,
		inner sep=0pt] (r) at (0.5,0.7) {} edge[-](u) edge[-](l);
		
		\node[draw,circle,fill=white,minimum size=4pt,
		inner sep=0pt,label=below:{\tiny{$1$}} ] (b1) at (-2,-0.8) {} edge[-](l);
		\node[draw,circle,fill=white,minimum size=4pt,
		inner sep=0pt, label=below:{\tiny{$2$}}] (b2) at (-1,-0.8) {} edge[-](b1) edge[-](l) edge[-](r) ;
		\node[draw,circle,fill=white,minimum size=4pt,
		inner sep=0pt, label=below:{\tiny{$3$}}] (b3) at (0,-0.8) {} edge[-](b2) edge[-](l) edge[-](r);
		\node[draw,circle,fill=white,minimum size=4pt,
		inner sep=0pt,label=below:{\tiny{$\dots$}}] (b4) at (1,-0.8) {} edge[-](b3) edge[-](l) edge[-](r);
		\node[draw,circle,fill=white,minimum size=4pt,
		inner sep=0pt,label=below:{\tiny{$n$}}] (b5) at (2,-0.8) {} edge[-](r) edge[-](b4);
		\end{tikzpicture}}
	\caption[]{Forbidden induced subgraphs for various graph classes}\label{fig:interval}
\end{figure}
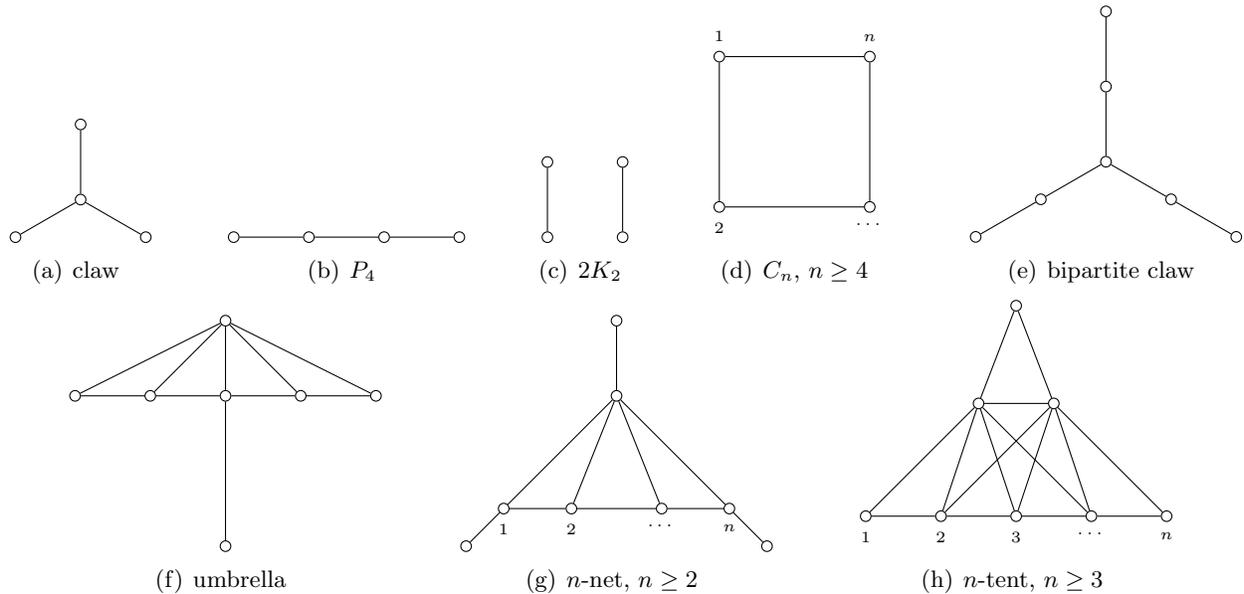

\begin{comment}
\begin{figure}
	\begin{tikzpicture}
	\node[draw,circle,fill=white,minimum size=4pt,
	inner sep=0pt] (u) at (0,2) {};
	
	\node[draw,circle,fill=white,minimum size=4pt,
	inner sep=0pt] (l) at (-0.5,0.7) {} edge[-](u) ;
	\node[draw,circle,fill=white,minimum size=4pt,
	inner sep=0pt] (r) at (0.5,0.7) {} edge[-](u) edge[-](l);
	
	\node[draw,circle,fill=white,minimum size=4pt,
	inner sep=0pt,label=below:{\tiny{$1$}} ] (b1) at (-2,-0.8) {} edge[-](l);
	\node[draw,circle,fill=white,minimum size=4pt,
	inner sep=0pt, label=below:{\tiny{$2$}}] (b2) at (-1,-0.8) {} edge[-](b1) edge[-](l) edge[-](r) ;
	\node[draw,circle,fill=white,minimum size=4pt,
	inner sep=0pt, label=below:{\tiny{$3$}}] (b3) at (0,-0.8) {} edge[-](b2) edge[-](l) edge[-](r);
	\node[draw,circle,fill=white,minimum size=4pt,
	inner sep=0pt,label=below:{\tiny{$\dots$}}] (b4) at (1,-0.8) {} edge[-](b3) edge[-](l) edge[-](r);
	\node[draw,circle,fill=white,minimum size=4pt,
	inner sep=0pt,label=below:{\tiny{$n$}}] (b5) at (2,-0.8) {} edge[-](r) edge[-](b4);
	\end{tikzpicture}
	\end{figure}	
\end{comment}

Classes of chordal, interval, proper interval, threshold, trivially perfect graphs have many characterizations. For our purposes the most convenient one is by the set of forbidden induced subgraphs \cite{brandstadt}. The characterization is presented in Table~1.%\ref{tab:forbidden}.

\begin{table}
	\begin{center}
		\begin{tabular}{|c|c|}
			\hline
			~~Graph class name~~   & ~~Forbidden induced subgraphs~~     \\
			\hline
			Chordal & $C_n$ for $n \geq 4$             \\
			Proper Interval & claw, $2$-net, $3$-tent, $C_n$ for $n\geq 4$\\
			Interval & bipartite claw, umbrella, $n$-net for $n\geq 2$, $n$-tent for $n \geq 3$, $C_n$ for $n\geq 4$ \\
			Threshold & $2K_2, C_4, P_4$ \\
			Trivially perfect & $C_4, P_4$ \\
			\hline
		\end{tabular}
	\end{center}
	\caption{Characterization of graph classes by forbidden induced subgraphs}
	\label{tab:forbidden}
\end{table}

\begin{comment}
\begin{definition}\textbf{Chordal graph} is a graph in which each cycle contains a chord (graph which do not contain induced cycles of size 4 and bigger).
\end{definition}

\begin{definition}An  interval/proper interval graph is a graph which admits an intersection model where to each vertex an interval/unit interval on a line is assigned and two vertices are connected with an edge if and only if the corresponding intervals intersect.
\end{definition}

\begin{definition}Chordal/Interval/Proper Interval  Completion is a problem in which one is asked to add a minimum number of edges to a given graph in order to create  chordal/interval/proper interval graph.
\end{definition}
\end{comment}

%\begin{theorem}
%(TO DO: should be assuming our hypothesis is true).
%If Optimum Linear Arrangement problem on bounded degree graphs is not solvable in $\Oh(2^{o(n)})$ time then problems: Chain, Chordal(Minimum Fill-in), Interval, Proper Interval Completions are not solvable in $\Oh(2^{o(n)})$ .
%\end{theorem}
\begin{lemma}\label{lem:chain-to-compl}
%There is a polynomial time reduction from \ola to \chaincompletion problem which transform an instance with $n$ vertices and $m$ edges into instance with $(\Delta_G+1)n$ vertices and $\Delta_G n+2m$ edges.
There are polynomial time reductions from \chaincompletion problem to \minimumfillin, \intervalcompletion, \properintervalcompletion, \thresholdcompletion, \triviallyperfectcompletion problems and these reductions do not change vertex set.

%Unless Hypothesis \ref{h.bisection} fails, there is no $\Oh(2^{o(n+m)})$ time algorithm for\\ \chaincompletion\ and no $\Oh(2^{o(n)})$ time algorithms for: \minimumfillin, \\ \intervalcompletion, \properintervalcompletion.
\end{lemma}

\begin{proof}
%Previously constructed reduction transforms an instance of \ola problem $G=(V,E)$ with $n$ vertices and maximum degree $\Delta_G$ into  an equivalent instance of \chaincompletion problem with $(\Delta_G+1)n$ vertices and $\Delta_G n+2m$ edges.
 %So it means that if the \chaincompletion is solvable in $2^{o(n+m)}$ then Optimum Linear Arrangement is solvable in $\Oh(2^{o(n)})$ which lead to failure of hypothesis \ref{h.bisection}.

For any bipartite graph $H'=(U_1,U_2,F)$ consider a graph $Ch(H')=(U_1,U_2,F\cup \{uv| u,v \in U_1\}\cup \{uv|u,v \in U_2\})$. In \cite{yannakakis}  it is shown that any bipartite graph $H'$ is a chain graph if and only if $Ch(H')$ is a chordal graph. It means that in order to reduce an instance of \chaincompletion $H=(A,B,F)$ to \minimumfillin it is enough to construct cliques on sets of vertices $A$ and $B$.
% If we pipeline transformation from \ola to \chaincompletion and from \chaincompletion to \minimumfillin we get desired result for \minimumfillin.
So the constructed graph is a union of two cliques and some edges between cliques. Hence any arbitrarily completion of this graph does not contain a claw, bipartite claw, umbrella, $p$-net, $q$-tent for $p \geq 2, q\geq3$ as these graphs have an independent set of size $3$ and a union of two cliques does not. It follows that solutions for \minimumfillin, \properintervalcompletion,\intervalcompletion problems on such instances coincide and we can look at this reduction to \minimumfillin as a reduction to \properintervalcompletion or \intervalcompletion problems.

It is left to show a reduction from a \chaincompletion instance to \thresholdcompletion and \triviallyperfectcompletion instances. Having a \chaincompletion instance with a bipartite graph $H=(A,B,F)$ we consider \triviallyperfectcompletion and \thresholdcompletion problems on the graph $G=(A\cup B, F \cup \{(u,v)| u,v \in A\})$.  We just add edges such that $A$ becomes a clique. We show that a minimum chain completion of the graph $H$ corresponds to a completion towards trivially perfect or threshold graph. Let $F'$ be a solution of \chaincompletion for the graph $H$. Consider graph $G'=(A\cup B, F \cup \{(u,v)| u,v \in A\} \cup F')$,  $G'$ is a union of independent set and clique plus some edges between independent set and clique. So it does not contain induced $2K_2$ or $C_4$. If $G'$ contains induced $P_4=v_1v_2v_3v_4$ then $v_2,v_3$ belong to clique and $v_1, v_4$ to independent set. However, this contradict to fact that edges between clique and independent set form a chain completion.   As $G'$ does not contain induced $2K_2, P_4, C_4$ it is trivially perfect and threshold graph. Let now $F'$ denote solution of \triviallyperfectcompletion or \thresholdcompletion on instance $G'$. To finish the proof of correctness of reduction it is enough to show that $(A, B, (F\cup F') \cap E(A,B))$ is a chain graph. If this graph is not a chain graph then it must contain two independent edges $v_1v_2, v_3v_4$ \cite{yannakakis}. However in such case graph on vertices $v_1,v_2,v_3,v_4$ will induce a $P_4$ in graph $G' \cup F'$ which contradict to the fact that $G'\cup F'$ is a threshold or trivially perfect graph.

To sum up all our reductions from \chaincompletion to \minimumfillin, \properintervalcompletion, \intervalcompletion, \thresholdcompletion, \triviallyperfectcompletion  add some edges to a graph of \chaincompletion instance and do not change a vertex set of a graph.
%That is why the resulted instances have claimed size with $(\Delta_G+1)n$ vertices.     
\end{proof}

At this point we almost proved Theorem~\ref{thm:only-eth} and Theorem~\ref{thm:main}.
\begin{reptheorem}{thm:only-eth}
	Unless ETH fails, there is an integer $c\geq 1$ such that there are no $2^{\Oh(\sqrt{n}/\log^c n)}$, and consequently no $2^{\Oh(k^{1/4}/\log^c k)}\cdot n^{\Oh(1)}$ algorithms for the following problems: \minimumfillin, \intervalcompletion, \properintervalcompletion, \triviallyperfectcompletion, \thresholdcompletion, {\sc{Chain Completion}}.
\end{reptheorem}
\begin{proof}
If the statement is not true then for some of the problems there is an algorithm running in time $2^{\Oh(\sqrt{n}/\log^c n)}$. Having instance of \ola on $n$ vertices we can reduce it to problem under consideration with $(\Delta_G+1)n=\Oh(n^2)$ vertices by Lemmas~\ref{lem:ola-to-chain} and \ref{lem:chain-to-compl}. This gives us an  $2^{\Oh(\sqrt{n^2}/\log^c n^2)}=2^{\Oh(n/log^c n)}$ time  algorithm which contradicts Theorem~\ref{thm:eth-ola}. As $k \leq n^2$ we also have  $2^{\Omega(k^{1/4}/\log^c k)}\cdot n^{\Oh(1)}$ lower bound on the running time.
\end{proof}

\begin{reptheorem}{thm:main}
	Unless Hypothesis~\ref{h.bisection} fails, there is no $2^{o(n+m)}$-time algorithm for {\sc{Chain Completion}}, and no $2^{o(n)}$-time algorithms for \minimumfillin, \intervalcompletion, \properintervalcompletion, \triviallyperfectcompletion, and \thresholdcompletion. Consequently, none of these problems can be solved in time $2^{o(\sqrt{k})}\cdot n^{\Oh(1)}$.
\end{reptheorem}
\begin{proof}
	In Section~\ref{sec:sparse} we  transformed a $d$-regular \minb instance to \ola instance with bounded degree. Pipelined with Lemma~\ref{lem:ola-to-chain} we get a reduction from $d$-regular \minb to a \chaincompletion instance with $\Oh(n)$ vertices and edges. So $2^{o(n+m)}$-time algorithm for \chaincompletion  contradicts Hypothesis~\ref{h.bisection}. By Lemma~\ref{lem:chain-to-compl} we can reduce \chaincompletion  to \minimumfillin, \intervalcompletion, \properintervalcompletion, \triviallyperfectcompletion, and \thresholdcompletion instance without changing the vertex set. Combining all three reductions in one we get reductions from \minb to \minimumfillin, \properintervalcompletion, \intervalcompletion, \thresholdcompletion, \triviallyperfectcompletion problems which transform an instance with $n$ vertices into an instance with $\Oh(n)$ vertices. This leads to $2^{\Omega(n)}$ lower bound for all discussed completion problems as well as to $2^{\Omega(\sqrt{k})}\cdot n^{\Oh(1)}$ lower bound because $k \leq n^2$.	   
\end{proof}

\section{Hardness of \fast}
\label{sec:fast}

In this section we prove Theorem~\ref{thm:only-eth-fast}, that is, the lower bound on the complexity of \fast. We start with preparing an appropriately hard instance of \fas in general digraphs, so that we can apply the reduction of Ailon et al.~\cite{AilonCN08}.

\subsection{Preparing a hard instance of \fasshort}

By $\ssat[1,1]{d,d}$ we denote the version of $\sat[3]$ where every variable has 
\begin{itemize}
\item exactly $1$ positive occurrence in a clause of size $3$,
\item exactly $1$ negative occurrence in a clause of size $3$,
\item exactly $d$ positive occurrences in clauses of size $2$,
\item exactly $d$ negative occurrences in clauses of size $2$, and
\item there are no clauses of size $1$.
\end{itemize}
Similarly as before, $\gapssat[1,1]{d,d}_{[\bounda, \boundb]}$ for $0\leq \bounda<\boundb\leq 1$ is the problem of distinguishing whether the maximum number of clauses that can be satisfied in a given instance of $\ssat[1,1]{d,d}$ is at most $\bounda m$, or at least $\boundb m$, where $m$ is the total number of clauses. We now give a hardness result for $\gapssat[1,1]{d,d}_{[\bounda, \boundb]}$.

\begin{lemma}\label{lem:gapssat}
There exists a positive integer $d$ such that
 $$\gapenaesat[3]_{[\bounda, 1]} \linleq \gapssat[1,1]{d,d}_{[\frac{1+\bounda+3d}{2+3d}, 1]}$$
\end{lemma}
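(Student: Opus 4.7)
The plan is a variable-cloning construction in which each NAE clause is split into its two ordinary complements and the resulting literal occurrences are then balanced via expander-based equivalence gadgets.

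Concretely, given an E3-NAE-SAT instance $\phi$ with clauses $C_1,\dots,C_m$, I would first replace each $C_j=(\ell_1\vee\ell_2\vee\ell_3)$ by the pair of ordinary 3-clauses $C_j^+=(\ell_1\vee\ell_2\vee\ell_3)$ and $C_j^-=(\neg\ell_1\vee\neg\ell_2\vee\neg\ell_3)$; the standard observation is that $C_j$ is NAE-satisfied by a given assignment iff both of $C_j^\pm$ are satisfied, and otherwise exactly one of them fails. At this point each variable $x$ with $k_x$ occurrences has $k_x$ positive and $k_x$ negative size-3 occurrences, so I introduce $k_x$ fresh copies $x^{(1)},\dots,x^{(k_x)}$ and reassign each original occurrence of $x$ to a distinct copy, preserving sign; every copy now carries exactly one positive and one negative size-3 occurrence. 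To enforce that these copies take the same truth value, I invoke Theorem~\ref{thm:expander-factory} with $p=1$ to fix an even integer $d$ such that on every number of vertices there exists a $d$-regular multigraph of Cheeger number at least $1$. For each $x$ I take such a graph $G_x$ on $\{x^{(1)},\dots,x^{(k_x)}\}$ and, for every edge $uv\in E(G_x)$, add the size-2 equivalence pair $(u\vee\neg v)$, $(\neg u\vee v)$. This gives each copy exactly $d$ positive and $d$ negative size-2 occurrences, producing a legal $\ssat[1,1]{d,d}$ instance with $3m$ variables and $(2+3d)m$ clauses.

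In the yes-case, a NAE-satisfying $\tau^\star$ of $\phi$ extends to $\tau(x^{(i)}):=\tau^\star(x)$, which satisfies every consistency clause and every $C_j^\pm$. For the no-case the key lemma I would prove is that any assignment $\tau$ leaves at least $(1-\alpha)m$ clauses unsatisfied; dividing by the total count $(2+3d)m$ yields the target gap. I take $\tau'(x)$ to be the majority value of $\tau$ on the copies of $x$ and let $\delta_x$ be the size of the minority. The Cheeger bound for $G_x$ gives at least $\delta_x$ violated size-2 equivalence clauses for $x$, so the total size-2 deficit $U_2(\tau)$ is at least $\sum_x\delta_x$. Among the $\ge(1-\alpha)m$ original clauses not NAE-satisfied by $\tau'$, I classify each as Type I (every one of its three relevant copies has $\tau$-value equal to $\tau'$-value) or Type II (at least one of them disagrees). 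A Type I clause forces $\tau$ to give all three literals of $C_j$ the same truth value, so exactly one of $C_j^\pm$ fails, contributing one to $U_3(\tau)$; Type II clauses are charged to their minority copies, and since every copy lies in exactly one NAE clause the number of Type II clauses is at most $\sum_x\delta_x\le U_2(\tau)$. Summing these two bounds gives the desired $U_2(\tau)+U_3(\tau)\ge(1-\alpha)m$.

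The main obstacle is precisely this charging step: one has to balance the potential gain a non-uniform $\tau$ could extract from the size-3 clauses against the cost it must pay on the size-2 consistency clauses. Fixing the Cheeger constant to be at least $1$ is what makes the bookkeeping close cleanly, since then one minority copy costs at least one size-2 clause and can rescue at most one NAE clause through the Type I/II split. Minor boundary issues such as $k_x=1$, where $G_x$ degenerates to a bouquet of self-loops yielding trivially satisfied equivalence clauses, are absorbed by the multigraph guarantee of Theorem~\ref{thm:expander-factory}; the whole construction runs in polynomial time and has size linear in $m$, so it is a $\linleq$-reduction.
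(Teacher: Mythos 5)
Your construction and yes-direction match the paper's. The interesting difference is in the no-direction: the paper proves that the "minority-to-majority rounding" step cannot decrease the number of satisfied clauses, which requires $h(G_x)\geq 2$ (hence $p=2$), and then transfers the contradiction back to $\phi$. Your charging argument (split the $\geq(1-\alpha)m$ violated NAE clauses into Type~I/Type~II, bound Type~I by $U_3$ directly and Type~II by the minority count, which in turn is bounded by $U_2$ via the Cheeger constant) gives the same bound directly and only needs $h(G_x)\geq 1$, so $p=1$ suffices. Both are correct; yours is slightly more economical and avoids the modification-of-assignment detour. The key structural fact making your bound tight is the same in both proofs: each copy $x^{(i)}$ appears in exactly one NAE clause, so the charging of Type~II clauses to minority copies is injective.

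There is, however, a concrete bug in your construction that you acknowledge but do not actually resolve: self-loops. The expanders produced by Theorem~\ref{thm:expander-factory} are multigraphs with self-loops, and the paper explicitly uses the convention that a self-loop contributes $1$ to the degree. You propose adding the \emph{pair} $(u\vee\neg v),(\neg u\vee v)$ for every edge, including $u=v$; a self-loop then contributes $2$ positive and $2$ negative occurrences of $u$ to the $2$-clause side, rather than the required $1$ of each. Since the number of self-loops can vary across vertices, the result is not a legal $\ssat[1,1]{d,d}$ instance (variables would have different and too-large $2$-clause occurrence counts), and the clause count $(2+3d)m$ would also be wrong. The paper handles this by adding a single trivial clause $(\neg x_i\vee x_i)$ per self-loop, keeping the per-vertex contribution at exactly one positive and one negative occurrence and preserving the $(2+3d)m$ total. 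Your "absorbed by the multigraph guarantee" remark does not address this; the multigraph guarantee is about existence of the expander, not about the degree bookkeeping your target problem requires. With that one-line fix (one clause per self-loop, two per non-loop edge), your proof goes through.
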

\begin{proof}
Let $\phi$ be the input E3-CNF formula. For every variable $x$, let $n(x)$ be the number of occurrences of $x$ in $\phi$. Let $G_x$ be a $d$-regular expander with $n(x)$ vertices, obtained using Theorem~\ref{thm:expander-factory} for $p=2$; suppose vertices of $G_x$ are numbered $1,2,\ldots,n(x)$. Create $\psi$ as follows: 
\begin{itemize}
\item Replace $x$ with $n(x)$ new variables $x_1,x_2,\ldots,x_{n(x)}$, corresponding to the occurrences of $x$ in $\phi$;
\item For every $ij\in E(G_x)$ with $i\neq j$, introduce two clauses $(\neg x_i\vee x_j)$ and $(x_i\vee \neg x_j)$, which (if unbroken) force the evaluation of $x_i$ to be equal to that of $x_j$;
\item For every $ii\in E(G_x)$, introduce a trivial clause $(\neg x_i\vee x_i)$;
\item Perform the same construction for every other variable of $\phi$;
\item For every $3$-clause $C$ of $\phi$, introduce two new clauses $C'$ and $C''$ to $\psi$. Clause $C'$ is constructed from $C$ by replacing each occurrence of some variable $x$ with the new variable $x_i$ corresponding to this occurrence; the polarity of the literals in $C'$ is the same as in $C$. Clause $C''$ is constructed from $C'$ by reversing the polarity of each literal. For example, if $C=(x\vee \neg y\vee z)$, then $C'=(x_i\vee \neg y_j\vee z_k)$ and $C''=(\neg x_i\vee y_j\vee \neg z_k)$, where $i,j,k$ are the indices of the occurrences of $x,y,z$ in $C$, respectively.
\end{itemize}
It is easy to see that every variable of the new formula $\psi$ appears in exactly two $3$-clauses, once positively and once negatively. Also, it has exactly $2d$ occurrences in $2$-clauses: $d$ positive and $d$ negative. Since the original formula had only clauses of size $3$, the new formula has exactly $3m$ variables and $m(2+3d)$ clauses. 

If there is a variable evaluation $\lambda$ for $\phi$ that NAE-satisfies all clauses of $\phi$, then we can construct a variable evaluation $\lambda'$ for $\psi$ by assigning all the variables $x_i$ that originate in variables $x$ the value $\lambda(x)$. Then it is easy to see that $\lambda'$ satisfies all the clauses of $\psi$.

Suppose now that every variable evaluation for $\phi$ NAE-satisfies at most $\bounda m$ clauses, and for the sake of contradiction suppose that there is a variable evaluation $\lambda'$ for $\psi$ that satisfies more than $(1+\bounda+3d)m$ clauses. Let $x$ be a variable of $\phi$, and let us modify $\lambda'$ as follows: assign all the variables $x_1,x_2,\ldots,x_{n(x)}$ the value that is taken by the majority of these variables in the original evaluation $\lambda'$ (breaking ties arbitrarily). Observe that since $h(G_x)\geq 2$, this step cannot decrease the number of satisfied clauses: if $q$ is the number of variables out of $x_1,x_2,\ldots,x_{n(x)}$ that take the minority value, then by replacing their values by the majority value we can unsatisfy at most $2q$ $3$-clauses, but we satisfy at least $h(G_x)\cdot q\geq 2q$ $2$-clauses that were previously unsatisfied. By performing this operation for every variable of $\phi$, we can assume without loss of generality that in $\lambda'$ all the variables originating in the same variable of $\phi$ are assigned the same value. This naturally defines a variable evaluation $\lambda$ for $\phi$. Then, provided $\lambda'$ satisfied more than $(1+\bounda+3d)m$ clauses of $\psi$, we infer that $\lambda$ NAE-satisfies more than $\bounda m$ clauses of $\phi$. This is a contradiction.
\end{proof}

We now turn our attention to the $\fvs$ in general directed multigraphs: given a directed multigraph $G$, find the smallest possible subset of vertices $X$ such that $G-X$ is acyclic. By $\fvss[d]$ we denote the variant of $\fvsshort$ where the input directed multigraph has no loops, and is $2d$-regular and balanced, i.e., the indegree and the outdegree of every vertex is equal to $d$. Again, $\gapfvs[d]_{[\bounda, \boundb]}$ is the gap problem where we need to distinguish between the cases when the optimum size of $X$ is at most $\bounda n$ and at least $\boundb n$, where $n$ is the number of vertices of the multigraph.

\begin{lemma}\label{lem:fvs}
 $$\gapssat[1,1]{d,d}_{[\bounda, 1]} \linleq \gapfvs[d+2]_{[1/2,(4-\bounda)/6]}$$
\end{lemma}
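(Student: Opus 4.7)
First, I would describe the construction. Given an instance $\phi$ of $\ssat[1,1]{d,d}$ on $n_v$ variables and $m = n_v(2+3d)/3$ clauses, the reduction outputs a directed multigraph $G$ on $n = 2n_v$ vertices as follows: for each variable $x$, introduce vertices $x^+$ and $x^-$ joined by a directed 2-cycle; for each 2-clause $(\ell_1 \vee \ell_2)$, insert a directed 2-cycle between the literal vertices $v_{\ell_1}$ and $v_{\ell_2}$; and for each 3-clause $(\ell_1 \vee \ell_2 \vee \ell_3)$, insert a directed 3-cycle through the three literal vertices in some fixed orientation. A direct degree count at any vertex $v_\ell$ gives $1$ in-arc and $1$ out-arc from the variable gadget, $d$ of each from the $d$ two-clauses in which $\ell$ appears, and $1$ of each from the unique 3-clause containing $\ell$, for a total of $d+2$ arcs in each direction, as required for $\gapfvs[d+2]$. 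This construction is clearly deterministic, polynomial-time, and produces a graph of size linear in the input.

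Next, I would prove completeness: if $\phi$ is satisfied by some assignment $\lambda$, then the set $X := \{v_\ell : \lambda(\ell) = \top\}$ is a feedback vertex set of size exactly $n_v = n/2$. The key claim is that every vertex $v_\ell$ remaining in $G - X$ (where $\ell$ is false) has in-degree plus out-degree at most $1$ in $G - X$, which forces this subgraph to be a disjoint union of directed paths and hence acyclic. To verify this bound, note that the variable 2-cycle at $v_\ell$ is killed because $v_{\bar\ell} \in X$ (as $\bar\ell$ is true); each 2-clause containing $\ell$ has its other literal true by satisfaction of that clause, so its 2-cycle contributes no surviving arc; and the unique 3-clause containing $\ell$ contributes either zero arcs at $v_\ell$ (when both other literals are true) or exactly one arc (when exactly one of them is true), since the 3-cycle becomes a 2-path after removing a single vertex.

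For soundness, start from an arbitrary feedback vertex set $X$. The variable 2-cycles force $|X \cap \{x^+, x^-\}| \geq 1$ for every $x$, so letting $V_2(X) := \{x : \{x^+, x^-\} \subseteq X\}$ gives $|X| = n_v + |V_2(X)|$. I would extract an assignment $\lambda$ by setting $\lambda(x) = \top$ when $x^+ \in X$ and $x^- \notin X$, $\lambda(x) = \bot$ when $x^- \in X$ and $x^+ \notin X$, and $\lambda(x)$ arbitrary when $x \in V_2(X)$. If a clause $C$ is unsatisfied by $\lambda$, then the vertex of $C$ whose presence in $X$ breaks the cycle associated to $C$ must correspond to a variable in $V_2(X)$; hence the number of clauses unsatisfied by $\lambda$ is at most the number of clauses meeting $V_2(X)$, which is at most $(2d+2)|V_2(X)|$ since each variable appears in exactly $2d+2$ clauses. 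Using the hypothesis that no assignment satisfies more than $\bounda m$ clauses, one obtains $|V_2(X)| \geq (1 - \bounda) m / (2d + 2)$; substituting $m = n_v(2+3d)/3$ and using the elementary inequality $(2+3d)/(6(d+1)) \geq 1/3$ (valid for all $d \geq 0$) yields $|X| \geq n_v + (1-\bounda) n_v/3 = (4-\bounda)n/6$, as required.

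The main subtlety lies in the completeness argument: a priori, deleting the true-literal vertices could leave cycles formed by chaining arcs contributed by several distinct clauses. What makes the argument go through is the very restrictive structure of $\ssat[1,1]{d,d}$ instances, namely that each literal appears in exactly one 3-clause; this, combined with the fact that every 2-clause is satisfied, tightly caps the in-plus-out degree at each surviving vertex to $1$ and prevents any such chaining.
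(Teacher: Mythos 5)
Your proposal is correct and follows essentially the same reduction and analysis as the paper: the same gadget graph (variable $2$-cycles plus clause cycles), the same choice of feedback vertex set from a satisfying assignment, and the same extraction of an assignment from a feedback vertex set via the variables with exactly one vertex deleted, with the same $(2d+2)$-bound counting. The only cosmetic difference is that you bound $|X|$ directly rather than arguing by contradiction, and you spell out the degree-at-most-one argument that the paper dismisses as ``easy to see'' (minor note: with total degree at most $1$ the components are isolated vertices and single edges, not arbitrary directed paths, but the conclusion is the same).
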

\begin{proof}
Let $\phi$ be the input instance of $\ssat[1,1]{d,d}$, and let $n$ be the number of variables of $\phi$. Then $m$, the number of clauses of $\phi$, is equal to $(2/3+d)n$.

Construct a directed multigraph $G$ as follows. For every variable $x$ of $\phi$, create two vertices $u^x_{\top}$ and $u^x_{\bot}$, corresponding to setting $x$ to true and false, respectively. Add edges $(u^x_{\top},u^x_{\bot})$ and $(u^x_{\bot},u^x_{\top})$ to the edge set, for every variable $x$ of $\phi$. Moreover, for every $2$-clause of $\phi$ add a $2$-cycle between vertices corresponding to its literals (e.g. clause $x\vee \neg y$ gives rise to edges $(u^x_{\top},u^y_{\bot})$ and $(u^y_{\bot},u^x_{\top})$), and similarly for every $3$-clause of $\phi$ add a $3$-cycle between vertices corresponding to its literals, oriented arbitrarily. Note that in this manner trivial clauses of the form $(\neg x\vee x)$ give rise to additional copies of the $2$-cycle $(u^x_{\top},u^x_{\bot})(u^x_{\top},u^x_{\bot})$. This concludes the construction of $G$. It is easy to verify using the assumed properties of $\phi$ that every vertex of $G$ has indegree and outdegree equal to $d+2$, and moreover in the construction we did not introduce loops. Let $n'=2n$ be the number of vertices in $G$.

Suppose first that there exists a variable evaluation $\lambda$ for $\phi$ that satisfies all the clauses of $\phi$. Define $X$ to be the set of all the vertices $u^x_{\lambda(x)}$ for $x$ being a variable of $\phi$; note that $|X|=n=n'/2$. Since $\lambda$ satisfies all the clauses, and every variable of $\phi$ participates in exactly one $3$-clause positively and in exactly one $3$-clause negatively, then it is easy to see that all the weakly connected components of $G-X$ are either isolated vertices or single edges. Thus, $G-X$ is acyclic.

Assume now that every variable evaluation for $\phi$ satisfies at most $\bounda m$ clauses, and for the sake of contradiction suppose that there exists a set $X$ with $|X|<\frac{4-\bounda}{6}n'=\frac{4-\bounda}{3}n$ such that $G-X$ is acyclic. Observe that from each pair $\{u^x_{\top},u^x_{\bot}\}$ at least one vertex has to belong to $X$. Define a variable evaluation $\lambda$ for $\phi$ as follows: if $|\{u^x_{\top},u^x_{\bot}\}\cap X|=1$, then $\lambda(x)$ is such that $u^x_{\lambda(x)}\in X$, and otherwise $\lambda(x)$ is chosen arbitrarily. Observe that the first alternative holds for a set of more than $n-\frac{1-\bounda}{3}n$ variables; let us denote them by $S$. Since $G-X$ is acyclic, each of $2$- and $3$-cycles constructed for a clause $C$ of $\phi$ has at least one vertex from $X$. If all the variables of $C$ belong to $S$, then it can be easily seen that this implies that $\lambda$ satisfies $C$. Hence, the only clauses of $\phi$ that can be unsatisfied by $\lambda$ are the ones that contain at least one variable outside $S$. Every variable of $\phi$ occurs in at most $2d+2$ clauses, and there are less than $\frac{1-\bounda}{3}\cdot n$ variables outside $S$, which means that $\lambda$ unsatisfies less than $\frac{(1-\bounda)(2d+2)}{3}\cdot n$ clauses. Hence the fraction of unsatisfied clauses is less than
$$\frac{(1-\bounda)(2d+2)n}{3m}=\frac{(1-\bounda)(2d+2)}{(2+3d)}\leq 1-\bounda.$$
This is a contradiction.
\end{proof}

Finally, there is a well-known reduction that reduces \fvs to \fas in the directed setting. This reduction appears to preserve the gap. In the following, by $\gapfas[d]_{[\bounda, \boundb]}$ we denote the problem of determining, for a given directed multigraph $G$ without loops whose underlying undirected multigraph is $d$-regular, whether the minimum number of edges that needs to be removed from $G$ to make it acyclic is at most $\bounda m$ or at least $\boundb m$, where $0\leq \bounda<\boundb\leq 1$ and $m$ is the number of edges in $G$.

\begin{lemma}\label{lem:fas}
$$\gapfvs[d]_{[\bounda, \boundb]} \linleq \gapfas[d+1]_{[\frac{\bounda}{d+1}, \frac{\boundb}{d+1}]}$$
\end{lemma}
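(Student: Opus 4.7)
The plan is to apply the classical vertex-splitting reduction from directed feedback vertex set to directed feedback arc set. Given an instance $G$ of $\gapfvs[d]_{[\alpha,\beta]}$, I would build $G'$ by replacing every vertex $v\in V(G)$ with two copies $v^-$ and $v^+$, routing every arc $(u,v)\in E(G)$ to the arc $(u^+,v^-)\in E(G')$, and adding a single \emph{internal} arc $(v^-,v^+)$ for each $v$. Because each vertex of $G$ has in- and out-degree exactly $d$ and $G$ is loopless, in $G'$ every $v^-$ has in-degree $d$ and out-degree $1$, every $v^+$ has in-degree $1$ and out-degree $d$, and no loops are introduced. Hence the underlying undirected multigraph of $G'$ is $(d+1)$-regular, making $G'$ a valid instance of $\gapfas[d+1]$, and the construction runs in polynomial time with $|V(G')|=2n$ and $|E(G')|=n(d+1)=:m$. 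It then suffices to show that the minimum feedback vertex set size of $G$ equals the minimum feedback arc set size of $G'$, since this converts the thresholds $\alpha n$ and $\beta n$ to $\alpha m/(d+1)$ and $\beta m/(d+1)$, exactly as claimed.

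For the easy inequality, given a FVS $X$ of $G$ I would take $F:=\{(v^-,v^+) : v\in X\}$. Every directed cycle of $G'$ must, at each vertex it visits, traverse the unique out-arc of the corresponding $v^-$, namely $(v^-,v^+)$; so the cycle projects to a cycle of $G$ that necessarily meets $X$, implying $F$ is a FAS of $G'$ with $|F|=|X|$.

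For the reverse inequality I would use a standard swapping argument that pushes any FAS onto internal arcs. Suppose $F$ is a FAS of $G'$ containing a non-internal arc $e=(u^+,v^-)$. The key observation is that any cycle in $G'$ using $e$ must immediately proceed from $v^-$ to its unique out-neighbour $v^+$ via the internal arc $(v^-,v^+)$. If $(v^-,v^+)\in F$ I may therefore drop $e$ from $F$; otherwise I may swap $e$ for $(v^-,v^+)$. Both operations keep $|F|$ non-increasing and strictly decrease the number of non-internal arcs of $F$, so after finitely many steps I obtain a FAS $F^\star$ of size at most $|F|$ consisting entirely of internal arcs, and $X:=\{v:(v^-,v^+)\in F^\star\}$ is then a FVS of $G$ of size $|F^\star|$. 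Dividing all bounds by $m$ yields the advertised gap $[\alpha/(d+1),\beta/(d+1)]$; the only step that requires genuine care is the swap, and it relies solely on the uniqueness of the out-neighbour of each $v^-$, so no further obstacles are expected.
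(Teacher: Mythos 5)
Your proposal is correct and follows essentially the same route as the paper: same vertex-splitting gadget, same degree accounting showing the underlying multigraph is $(d+1)$-regular with $m'=(d+1)n$ edges (so the thresholds $\alpha n$ and $\beta n$ translate to $\tfrac{\alpha}{d+1}m'$ and $\tfrac{\beta}{d+1}m'$), the same easy direction $F=\{(v^-,v^+):v\in X\}$, and the same normalization argument pushing a FAS onto internal arcs by noting that removing $(v^-,v^+)$ makes $v^-$ a sink. The paper phrases the swap/drop step a bit more tersely, while you explicitly track the potential function counting non-internal arcs, but the argument is the same.
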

\begin{proof}
Let $G$ be the input directed multigraph, and let $n$ and $m$ denote the numbers of edges and vertices of $G$, respectively; by the assumption that $G$ is $2d$-regular we know that $m=dn$. Construct a graph $G'$ as follows:
\begin{itemize}
\item For every $u\in V(G)$ create two vertices $u^-,u^+\in V(G')$ and an edge $(u^-,u^+)\in E(G')$;
\item For every edge $(u,v)\in E(G)$, create an edge $(u^+,v^-)$.
\end{itemize}
This concludes the construction of $G'$. Let $E_1,E_2$ be the sets of edges constructed in the first and second bullet point, respectively. Since $G$ was $2d$-regular and balanced, we infer that every vertex $u^+$ has outdegree $d$ and indegree $1$, whereas every vertex $u^-$ has outdegree $1$ and indegree $d$. Thus, $G'$ has $n'=2n$ vertices and $m'=m+n=(d+1)n$ edges.

Suppose first that $X$ is a subset of vertices of $G$ with size at most $\bounda n$ such that $G-X$ is acyclic. Let $F=\{(u^+,u^-)\,|,u\in X\}\subseteq E_1$. Then it can be easily seen that $G'-F$ is acyclic, and $|F|=|X|\leq \bounda n=\frac{\bounda}{d+1}m'$.

Assume now that every subset $X\subseteq V(G)$ for which $G-X$ is acyclic has size at least $\boundb n$, and for the sake of contradiction suppose that there is a set $F\subseteq E(G')$ such that $G'-F$ is acyclic and $|F|<\frac{\boundb}{d+1}m'$. Observe that if $F$ contains some edge $(u^+,v^-)\in E_2$, then we could modify $F$ by removing $(u^+,v^-)$ from $F$ and adding $(v^-,v^+)$ to $F$ (unless it is not already contained in $F$, in which case we do not add any edge to $F$). This operation can only decrease the number of edges in $F$ and preserves the property that $G'-F$ is acyclic; this is because after removing $(v^-,v^+)$, $v^-$ becomes a sink. Thus, without loss of generality we can assume that $F\subseteq E_1$. Let $X$ be the set of vertices $u\in V(G)$ for which $(u^-,u^+)\in F$. Since $G'-F$ is acyclic, it easily follows that $G-X$ is also acyclic. Moreover, $|X|=|F|<\frac{\boundb}{d+1}m'=\boundb n$. This is a contradiction. 
\end{proof}

Finally, observe that in an instance of $\fasshort$ without loops one can subdivide every edge once, which doubles the number of edges while not changing the size of the optimum solution. Thus, application of this reduction to the gap problem shrinks the gap twice and makes the directed graph at hand simple: it has no loops, no parallel edges, and moreover if $(u,v)$ is an edge then $(v,u)$ is not. By combining this observation with Theorem~\ref{thm:hard-gapsat} and Lemmas~\ref{lem:3sat-nae4sat},~\ref{lem:nae4sat-nae3sat},~\ref{lem:gapssat},~\ref{lem:fvs}, and~\ref{lem:fas}, we obtain the following result.

\begin{theorem}\label{thm:fas-hardness}
Unless ETH fails, there exist $0\leq\bounda<\boundb\leq 1$, $c \geq 1$, and $d>0$ such that there is no $2^{\mathcal{O}(\frac{n}{\log^c(n)})}$ algorithm for $\justgapfas_{[\bounda,\boundb]}$ on directed simple graphs of maximum total degree $d$.
\end{theorem}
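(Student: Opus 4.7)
\medskip
\noindent\textbf{Proof plan for Theorem~\ref{thm:fas-hardness}.} The strategy is to compose the chain of reductions developed above in a single pipeline, starting from the almost-linear hardness of $\gapesat[3]_{[r,1]}$ given by Theorem~\ref{thm:hard-gapsat}. Concretely, I would first apply Lemmas~\ref{lem:3sat-nae4sat} and~\ref{lem:nae4sat-nae3sat} to obtain an equivalent instance of $\gapenaesat[3]_{[\alpha_1,1]}$ for some $\alpha_1<1$. Then I would apply Lemma~\ref{lem:gapssat} to reach $\gapssat[1,1]{d}{d}_{[\alpha_2,1]}$ for an appropriate constant $d$ and some $\alpha_2<1$, feed the result into Lemma~\ref{lem:fvs} to reach $\gapfvs[d+2]_{[1/2,\beta_3]}$ with $\beta_3=(4-\alpha_2)/6>1/2$, and finally apply Lemma~\ref{lem:fas} to reach $\gapfas[d+3]_{[\alpha_4,\beta_4]}$ with $\alpha_4=\frac{1/2}{d+3}<\frac{\beta_3}{d+3}=\beta_4$. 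Every link of the chain is a polynomial, gap-preserving, linear-size reduction (either an instance of $\linleq$ or a routine inspection of the constructions in Lemmas~\ref{lem:gapssat},~\ref{lem:fvs},~\ref{lem:fas}), so the composition is also a linear-size reduction.

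To guarantee that the resulting graph is \emph{simple} in the sense required by the theorem, I would then apply the standard subdivision trick: replace every directed edge $(u,v)$ by a length-two directed path $u\to w_{uv}\to v$ via a fresh vertex $w_{uv}$. This does not change the minimum size of a feedback arc set, merely doubles the number of edges, and eliminates any loops, parallel edges, or opposite-direction pairs that may have been introduced by the multigraph constructions of Lemmas~\ref{lem:gapssat} and~\ref{lem:fvs}. Halving the gap still leaves $0\leq\alpha'<\beta'\leq 1$, and the maximum total degree stays bounded by a constant depending only on $d$, since every new subdivision vertex has total degree exactly $2$.

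The remaining ingredient is to track sizes so that the subexponential hardness survives. Starting from a $3$-CNF instance with $n_0$ variables, Theorem~\ref{thm:hard-gapsat} produces a $\gapesat[3]_{[r,1]}$ instance of total size $M=\Oh(n_0\log^{c_0} n_0)$ such that solving it in time $2^{\Oh(M/\log^{c_0} M)}$ refutes ETH. Each of the subsequent reductions blows up the instance by a multiplicative constant (the constants absorb the expander parameter $d$ chosen once and for all), and subdivision doubles it, so the final \fasshort\ instance has $N=\Oh(M)$ vertices and edges. A hypothetical $2^{\Oh(N/\log^{c} N)}$ algorithm for $\justgapfas_{[\alpha',\beta']}$ for sufficiently large $c$ would therefore yield a $2^{\Oh(M/\log^{c_0} M)}$ algorithm for $\gapesat[3]_{[r,1]}$, contradicting Theorem~\ref{thm:hard-gapsat}.

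The one point requiring care is the bookkeeping of the gap across Lemmas~\ref{lem:gapssat} and~\ref{lem:fvs}, since each reduction mixes the original gap with additive/multiplicative noise controlled by $d$. I would verify that for the fixed expander-based $d$ produced by Lemma~\ref{lem:gapssat}, the inequality $(1+\alpha_1+3d)/(2+3d)<1$ holds (which is immediate from $\alpha_1<1$), and that $(4-\alpha_2)/6>1/2$ (which holds since $\alpha_2<1$); after that, Lemma~\ref{lem:fas} and the subdivision step preserve strict positivity of the gap by construction. This is the most delicate but still routine step, and once it is checked the composition yields the claimed hardness with the three parameters $\alpha,\beta,d,c$ depending only on those produced by Theorem~\ref{thm:hard-gapsat} and on the fixed expander constant.
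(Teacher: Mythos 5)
Your proposal is correct and follows exactly the same chain of reductions as the paper's own proof: Theorem~\ref{thm:hard-gapsat} followed by Lemmas~\ref{lem:3sat-nae4sat}, \ref{lem:nae4sat-nae3sat}, \ref{lem:gapssat}, \ref{lem:fvs}, and \ref{lem:fas}, with a final edge-subdivision step to remove parallel and antiparallel edges while halving the gap. The gap bookkeeping you carry out, including $\alpha_1=(1+r)/2<1$, the $(1+\alpha_1+3d)/(2+3d)<1$ and $(4-\alpha_2)/6>1/2$ checks, and the observation that each step incurs only a constant-factor blow-up in size, matches what the paper's proof implicitly relies on.
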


\subsection{Reducing \fasshort to \fastshort}

\newcommand{\fasv}{\textrm{fas}}
\newcommand{\Exp}{\mathbb{E}}

\begin{reptheorem}{thm:only-eth-fast}%[Theorem~\ref{thm:only-eth-fast}, restated]
Unless ETH fails, there is an integer $c\geq 1$ such that there is no $2^{\Oh(\sqrt{n}/\log^c n)}$, and consequently no $2^{\Oh(k^{1/4}/\log^c k)}\cdot n^{\Oh(1)}$ algorithm for \fast.
%Unless ETH fails, there exists $q>1$ such that there is no $2^{\mathcal{O}(\frac{n^{1/2}}{\log^q(n)})}$ algorithm for \fast, where $n$ is the number of vertices of the tournament.
\end{reptheorem}
\begin{proof}
We provide a randomized reduction that essentially reiterates the argument of Ailon et al.~\cite{AilonCN08}.  In the analysis, we use the known fact that for any directed graph $H$, $\fasv(H)$ is equal to the minimum possible number of edges oriented backwards (called {\em{feedback edges}}) in an ordering of vertices of $V(H)$, where $\fasv(H)$ is the optimum size of a feedback arc set in $H$. Thus, we may equivalently think of the $\fasshort$ problem as finding an ordering $\pi$ of $V(H)$ that minimizes the number of feedback edges. For an ordering $\pi$ of $V(H)$, by $\fasv(H,\pi)$ we denote the number of feedback edges in the ordering $\pi$.

Let $\bounda,\boundb,c,d$ be the constants given by Theorem~\ref{thm:fas-hardness}, and let $G$ be an instance of $\justgapfas_{[\bounda,\boundb]}$, where $G$ is a simple directed
graph with $n$ vertices and $m$ edges and has maximum total degree $d$. Note that w.l.o.g. we may assume $m \ge n$, as otherwise
there is a vertex in $G$ with no outgoing edges, which can be safely removed.
Let us fix an integer $k$, to be determined later. We consider the {\em{$k$-blow up}} $G_k$ defined as follows: for every $u\in V(G)$ we create a $k$ vertices $u_1,u_2,\ldots,u_k$ in $G_k$, and for all $1\leq i,j\leq k$ we put $(u_i,v_j)\in E(G_k)$ if and only if $(u,v)\in E(G)$. Thus, vertices $u_i$ are twins. Ailon et al.~\cite{AilonCN08}, based on a communication by Alon, argue that there is an optimum ordering for $G_k$ which may be obtained by taking an optimum ordering for $G$ and replacing every vertex $u\in V(G)$ by a block consisting of vertices $\{u_i\}_{1\leq i\leq k}$ in any order; hence in particular $\fasv(G_k)=k^2\cdot \fasv(G)$. For an ordering $\sigma$ of $V(G)$, let $\sigma_k$ be an ordering of $V(G_k)$ constructed in the manner descibed above.

Construct a tournament $T_k$ from $G_k$ by adding edges between every pair of vertices that are not connected by an edge in $G_k$, where the orientations of these edges are chosen independently and uniformly at random. Observe that $|E(G_k)|=k^2\cdot |E(G)|\leq dk^2n/2$. Let $R_k=(V(T_k),E(T_k)\setminus E(G_k))$ be the directed graph consisting only of the edges picked at random. Then for a sufficiently large $n$ we have that $|E(R_k)|=\binom{nk}{2}-|E(G_k)|\geq \frac{(nk)^2}{4}$, and of course $|E(R_k)|\leq \frac{(nk)^2}{2}$.

We now prove that with high probability, $\fas(T_k)$ is closely related to \newline $\fas(G_k)$, because the number of feedback edges that need to be chosen from the edges picked at random is concentrated around the expected value. 

Let us fix some ordering $\pi$ of $V(G_k)$, then $\fasv(T_k,\pi)=\fasv(G_k,\pi)+\fasv(R_k,\pi)$. For $e\in E(R_k)$, let $X_e$ be the indicator random variable having value $1$ if $e$ is a feedback edge w.r.t. $\pi$, and $0$ otherwise. Let also $X=\sum_{e\in E(R_k)} X_e$; then $\Exp X = \frac{|E(R_k)|}{2}$. Let $\eta=\frac{\boundb-\bounda}{3}$. Since $X_e$-s are independent, from the Chernoff bound we obtain that
$$\Pr(|X-\Exp X| \geq \eta k^2 n)\leq 2\exp\left(-\frac{2\eta^2k^4n^2}{|E(R_k)|}\right)\leq 2\exp(-4k^2\eta^2).$$

Suppose now that there exists an ordering $\pi$ of $V(G)$ that has at most $\bounda m$ feedback edges. Then, with probability at least $1-2\exp(-4\eta^2k^2)$ we have that
\begin{equation}\label{e1}
\fasv(T_k,\pi_k)\leq \bounda k^2 m + |E(R_k)|/2 + \eta k^2n\leq \frac{2\bounda+\boundb}{3}\cdot k^2m+|E(R_k)|/2.
\end{equation}
Hence, if we can set $k$ to be a large enough constant, such that conclusion (\ref{e1}) holds with probability at least $3/4$.

Suppose now that $\fasv(G)\geq \boundb m$. Then, for a fixed ordering $\sigma$ of $V(G_k)$ we have that with probability at least $1-2\exp(-4\eta^2k^2)$ it holds that
\begin{eqnarray}\label{e2}
\fasv(T_k,\sigma)&\geq & \fasv(G_k,\sigma)+|E(R_k)|/2 - \eta k^2n\geq k^2\fasv(G)+|E(R_k)|/2 - \eta k^2n\nonumber\\
& \geq & k^2\boundb m+|E(R_k)|/2 - \eta k^2n\geq \frac{\bounda+2\boundb}{3}\cdot k^2m+|E(R_k)|/2.
\end{eqnarray}
We would like to infer that with high probability this conclusion holds for all the possible orderings $\sigma$, and for this we will use the union bound. Observe that the number of orderings $\sigma$ of $V(G_k)$ is $(nk)!=\exp(\mathcal{O}(nk \log (nk)))$, while the probability of failure for each of them is at most $2\exp(-4\eta^2k^2)$. Since $\eta$ is a positive constant, simple computations show that by setting $k=\Theta(n\log n)$, we have that $(nk)!\cdot 2\exp(-4\eta^2k^2)\leq 1/4$, and hence conclusion (\ref{e2}) holds simultaneously for all orderings $\sigma$ with probability at least $3/4$.

Suppose now that \fast admitted an algorithm with running time $2^{\Oh(\frac{n^{1/2}}{\log^{c'}n})}$ for $c'=c+\frac{1}{2}$, where $c$ is as in Theorem~\ref{thm:fas-hardness}. Apply this algorithm to the constructed tournament $T_k$ to compute $\fasv(T_k)$. In case $\fasv(G)\leq \bounda m$, then with probability at least $3/4$ we have that $\fasv(T_k)\leq \frac{2\bounda+\boundb}{3}\cdot k^2m+|E(R_k)|/2$. In case $\fasv(G)\geq \boundb m$, then with probability at least $3/4$ we have that $\fasv(T_k)\geq \frac{\bounda+2\boundb}{3}\cdot k^2m+|E(R_k)|/2$. Since $\frac{2\bounda+\boundb}{3}<\frac{\bounda+2\boundb}{3}$, these two alternatives are disjoint and the algorithm can, with double-sided error, resolve the input instance of $\justgapfas_{[\bounda,\boundb]}$. Since $|V(T_k)|=\Theta(n^2\log n)$, this procedure runs in time $2^{\Oh(\frac{n}{\log^{c}(n)})}$. This is a contradiction with Theorem~\ref{thm:fas-hardness}.
\end{proof}

\section{Conclusions}

In this work we have given evidence that $2^{\Oh(\sqrt{k}\cdot \textrm{polylog}(k))}\cdot n^{\Oh(1)}$ can be the final answer for the running times of parameterized algorithms for \minimumfillin, \intervalcompletion, \properintervalcompletion, \triviallyperfectcompletion, \thresholdcompletion, and \chaincompletion. This evidence is based on a new complexity hypothesis connected to the hardness of approximation for the \minb problem. Thus, the answer given by us is not completely satisfactory: the lower bounds that we can give only under the assumption of ETH are much weaker. Rather, our results uncover a surprising link between the parameterized algorithms for \minimumfillin and related problems, and the approximability of \minb. Thus it seems that the question about the optimality of the former has a much deeper, fundamental nature.

Therefore, we believe that our work strongly motivates further investigation of Hypothesis~\ref{h.bisection}. Can this conjecture be linked to ETH and possibly some other strong conjectures like SETH, the existence of linear PCPs, or the conjectures proposed by Feige~\cite{Feige02}? Or maybe it can be simply disproved?

Our improved lower bound for \fast still has a gap between $k^{1/4}$ and $k^{1/2}$. A closer inspection of the proof uncovers a fundamental obstacle for why we cannot achieve tightness: the Chernoff concentration bound used in the proof of Theorem~\ref{thm:only-eth-fast} is essentially tight, because in every tournament on $n$ vertices there is a feedback arc set of size $\binom{n}{2}/2-\Omega(n^{3/2})$~\cite{Hassin94approximationsfor}. If this error term was of magnitude $\Theta(n)$ instead of $\Theta(n^{3/2})$, then our approach would give a tight result for \fastshort. Can this problem be circumvented, or maybe the high anticoncentration of the number of feedback edges in a random ordering of a tournament can be exploited algorithmically to obtain a faster algorithm?

Finally, even assuming Hypothesis~\ref{h.bisection} we do not get tight bounds, due to the (poly)logarithmic factors in the exponent describing the running time of the existing algorithms for completion problems. Bridging this gap can be seen as another forthcoming goal.

\section*{Acknowledgements}
We would like to thank Per Austrin for valuable discussions about different versions of the PCP theorem and  anonymous reviewers for their helpful comments.

\begin{comment}

\end{comment}

\bibliographystyle{abbrv}
\bibliography{ola_paper}

\begin{appendices}
	\section{Problem definitions}
\label{sec:problems}

\problem{\ola\ (\olashort)}{
  A graph $G=(V,E)$, an integer $k$.
}{
  Does there exist a linear arrangement $\aorder$ of $G$ of cost at most $k$?
}

\problem{\olad[d] (\oladshort[d])}{
  A graph $G=(V,E)$ with degree at most $d$, an integer $k$.
}{
  Does there exist a linear arrangement $\aorder$ of $G$ of cost at most $k$?
}

\problem{\maxcut}{
  A graph $G=(V,E)$, an integer $k$
}{
  Does there exist a cut of size at least $k$?
}

\gapproblem{$\gapmaxcut_{[\bounda,\boundb]}$}{
  A graph $G=(V,E)$.
}{
  $G$ admits a cut of size at least $\boundb m$.
}{
  $G$ does not admit a cut of size larger than $\bounda m$.
}

\problem{\minb}{
  A graph $G=(V,E)$ with even number of vertices, an integer $k$.
}{
  Does there exist a cut $(A,B)$ of size at most $k$, such that $|A| = |B|$?
}

\gapproblem{\gapminbd[d]$_{[\bounda, \boundb]}$}{
  A $d$-regular graph $G=(V,E)$ with even number of vertices.
}{
  $G$ admits a cut $(A,B)$ of size at least $\boundb m$, such that $|A| = |B|$.
}{
  $G$ does not admit a cut $(A,B)$ of size larger than $\bounda m$, such that $|A| = |B|$.
}

\subsection{Satisfiability problems}

We consider several variants of the satisfiability problem, in general defined as follows.

\problem{\xsat}{
  An \xsat\ formula $\phi = C_1 \wedge \ldots \wedge C_m$.
}{
  Does there exist an assignment of the variables of $\phi$, such that $\phi$ is satisfiable?
}

\gapproblem{$\xsat_{[\bounda,\boundb]}$}{
  An \xsat\ formula $\phi = C_1 \wedge \ldots, \wedge C_m$.
}{
  $\phi$ admits an assignment satisfying at least $\boundb m$ clauses.
} {
  $\phi$ does not admit an assignment assignment satisfying more than $\bounda m$ clauses.
}

Where an \xsat\ formula is a formula from the \justsat\ related problem, precisely it is an E$l$-CNF formula for \esat[$l$], \enaesat[$l$], and an $l$-AND formula for \andsat[$l$].

We also similarly define a problem \xsat$(d)$ with the difference that the variables of an input formula occur in at most $d$ clauses, e.g. \andsat[$l$](d).

\subsection{Completion problems}

The following problems is a generic version of a completion problem to a given graph class $X$.

\problem{{\sc X-Completion}}{
  An undirected graph $G$, an integer $k$.
}{
  Is it possible to add at most $k$ edges to $G$, so that the obtained
    graph belongs to the graph class {\sc X}?
}

We consider the following list of completion problems:
\minimumfillin, \chaincompletion, \properintervalcompletion, \intervalcompletion, \thresholdcompletion, \triviallyperfectcompletion, where \minimumfillin is completion to chordal graphs, and other problems have self-descriptive names.

\begin{comment}
\problem{\esat[$l$]}{
  A E$l$-CNF formula $\phi = C_1 \wedge \ldots, \wedge C_m$.
}{
  Does there exist an assignment to the variables of $\phi$, such that $\phi$ is satisfiable?
}
We say that a formula is NAE-satisfiable if every its clause contains a positive and a negative literal.
\problem{\enaesat[$l$]}{
  A E$l$-CNF formula $\phi = C_1 \wedge \ldots, \wedge C_m$.
}{
  Does there exist an assignment to the variables of $\phi$, such that $\phi$ is NAE-satisfiable?
}
We say that $\phi$ is an AND formula if it is a conjunction of clauses of the form $(l_1 \vee l_2 \vee l_3)$ \todo{sure not $\vee$} for literals $l_1, l_2, l_3$. 
\problem{\andsat[$l$]}{
  An $l$-AND formula $\phi = C_1 \wedge \ldots, \wedge C_m$.
}{
  Does there exist an assignment to the variables of $\phi$, such that $\phi$ is satisfiable?
}
\problem{\andsat[$l$](d)}{
  An $l$-AND formula $\phi = C_1 \wedge \ldots, \wedge C_m$ with every variable occurring in at most $d$ clauses.
}{
 Does there exist an assignment to the variables of $\phi$, such that $\phi$ is satisfiable?%
}
\bigtodo{I would just introduce different types of formulas and one decision problem. Then I'd say the other ones (explicitly listing their names) are similar, just that they are deciding this and this and this, respectively.}
\end{comment}

\end{appendices}

\begin{comment}
\begin{appendices}
\input{sec_problems}
\input{sec_graph_notation}
\input{sec_append_omit}
\input{sec_sparse2}
\input{sec_sparse_theorem}
\input{sec_app_5}
%\input{sec_fast}
\input{sec_append_comments} 
\end{appendices}
\end{comment}

\end{document}